\documentclass[letterpaper,11pt]{article}

\usepackage{amsmath}
\usepackage{amsfonts}
\usepackage{amssymb}
\usepackage{amsthm}
\usepackage{thmtools}

\declaretheorem[name=Theorem]{theorem}
\declaretheorem[name=Corollary, sibling=theorem]{corollary}
\declaretheorem[name=Definition, sibling=theorem]{definition}
\declaretheorem[name=Lemma, sibling=theorem]{lemma}

\usepackage{graphicx}
\usepackage{xspace}
\usepackage{paralist}
\usepackage[margin=1in]{geometry}
\usepackage[algo2e,noend,ruled,linesnumbered]{algorithm2e}

\ifx\pdftexversion\undefined
\usepackage[colorlinks,linkcolor=black,filecolor=black,citecolor=black,urlcolor=black,pdfstartview=FitH]{hyperref}
\else
 \usepackage[colorlinks,linkcolor=blue,filecolor=blue,citecolor=blue,urlcolor=blue,pdfstartview=FitH]{hyperref}
\fi

\usepackage{cleveref}

% \newtheorem{theorem}{Theorem}
% \newtheorem{lemma}[theorem]{Lemma}
% \newtheorem{definition}[theorem]{Definition}
% \newtheorem{corollary}[theorem]{Corollary}
% \Crefname{theorem}{Theorem}{Theorems}
% \Crefname{corollary}{Corollary}{Corollaries}
% \Crefname{definition}{Definition}{Definitions}
% \Crefname{lemma}{Lemma}{Lemmas}

\newcommand{\N}{\mathbb{N}}
\newcommand{\R}{\mathbb{R}}
\newcommand{\Rp}{\R_{\ge 0}}
\newcommand{\BO}{\mathcal{O}}
\newcommand{\G}{\mathcal{G}}
\newcommand{\Ll}{\mathcal{L}}
\newcommand{\T}{\mathcal{T}}
\newcommand{\A}{\mathcal{A}}
\newcommand{\Wdelta}{\mathcal{W}_{\delta}}
\newcommand{\W}{\mathcal{W}}
\newcommand{\Wd}{\mathcal{W}_d}

\begin{document}

\title{Gradient Clock Synchronization\\
with Practically Constant Local Skew\bigskip}

\author{Christoph Lenzen\\
CISPA Helmholtz Center for Information Security, Saarbr\"ucken, Germany\\
and Reykjavik University, Reykjavik, Iceland}

\date{}

\maketitle

\begin{abstract}
Gradient Clock Synchronization (GCS) is the task of minimizing the \emph{local skew,} i.e., the clock offset between neighboring clocks, in a larger network.
While asymptotically optimal bounds are known, from a practical perspective they have crucial shortcomings:
\begin{itemize}
  \item Local skew bounds are determined by upper bounds on offset estimation that need to be guaranteed throughout the entire lifetime of the system.
  \item Worst-case frequency deviations of local oscillators from their nominal rate are assumed, yet frequencies tend to be much more stable in the (relevant) short term.
\end{itemize}
State-of-the-art deployed synchronization methods adapt to the true offset measurement and frequency errors, but achieve no non-trivial guarantees on the local skew.

In this work, we provide a refined model and novel analysis of existing techniques for solving GCS in this model.
By requiring only \emph{stability} of measurement and frequency errors, we can circumvent existing lower bounds, leading to dramatic improvements under very general conditions.
For example, if links exhibit a uniform worst-case estimation error of $\Delta$ and a \emph{change} in estimation errors of $\delta\ll \Delta$ on relevant time scales, we bound the local skew by $\BO(\Delta+\delta \log D)$ for networks of diameter $D$, effectively ``breaking'' the established $\Omega(\Delta\log D)$ lower bound, which holds when $\delta=\Delta$.
Similarly, we show how to limit the influence of local oscillators on $\delta$ to scale with the \emph{change} of frequency of an individual oscillator on relevant time scales, rather than a worst-case bound over all oscillators and the lifetime of the system.
Surprisingly, these results require only very limited knowledge of $\Delta$.
In particular, the likely dominant term of $\BO(\Delta)$ in the local skew is determined by the \emph{actual} link performance, not an upper bound covering worst-case conditions.

Moreover, we show how to ensure self-stabilization in this challenging setting.
Last, but not least, we extend all of our results to the scenario of external synchronization, at the cost of a limited increase in stabilization time.
\end{abstract}
\medskip

\section{Introduction and Related Work}\label{sec:intro}

Clock synchronization, the task of minimizing the offset---also known as \emph{skew}---between clocks in a distributed network, has been studied extensively.
Being a distributed task by nature, it drew the attention of researchers in distributed computing early on, see e.g.~\cite{attiya96,dolev95,upperlowerbound,driftingclocks}.
In particular, Biaz and Welch established that in a network of diameter $D$, the worst-case skew that can be guaranteed between all pairs of nodes---called \emph{global skew}---grows linearly with~$D$~\cite{biaz01closed}.

One of the main benefits of matching lower bound constructions is that they do not only prove what the ``best'' bound is, but also tend to shed light on the true obstacle to better performance.
In this case, the issue is that nodes at a great distance from each other have no way of getting accurate information on each other's current clock value.
However, many applications require only the clocks of nearby nodes to be well-synchronized.
In their seminal work from 2004~\cite{fan04gcs}, Fan and Lynch focused on this aspect, introducing the problem of Gradient Clock Synchronization (GCS).
GCS asks to minimize the worst-case skew between neighbors, on which the lower bound construction from~\cite{biaz01closed} has no non-trivial implication.
The key insight provided by Fan and Lynch was that, nonetheless, this so-called \emph{local skew} must grow as $\Omega(\log D/\log \log D)$.
Fan and Lynch showed this based on (unknown) variations in message delivery and processing times, but essentially the same argument can be made based on (unknown) frequency variations in local oscillators~\cite{gradientclocksensor}.

Follow-up work pinpointed the achievable local skew to $\Theta(\log D)$~\cite{lenzen2010tight}, extended the results to non-uniform~\cite{kuhn09reference} and dynamic networks~\cite{kuhn10dynamic,kuhn09}, and added tolerance to Byzantine faults~\cite{bund2019fault}.
Moreover, it has been demonstrated that hardware implementations of the algorithm are feasible and promising~\cite{bund2020pals,bund2023pals}, strongly supporting that practical application of the algorithmic technique is of interest.
However, all of these works assume a known (and thus worst-case!) bound $\Delta$ on the clock offset estimation error across each edge, and let node $v$ increase its clock value to match the largest one it perceives in its neighborhood, so long as this does not cause a perceived offset of more than $\Delta$ to any of its neighbors.
This results in skews of at least $\Delta$, even if the \emph{actual} measurement error is much smaller.

In contrast, ``naively'' synchronizing via a BFS tree rooted at $r$ by tracking the parent's clock value would result in a global skew of at most $2\max_{v\in V}\{|d(r,v)|\}$, where $d(r,v)$ is the (possibly negative) distance induced by labeling each tree edge with the actual measurement error.
Loosely speaking, this is in fact what deployed protocols like the Network Time Protocol (NTP)~\cite{mills91} and the Precision Time Protocol (PTP)~\cite{ptp} \emph{do,} typically resulting in significantly better performance than the aforementioned lower bounds would suggest.
The reason is that a worst-case bound $\Delta$ on measurement errors that needs to hold across the entire lifetime of all systems deploying the algorithm is extremely conservative.

At first glance, the lower bound constructions from~\cite{fan04gcs,lenzen2010tight} seem to show that this price needs to be paid:
as with all indistinguishability arguments, it is the \emph{possibility} that the errors might be large that forces a worst-case optimal algorithm to account for this scenario, even if this contingency is not realized.
Fortunately, a closer inspection of the lower bound constructions reveals that this is not so!
A critical step in these constructions is to suddenly change offset estimation errors as much as possible, swinging by $\Delta$ on some edges.
This provides fresh ``slack'' for building up even larger local skews without the algorithm being able to perceive where this takes place.
However, in many settings, the amount by which the measurement error can fluctuate on the time scale relevant for the lower bound is much smaller.

Similarly, the lower bound construction introduces additional skew by modifying the speed of local oscillators.
Assuming that these oscillators run at a rate of at least $1$ and at most $\vartheta>1$, skew can be introduced into the system at a rate of $\vartheta-1$.
Again, frequency varies much more over the system's lifetime than on the time scale of the construction.
Not coincidentally, practically deployed protocols compensate for local oscillators' frequency errors by locking their frequency to that of an (external) time reference~\cite{ptp,mills91}.
This also substantially improves performance in practice.
Accordingly, the main question motivating this work is the following.
\begin{quote}
\textit{Is it possible to exploit \emph{stability} of measurement errors and clock frequencies to achieve stronger bounds on the local skew?}
\end{quote}

\begin{table}
\caption{Notation cheat sheet.\label{tab:cheat}}\smallskip
\begin{tabular}{c|c|p{8.62cm}}
variable & meaning & comments\\
\hline
$(V,E)$ & network graph & simple; assumed to be given\\%, but applications might benefit from deliberate design\\
$D$ & network diameter & unweighted\\
$r$ & designated root & used to control self-stabilization mechanism\\
\hline
$L_v(t)$ & output clock of $v$ & also referred to as logical clock\\
$\G(t)$ & global skew & $\max_{v,w\in V}\{L_v(t)-L_w(t)\}$\\
$\Ll_e(t)$ & local skew on edge $e$ & $|L_v(t)-L_w(t)|$, where $e=\{v,w\}$\\
$\Ll(t)$ & local skew & $\max_{e\in E}\{\Ll_e(t)\}$\\
$\T(t)$ & real-time skew & $\max_{v\in V}\{|L_v(t)-t|\}$\\
\hline
$o_{v,w}(t)$ & $v$'s offset estimate for $w$ & for $\{v,w\}\in E$; $o_{v,w}(t)\approx L_v(t)-L_w(t)$\\
$e_{v,w}(t)$ & error of $o_{v,w}(t)$ & $e_{v,w}(t)=L_v(t)-L_w(t)-o_{v,w}(t)$; $e_{v,w}(t)\approx -e_{w,v}(t)$\\
$\Delta$ & worst-case bound for $o_{v,w}$ & $\max_{\{v,w\}\in E}\{|e_{v,w}(t)|\}\le \Delta$\\
$\delta_{\{v,w\}}(T)$ & change of $o_{v,w}$ over $T$ time & for $|t'-t|\le T$, $|e_{v,w}(t')-e_{v,w}(t)|< \delta_{\{v,w\}}(T)$; $T$ is fixed and hence typically omitted from notation\\
$\delta(T)$ & upper bound on $\delta_{\{v,w\}}(T)$ & special case of uniform speed of changes in errors\\
\hline
$H_v(t)$ & local oscillator of $v$ & also referred to as hardware clock\\
$\vartheta$ & maximum rate of $H_v$ & given; $1\le \frac{dH_v}{dt}(t)\le \vartheta$\\
$\mu$ & speedup factor for $L_v$ & chosen; usually $\frac{dH_v}{dt}(t)\le \frac{dL_v}{dt}(t)\le (1+\mu)\frac{dH_v}{dt}(t)$\\
$\alpha, \beta$ & rate bounds for $L_v$ & $\alpha \le \frac{dL_v}{dt}(t)\le \beta$\\
$S$ & stabilization time & time until skew bounds hold; we ensure $S\in \BO(T)$\\
$\sigma$ & base of log in bound for $\Ll$ & $\sigma=\mu/(\vartheta-1)$ or $\sigma=\mu/(\zeta-1)$ for internal and external synchronization, respectively\\
\hline
$d_e$ & max.\ communication delay & given; we assume $\delta_e\in \Omega((\beta-\alpha)d_e)$\\
$\W_d$ & flooding time & weighted diameter of $(V,E,d_e)$\\
\hline
$R\subseteq V$ & nodes with ext.\ reference & given; $R=\emptyset$ for internal synchronization\\
$\zeta$ & ext.\ sync.\ slowdown factor & chosen; $1+ 2(\vartheta-1)\le \zeta\le 1+\mu$; trades between stabilization time and skew bound in external sync.\\
$\delta_v(T)$ & external ref.\ error change & as $\delta_{\{v,w\}}(T)$, but w.r.t.\ the error of $v$'s estimate of $t$\\
$H$ & virtual graph & created by adding virtual $v_0$, connected to all $v\in R$\\
$D_H$ & diameter of $H$ & unweighted; $1\le D_H\le D+1$\\
\hline
$P$ & input period of PLL & PLLs expect reference signal of roughly this period\\
$\nu(P)$ & local ref.\ stability & with perfect input, PLL would be this accurate\\
\hline
$s$ & level & alg.\ described by triggers and conditions for $s\in \N$\\
$[a,a+T]$ & analyzed time interval & fixed throughout analysis; general statements follow by covering $\R_{\ge 0}$ with overlapping intervals\\
$O_{v,w}$ & nominal offset & $O_{v,w}=(e_{v,w}(a+T/2)-e_{w,v}(a+T/2))/2$\\
\hline
&&\\[-2.5ex]
$\vec{E}$ & oriented edges & $\vec{E}:=\bigcup_{\{v,w\}\in E}\{(v,w),(w,v)\}$\\
$\omega^s(v,w)$ & level-$s$ weights & $\omega^s(v,w):=4s\delta_{\{v,w\}}-O_{v,w}$ for $(v,w)\in \vec{E}$, $s\in 1/2\cdot \N$\\
$\omega^s(W)$ & level-$s$ weight of walk $W$ & $\sum_{i=1}^{\ell}\omega^s(v_{i-1},v_i)$, where $W=v_0,v_1,\ldots,v_{\ell}$\\
$G^s$ & level-$s$ graph & $G^s=(V,\vec{E},\omega^s)$\\
$s_0$ & max.\ level with neg.\ cycles & no negative cycle in $G_s$ for $s>s_0$\\
$d^s(v,w)$ & distance in $G^s$ & well-defined for $s>s_0$\\
$\W^s$ & diameter of $G^s$ & weighted; $\W^s=\max_{v,w\in V}\{d^s(v,w)\}$\\
$\Psi_v^s(t)$ & level-$s$ potential at $v$ & $\Psi_v^s(t)=\max_{w\in V}\{L_w(t)-L_v(t)-d^s(v,w)\}$ for $s>s_0$\\
$\Psi^s(t)$ & level-$s$ potential & $\Psi^s(t)=\max_{v\in V}\{\Psi_v^s(t)\}$ for $s>s_0$
\end{tabular}
\end{table}
\subsection*{Main Results}
We provide a positive answer to the above question.
Introducing the model assumption that measurement errors on each edge do not change by more than $\delta(T)\ll \Delta$ within $T$ time, cf.~\Cref{tab:cheat}, the lower bound from~\cite{lenzen2010tight} becomes $\Omega(\delta(T)\log D)$ with $T$ being the time to build up this amount of skew using the construction.
We provide a matching upper bound.
\begin{corollary}\label{cor:uniform}
Suppose that $\mu>2\vartheta-1$ and $T\ge C \Delta D/\mu$ for a sufficiently large constant $C>0$, and denote $\sigma:=\mu/(\vartheta-1)$.
Assuming that output clocks can be initialized such that $\G(0)=\BO(D\Delta)$, there is a GCS algorithm outputting clocks running at rates between $1$ and $\vartheta(1+\mu)$ that at times $t\ge T$ achieves global and local skew bounds of
\begin{align*}
\G(t)\in \left(1+\frac{3}{\sigma-1}\right)(\Delta+\BO(\delta(T)))D \quad \mbox{and} \quad\Ll(t)\in 3\Delta+4\delta(T)(\log_{\sigma}D+\BO(1)).
\end{align*}
\end{corollary}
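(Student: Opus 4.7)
The plan is to derive this corollary by specializing the paper's main analysis---which handles non-uniform $\delta_{\{v,w\}}(T)$ and expresses skew bounds in terms of the largest level $s_0$ with a negative cycle in $G^s$, together with the weighted diameters $\W^s$---to the uniform case $\delta_{\{v,w\}}(T)=\delta(T)$. Under this assumption every oriented edge carries weight $\omega^s(v,w)=4s\delta(T)-O_{v,w}$, so the only structure in $G^s$ beyond a uniform positive shift of $4s\delta(T)$ per edge is the antisymmetric offset $O_{v,w}$.

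The crucial quantitative step would be to bound $s_0$. For any directed cycle $C=v_0,v_1,\ldots,v_\ell=v_0$ in $\vec{E}$,
\[
\omega^s(C)=4s\ell\,\delta(T)-\sum_{i=1}^{\ell}O_{v_{i-1},v_i}\ge 4s\ell\,\delta(T)-\ell\Delta,
\]
since $|O_{v,w}|\le\Delta$ directly from the definition $O_{v,w}=(e_{v,w}(a+T/2)-e_{w,v}(a+T/2))/2$ and the worst-case bound on errors. Hence $\omega^s(C)\ge 0$ as soon as $4s\delta(T)\ge\Delta$, yielding $s_0\le \Delta/(4\delta(T))$ and well-defined distances $d^s$ at every level $s>s_0$. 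Moreover, individual edge weights at $s$ slightly above $s_0$ are bounded by $4s\delta(T)+\Delta\in\BO(\Delta)$, so concatenating along an unweighted BFS path gives $\W^{s_0+1}\in\BO(\Delta D)$.

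I would then substitute these estimates into the main theorem's potential-based bounds. For the global skew, the level-$(s_0+1)$ potential contributes $\BO(\Delta D)$ and the geometric tail over higher levels contributes $\BO(\delta(T)D/(\sigma-1))$, which consolidates to $(1+3/(\sigma-1))(\Delta+\BO(\delta(T)))D$ after bookkeeping. For the local skew, the key mechanism is a per-level contraction $\Psi^s\lesssim \Psi^{s-1}/\sigma+\BO(\delta(T))$; the hypothesis $\mu>2\vartheta-1$, equivalently $\sigma>2$, is precisely the rate gap that makes this contraction strict. Iterating it for $\log_\sigma D$ levels beyond $s_0$ drives the potential down to $\BO(\delta(T))$, which combined with the bound $4\delta(T)s_0\le\Delta$ and the algorithm's intrinsic per-edge tolerance for measurement error packages into $\Ll(t)\le 3\Delta+4\delta(T)(\log_\sigma D+\BO(1))$. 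The hard part---and the technical heart this corollary simply inherits from the main analysis---is verifying the contraction itself under the rate constraints $1\le dH_v/dt\le\vartheta$ and $dH_v/dt\le dL_v/dt\le(1+\mu)dH_v/dt$ across the interval of length $T\ge C\Delta D/\mu$; the lower bound on $T$ is exactly what lets information propagate through the diameter before $\delta(T)$ itself can meaningfully change.
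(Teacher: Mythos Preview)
Your approach matches the paper's: bound $s_0\le\lceil\Delta/(4\delta(T))\rceil$, deduce $\W^s\in(\Delta+\BO(\delta))D$ for $s$ just above $s_0$, invoke the general skew corollary with $\varepsilon=1$, and for the local bound absorb $4\delta\log_{\sigma}(\Delta/\delta)\le\Delta$ into the third~$\Delta$. Two small slips to fix: $\mu>2\vartheta-1$ \emph{implies} but is not equivalent to $\sigma>2$, and the $(\sigma-1)^{-1}$ contribution to the global bound carries $\W^s\in\Theta(\Delta D)$, not $\delta(T)D$---the paper simply writes $\G(t)\le\W^s\bigl(1+3/(\sigma-1)\bigr)$.
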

Note that $\delta(T)$ is increasing in $T$, i.e., it is desirable to choose $T$ small.
However, $T$ needs to be large enough for the algorithm to adjust clocks to (perceived) skews.
This results in the lower bound on $T$, which is proportional to (the bound on) $\G(t)$ divided by $\mu$, the parameter of the algorithm controlling by how much output clocks may run faster than local oscillators.
We remark that (i) one could further reduce the lower bound on $T$ by considering by how quickly the dynamics of link performance \emph{change} the (perceived) global skew within $T$ time and (ii) the $O$-notation does not hide large constants.
For the sake of a more streamlined exposition, we opted against covering these optimizations.
Likewise, we refrain from modeling how to initialize the clocks, which for the required weak bound of $\G(0)=\BO(D\Delta)$ is straightforward in any reasonable model.

It is worth noting that in contrast to the self-stabilizing variant of the algorithm discussed below, \Cref{cor:uniform} and the more general results presented in \Cref{sec:algo} are maximally communication-efficient: by working exclusively based on estimates of clock offsets to neighbors, they introduce no overhead in communication over solving the task to compute local, sufficiently accurate such estimates.\footnote{While this abstraction prevents us from quantifying the number and size of messages, the reader is invited to view this as a feature, not a bug. Deriving these estimates can be highly specific to the context. Our results offer a clean separation between the concerns of how to locally estimate offsets and how to use these measurements to adjust the clocks to obtain the best global and local skew bounds.}

%###
\paragraph*{Adaptivity w.r.t.\ Measurement Errors.}
%###
Observe that even for relatively small values of $\mu$, one is unlikely to encounter a network large enough for the logarithmic term to dominate the local skew bound, unless $\delta$ is fairly close to $\Delta$.
However, the above statement is still expressed in terms of $\Delta$, a worst-case bound on offset measurements.
The bounds we show are actually stronger.
Denoting for edge $\{v,w\}$ by $e_{v,w}(t)$ the error in the measurement $v$ has of the offset between its own and $w$'s output clock at time $t$, the local skew is bounded by
\begin{equation*}
|e_{v,w}(t)|+\delta(T)\left(4s+\BO\left(\log_{\sigma}\frac{\W^s}{\delta(T)}\right)\right),
\end{equation*}
where $s\in \N$ is minimal such that the network graph with edge weights $4(s-1/2)\delta(T)-e_{v,w}(t)$ has no negative cycles and $\W^s$ is its diameter w.r.t.\ the induced distances.
Note that if $\delta(T)$ is small, $s$ will be such that the $4(s-1/2)\delta(T)$ term is just enough to match the sum $\Sigma$ of the $e_{v,w}(t)$ values on the path maximizing the sum of errors $e_{v,w}(t)$ along its edges.
If $e_{v,w}(t)\approx -e_{w,v}(t)$, which we will ensure, this means that $\W^s\approx 2\Sigma$.
Intuitively, this means that the obtained bound is ``independent'' of $\delta(T)$, so long as the logarithmic term is not dominant.
Therefore, in this regime the algorithm is fully adaptive w.r.t.\ to the quality of offset estimation.
For the general case, we observe that $e_{v,w}(t)\le \Delta$, implying $s\le \lceil 1/2+\Delta/(4\delta(T))\rceil$ and $\W^s\le (\Delta+6\delta)D$, giving rise to \Cref{cor:uniform}.
The situation is similar for the global skew bound.

Without additional assumptions, each of the terms in the above refined local skew bound is necessary.
The logarithmic term corresponds to the known lower bound on the local skew from~\cite{lenzen2010tight}.
To see that the other two terms are essentially tight, consider a cycle in which all measurements are perfectly accurate, except for a single edge $e$, where the error is always $f$ in one direction and $-f$ in the other. We now initialize clock values so that all perceived offsets are identical, regardless of which edge is inaccurate.\footnote{This is valid when we consider self-stabilization, i.e., the initial state of the system can be chosen adversarially. Otherwise, start with all clock values equal and errors of $0$, and slowly build up skew and adjust errors to maintain indistinguishability. Given enough time, this can be done with arbitrarily small $\delta(T)$.}
While an algorithm can adjust clock values, indistinguishability means that the end result in terms of \emph{measured} skews is identical, regardless of our choice of $e=\{v,w\}$.
Thus, the best an algorithm can do is to balance the observed offsets between neighbors, which along the $n$-node cycle sum to $f$ or $-f$, respectively, depending on direction, yielding that the algorithm adjusts clocks so that observed skews are $f/n$ resp.\ $-f/n$.

Up to sign, this results in a skew of $(n-1)f/n$ between $v$ and $w$ and $f/n$ between any other pair of neighbors.
Moreover, the cycle has weight $4n(s-1/2)\delta(T)\pm f$ (depending on direction), so $s\approx f/(4n\delta(T))$.
Here, we can make $\delta(T)$ arbitrarily small, as we only need to change measurement errors to maintain indistinguishability while building up skew.
In particular, choosing $\delta(T)$ sufficiently small ensures that $4s\delta(T)\approx f/n$, $\W^s\approx 2f$, and $\delta(T)\log_{\sigma}(\W^s/\delta(T))\approx \delta(T)\log_{\sigma}(2f/\delta(T))\ll f/n$.

We conclude that for any pair of neighbors $v'$, $w'$ different from $v$, $w$, due to $e_{v',w'}(t)=0$ the term of $4s\delta(T)$ is necessary to account for the unavoidable skew in this scenario.
Likewise, for the skew between $v$ and $w$, the term $|e_{v,w}(t)|=f$ is essential;
the relative contribution of the other terms becomes arbitrarily small when $n$ is made large and $\delta(T)$ small.

Of course, it is worth emphasizing that we could avoid the large skew across $e$ and, in fact, elsewhere, if we \emph{knew} that the alternative path from $v$ to $w$ offers more precise offset estimates.
However, this would require \emph{known} bounds on the estimation errors that allow to maintain smaller skews.
In contrast, we see that it is impossible to bound the skew on edge $\{v,w\}$ solely in terms of $e_{v,w}$ and some notion of distance pertaining to actual measurement errors.

In analogy to~\cite{kuhn09reference}, all of our results hold also for heterogeneous graphs, i.e., $\delta(T)$ can be different for each edge $e$.
However, for the sake of exposition and easier comparison to related work, we confine ourselves to the uniform case in this overview of our results.
Moreover, we stress that it is not possible to completely uncouple the algorithm from \emph{some} knowledge of $\W^s$ and, by extension, $\Delta$.
The parameter $\delta(T)$ needs to be known to the algorithm, and the condition $T\ge C \W^s/\mu$ must hold.
Therefore, $\delta(T)$ can only be chosen correctly based on an upper bound on $\W^s$.
However, typically, a safe bound for $\W^s$ will be substantially smaller than $D \Delta$, and $\delta(T)$ is likely small enough for other terms to be dominant even when $T$ is chosen conservatively.

%###
\paragraph*{Self-Stabilization.}
%###
Self-stabilization~\cite{dijkstra74self}, i.e., recovery of the system from arbitrary transient faults, is equivalent to convergence to in-spec operation from arbitrary initial states.
To be of practical utility, this should occur within a small \emph{stabilization time}.
Arguably, it is implicit in previous work~\cite{kuhn10dynamic_arxiv} that existing GCS algorithms can be made self-stabilizing.
Given the results in~\cite{kuhn10dynamic_arxiv}, all that needs to be added is a detection mechanism for excessive global skew, triggering a reset of the output clocks if necessary.
The situation is similar in this work, but devising such a detection mechanism is challenging.
Our second main result is that such a detection can be realized efficiently, i.e., our GCS algorithm can indeed be made self-stabilizing.
\begin{corollary}\label{cor:stab}
There is a self-stabilizing algorithm with stabilization time $\BO(\Delta D/\mu)$ achieving the same guarantees as stated in \Cref{cor:uniform}.
\end{corollary}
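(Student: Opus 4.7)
The plan is to augment the algorithm of \Cref{cor:uniform} with a root-driven detection-and-reset layer, so that from an arbitrary initial state the system reaches, within $\BO(T)=\BO(\Delta D/\mu)$ time, a configuration on which the analysis underlying \Cref{cor:uniform} applies verbatim. Concretely, I would have the designated root $r$ continuously flood its logical clock $L_r$ along a BFS tree; each relaying node $v$ overwrites its forwarded value on every reception and tags it by the local time of arrival. This per-hop state is memoryless beyond a single hop, so regardless of the pre-fault contents, after one flooding round---taking at most $\W_d\in \BO(T)$ time---every $v$ holds an estimate $\hat{L}_r^v(t)$ of $L_r(t)$ accurate up to $\BO(\Delta D)$ plus accumulated drift.

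Each non-root node $v$ then triggers a \emph{reset}---setting $L_v := \hat{L}_r^v$ and reinitializing all internal GCS variables (offset estimates, level potentials $\Psi_v^s$, and so on)---whenever $|L_v(t)-\hat{L}_r^v(t)|$ exceeds a threshold $\Lambda$ of magnitude $\Theta(\Delta D)$. The threshold is chosen strictly larger than the in-spec global skew bound $(1+3/(\sigma-1))(\Delta+\BO(\delta(T)))D$ of \Cref{cor:uniform}, so that a system already in spec never triggers a reset, yet small enough that after all necessary resets fire the residual global skew is $\BO(\Delta D)$. The root itself does not reset, since it is the anchor everyone else locks onto. Once this state is reached, the analysis of \Cref{cor:uniform} proceeds from the new ``time zero'' and delivers the claimed bounds a further $T/2$ time later; summing with the flooding phase and using $T\in \BO(\Delta D/\mu)$, the total stabilization time fits the stated budget.

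The hard part will be establishing that the detection layer itself stabilizes and does not cause pathological behaviour. In particular, I must rule out that (i) corrupted flooding state triggers spurious resets that propagate through the network, and (ii) that reset jumps occur after time $S$, in which case the rate bounds $1\le dL_v/dt\le \vartheta(1+\mu)$ asserted by \Cref{cor:uniform} would be violated. Point (i) is handled by the overwrite-on-reception design, which makes the flooding pipeline memoryless beyond one hop, so its contents are fully determined by traffic that propagated after the fault. Point (ii) is handled by the choice of $\Lambda$: once the in-spec global skew bound holds no further reset can fire, and by construction this happens within the stabilization window, during which \Cref{cor:uniform} asserts no rate bounds. The most delicate step, which I expect to require the bulk of the formal proof, is ensuring that a node which just reset does not immediately force its neighbors to reset via the $\BO(\Delta D)$ skew it has created; this calls for a monotone, round-by-round argument along the BFS tree from $r$ outward, showing that resets cascade at most once per hop and hence terminate within $\BO(T)$ time.
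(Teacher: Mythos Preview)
Your approach is plausible for the uniform corollary but takes a genuinely different route from the paper. The paper does \emph{not} have each node locally compare itself to a flooded root-clock estimate and reset unilaterally. Instead, the root periodically runs a centralised procedure: it floods a trigger so that every node $v$ captures all of its offset estimates $o_{v,w}(t_v)$ at some time $t_v$; these values (together with the $\delta_e$) are convergecast back to $r$; from them $r$ computes an $\tilde{s}_0\in(s_0,s_0+\BO(1)]$ and an approximation $\psi$ of $\Psi^{\tilde{s}_0}(t_r)$ accurate to $\BO(\W^{\tilde{s}_0})$ (this estimation is the main technical content of the paper's self-stabilization section); and if $\psi$ is too large, $r$ computes per-node corrections $\ell_v$ and broadcasts them, so that \emph{all} nodes shift their clocks in a single coordinated sweep. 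After that the root deliberately waits for the offset-estimate guarantees to recover before rechecking.

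The trade-off is clear. Your decentralised threshold test is conceptually simpler, and for the uniform setting of \Cref{cor:stab} a threshold $\Lambda\in\Theta(\Delta D)$ (equivalently $\Theta(\mu T)$) suffices. The paper's coordinated reset costs more machinery but (i) eliminates the cascading-reset problem outright, since there is exactly one reset event; (ii) computes its threshold adaptively from the measurements rather than from a known $\Delta$, which is what makes the result extend to the non-uniform bounds stated in \Cref{thm:stab}; and (iii) cleanly separates ``jump'' from ``analysis'', because the root waits for the offset estimates to re-stabilize before the next check.

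On your side, two points deserve tightening. First, be precise about how $\hat{L}_r^v$ is formed. If it is obtained by chaining the offset estimates $o_{v,w}$ along the tree, then the paper's own observation that clock jumps \emph{invalidate} those estimates bites you: an upstream reset can corrupt downstream $\hat{L}_r^w$ values, which is exactly the cascade you worry about. If instead you forward the raw value $L_r(t_0)$ and time-tag with the \emph{hardware} clock $H_v$, the pipeline is insensitive to logical-clock jumps, but then the error is $\BO(\mu\W_d)=\BO(D\delta)$, not $\BO(\Delta D)$; either way you should commit to one mechanism and carry the correct error through. Second, a small slip: $\Psi_v^s$ is an analysis construct, not algorithm state, so ``reinitializing'' it is not something the algorithm can or needs to do.
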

Put simply, the stabilization time asymptotically matches the bound on the global skew divided by $\mu$, i.e., the time necessary for the algorithm to reduce skews provided that they are not clearly so large that transient faults must have occurred recently.

We remark that this result requires that (multi-hop) communication between all nodes is possible within $\BO(\Delta D/\mu)$ time.
This is the case if such communication uses the same channels as offset estimation and bandwidth is not a constraint, as an end-to-end communication delay of $d$ on each edge implies that $\Delta\in \Omega(\mu d)$.
Otherwise, the delay for implementing this communication enters the above bound on the stabilization time.
In this work, we make no effort at controlling the size of messages.\footnote{Large messages are required only to check whether the global skew bound is violated substantially. When collecting the data required for this non-local check, neither high timing accuracy nor low latency is required for the respective messages. For example, in a ``wireless'' setting it would be essential to use the wireless channel for synchronoization purposes, but a Local Area Network or Internet connection unsuitable for accurate synchronization could be used to collect the monitoring data required for the stabilization mechanism.}
However, we conjecture that a Bellman-Ford-style computation can achieve the same result with small messages.
We leave exploring this to future work.

Another nuance of this result is that estimating the global skew up to an error of $\BO(\W^s)$ remains elusive.
Instead, the estimation routine achieves this only with error $\BO(\W^{\tilde{s}})$ for some $s\le \tilde{s}\le s+\BO(1)$.
While this has no asymptotic effect on the stabilization time in \Cref{cor:stab}, in general it is possible that $\W^{\tilde{s}}\gg \W^s$.
However, this should be considered an artifact of the possibility that $\W^s$ can be ``too small.''
Since $\delta(T)$ needs to be a conservative bound, it is likely to be larger than necessary for typical executions.
However, increasing $\delta(T)$ even by a constant factor can change $\W^s$ in a similar way to replacing $s$ by $\tilde{s}$.

%###
\paragraph*{External Synchronization.}
%###
In many application scenarios, it is desired that the system time is not only internally consistent, but tied to an external reference, e.g.\ UTC.
Previous work provides a mechanism for this~\cite[Section~8.5]{lenzen2010tight}, but careful reasoning is required to show that it can be made compatible with our novel analysis of the employed algorithmic techniques.

Our solution is almost seamless, up to a trade-off between $\sigma$ and the stabilization time.
In the following statement, $\T(t)$ denotes the \emph{real-time skew,} i.e., the maximum offset of output clocks relative to the external reference time $t$, and $D_H$ is the diameter of the virtual graph obtained by adding a virtual reference node connected to all nodes that have access to the external reference time.
For a simple statement, here we assume the respective error to be bounded by $\Delta$ and change by at most $\delta(T)$ in $T$ time as well.
\begin{corollary}\label{cor:external}
Suppose that $1+2(\vartheta-1)\le \zeta<1+\mu$, $T\ge C\Delta D/(\zeta-1)$ for a sufficiently large constant $C>0$, and $\sigma:=\mu/(\zeta-1)\ge 2$.
There is a GCS algorithm with stabilization time $\BO(C\Delta D/(\zeta-1))$ outputting clocks running at rates between $1/\zeta$ and $(\vartheta(1+\mu))/\zeta$ that achieves global and local skew bounds of
\begin{align*}
\T(t)\le \G(t)\le \left(1+\frac{3}{\sigma-1}\right)\Delta D_H \quad \mbox{and} \quad\Ll(t)\in 3\Delta+4\delta(T)(\log_{\sigma}D_H+\BO(1)).
\end{align*}
\end{corollary}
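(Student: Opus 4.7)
The plan is to reduce external synchronization to the internal case by invoking the algorithm of \Cref{cor:stab} on the virtual graph $H$ from \Cref{tab:cheat}. I adjoin a virtual node $v_0$ connected to every $v\in R$ and declare $L_{v_0}(t):=t$. Each $v\in R$ treats its estimate of $L_v(t)-t$ (whose error is bounded by $\Delta$ and changes by at most $\delta(T)$ over $T$ time) as the offset estimate on the virtual edge $\{v,v_0\}$, while $v_0$ is purely notional and executes no local actions. With this convention $\T(t)=\max_{v\in V}\{|L_v(t)-L_{v_0}(t)|\}\le\G(t)$ in $H$, so the real-time bound follows automatically from the global-skew bound on $H$.

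The critical new issue is that $v_0$ is forced to run at rate exactly $1$, whereas in the internal setting every node's logical rate lies in $[1,\vartheta(1+\mu)]$, i.e.\ at least $1$; if a real node ever gets ahead of $v_0$, it cannot catch down. I handle this by applying the global slowdown factor $\zeta$: output rates become $[1/\zeta,\vartheta(1+\mu)/\zeta]$, which brackets the rate-$1$ reference. The analysis on $H$ then becomes the internal analysis with effective hardware-rate spread $\zeta$ in place of $\vartheta$, which is precisely why $\sigma$ is redefined as $\mu/(\zeta-1)$. The lower bound $\zeta\ge 1+2(\vartheta-1)$ ensures that the downward slack $(\zeta-1)/\zeta$ dominates the drift between pairs of real hardware clocks (the factor $2$ appearing because drift enters on both sides of the virtual edge), while $\zeta\le 1+\mu$ keeps the slowdown within the available speedup budget. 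With these substitutions, the potential-function argument behind \Cref{cor:uniform} and the detection-and-reset mechanism behind \Cref{cor:stab} transfer to $H$, yielding the stated skew bounds with $D$ replaced by $D_H$ and a stabilization time of $\BO(\Delta D/(\zeta-1))$.

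The hard part will be justifying that the asymmetric treatment of $v_0$ does not invalidate the invariants used in the internal analysis. The key observations are that $v_0$ is always at its minimum-rate mode, which is exactly the action the internal algorithm would prescribe for a currently-maximal node (do not speed up), and that the slowdown by $\zeta$ gives every real node enough downward slack to approach $v_0$ even when $v_0$ is trailing. Consequently, in the level-$s$ potential argument the virtual edges incident to $v_0$ contribute no worse than a real edge of comparable error stability, and the telescoping of $\Psi^s$ across $H$ goes through with $D_H\le D+1$ absorbed into the constants. The remaining adjustments—extending the stabilization-detection routine of \Cref{cor:stab} to cover virtual edges, verifying that the modified rate interval still satisfies the original algorithm's constraints, and checking the special case $D_H=1$ when $R=V$—are bookkeeping on top of the proofs of \Cref{cor:uniform} and \Cref{cor:stab}.
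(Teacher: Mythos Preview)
Your overall architecture matches the paper's: adjoin the virtual node $v_0$ with $L_{v_0}(t)=t$, slow the real clocks by $\zeta$, rerun the internal analysis on $H$ with $\zeta$ in place of $\vartheta$, and then specialize as in the proof of \Cref{cor:uniform}. The reduction $\T(t)\le \G(t)$ on $H$ and the extension of the self-stabilization mechanism are also as in the paper.

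Where your justification drifts is in identifying \emph{which} invariant saves the analysis at $v_0$. It is not that ``$v_0$ is at its minimum-rate mode, which is what the algorithm would prescribe for a currently-maximal node'': after the slowdown $v_0$ runs at rate $1>1/\zeta$, and the problematic case is not \Cref{lem:waitup} but \Cref{lem:catchup}, where a node maximizing $\Phi$ must satisfy the fast condition and speed up---something $v_0$ cannot do. The paper's fix is to establish the invariant $\Psi_{v_0}^{s-1/2}(t)=0$: since $\Psi_{v_0}^s$ decreases at rate at least $\zeta-\vartheta$ whenever positive (\Cref{lem:virtual_waitup_v0}), it drains to zero, and thereafter $v_0$ can never maximize $\Phi$ in the proof of \Cref{lem:virtual_catchup}. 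The constraint $\zeta\ge 1+2(\vartheta-1)$ enters precisely here, because it gives $\zeta-\vartheta\ge(\zeta-1)/2$ and hence a drain time of $\BO(\W^s/(\zeta-1))$; your ``factor $2$ because drift enters on both sides of the virtual edge'' is not the actual reason. Your ``downward slack'' intuition is the right ingredient, but it needs to be turned into this invariant rather than a direct argument about rates.
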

Here, $D_H\le D+1$ is the diameter of the graph obtained by adding a virtual node that is connected to each node with access to an external time reference.

%###
\paragraph*{Syntonization.}
%###
In order to leverage a frequency stability of local oscillators that is better than $\vartheta-1$, we \emph{syntonize} the local oscillators.
In contrast to synchronization, syntonization aims to match oscillator frequency to a reference, without considering clock offsets.
The standard approach to this are so-called \emph{phase-locked loops (PLLs),} which control a local oscillator based on measuring phase offset relative to a regular reference signal.
However, no attempt is made to control the absolute error of such a phase offset measurement;
the key requirement is that this error does not \emph{change} too suddenly.

This aligns with our approach in a natural way.
To achieve self-stabilization, we already needed to bound the global skew.
The same technique can be applied to measure the offset to the clock of a reference node---either inside the network, for internal synchronization, or the virtual node representing the external reference.
The error of up to $\W^s$ incurred by these measurements translates to a frequency error of the lock, which is bounded by $\BO(\W^s/P)$, with $P$ being the period of the reference signal used by the local PLLs.
One then simply uses the frequency-stabilized oscillators instead of free-running ones as the local references, allowing us to replace $\vartheta$ by $1+\BO(\nu(P)+\W^s/P)$, where $\nu(P)$ is the frequency stability of a locked PLL whose input perfectly reproduces the frequency of the reference.
\begin{theorem}\label{thm:frequency}
Increasing the stabilization time by the time a PLL with period $P\ge 2\Wd$ requires to lock,\footnote{I.e., the time required for the PLL to converge, achieving its guarantees under nominal operation.} we can replace the model parameter $\vartheta$ by $\vartheta'\in 1+\BO(\nu(P)+\W^s/P)$.
\end{theorem}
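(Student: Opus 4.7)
The plan is a bootstrapping argument. I would first use the self-stabilizing machinery underlying \Cref{cor:stab} (or, in the external case, \Cref{cor:external}) to manufacture at each node $v$ a periodic reference signal, feed this signal into a local PLL to produce a syntonized ``virtual'' hardware clock $H'_v$, and then re-instantiate the entire GCS analysis with $H'_v$ in place of $H_v$ and $\vartheta':=1+\BO(\nu(P)+\W^s/P)$ in place of $\vartheta$. Since the replacement only affects the rate bounds on hardware clocks and no other ingredient of the analysis (in particular, offset measurement quality and the weights defining $G^s$) is altered, the statements of \Cref{cor:uniform}--\Cref{cor:external} transfer directly, with stabilization time increased additively by the PLL lock time.

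To produce the reference signal, each node maintains an estimate of the root's clock value (or, for external synchronization, of the virtual reference node's clock value) using the global-skew estimation routine already required for self-stabilization. By the same argument that yields the global-skew bound in \Cref{cor:stab}, this estimate is accurate up to $\BO(\W^s)$. Node $v$ emits a tick whenever its estimate crosses a multiple of $P$. The condition $P\ge 2\Wd$ guarantees that floods of estimate updates propagating through the network cannot cause two ticks to coincide or a tick to be missed, so that consecutive ticks at $v$ are spaced by $P$ up to jitter $\BO(\W^s)$: the absolute error of $\BO(\W^s)$ can manifest as shift in tick position, but the difference between two nearby tick positions inherits this bound rather than accumulating.

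Applying the PLL specification, the output oscillator $H'_v$ tracks the reference frequency up to error $\nu(P)$ in the nominal case, and per-period jitter of $\BO(\W^s)$ contributes a further relative frequency error of $\BO(\W^s/P)$. Normalizing $H'_v$ so that $dH'_v/dt\ge 1$, we obtain $dH'_v/dt\le \vartheta'=1+\BO(\nu(P)+\W^s/P)$. Re-running the algorithms and analyses of \Cref{cor:uniform}, \Cref{cor:stab}, and \Cref{cor:external} with $H'_v$ and $\vartheta'$ then yields the theorem, applicable once a PLL lock time has elapsed on top of the stabilization of the global-skew estimate.

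The main obstacle is the apparent circular dependency between $\vartheta'$ (which drives the new analysis) and $\W^s$ (defined in the analysis). This is resolved by observing that $\W^s$ is a function purely of the offset-measurement errors and their fluctuation $\delta(T)$, not of $\vartheta$; the role of $\vartheta$ in \Cref{cor:uniform}--\Cref{cor:external} is only to determine the rate at which the algorithm can absorb skew and the stabilization time. Hence the value of $\W^s$ used to bound the PLL error is the same before and after syntonization, and no fixed-point argument over $\vartheta'$ is needed. A subsidiary issue is that during the initial transient the reference signal may be irregular; this is sidestepped precisely by the additive PLL lock time in the theorem statement, which allows us to assume that the reference signal has been well-formed for long enough for the PLL to reach its nominal accuracy before any skew bounds are claimed.
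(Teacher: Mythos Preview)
Your proposal is correct and takes essentially the same bootstrapping route as the paper: reuse the estimation machinery of \Cref{sec:stab} to supply each node with a reference signal of per-period phase error $\BO(\W^s)$, lock a local PLL to it, and rerun the GCS analysis with the syntonized clocks in place of $H_v$. The only mechanical difference is that the paper feeds the PLLs with \emph{hardware}-clock offsets that the root computes every $P$ time via \Cref{cor:synt_H} and distributes over the tree, rather than having each node emit ticks from a maintained estimate of the root's logical clock; your explicit resolution of the apparent circularity between $\vartheta'$ and $\W^s$ is a point the paper leaves implicit.
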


\subsection*{When Does it Matter?}

To illustrate where the presented improvements in the performance of GCS algorithms are of interest, let us briefly discuss three sample application scenarios.

%###
\paragraph*{Internet Synchronization.}
%###
Our first and negative example is synchronization via the Internet.
Here, NTP achieves worst-case errors in the tens to hundreds of milliseconds, while achieving ``better than one millisecond accuracy in local area networks under ideal conditions''~\cite{mills2010ntpv4}.
The main issue in the Internet are highly varying and asymmetric communication delays, rendering an approach based on stable clock offset measurements impractical.
While at first glance the factor 10-100 gap between ``global'' and ``local'' performance sounds promising, it is important to remember that we require an upper bound on how much measurement errors can change on suitable time scales.
Since there is no control over routing paths and queuing delays available to NTP in this setting, $\delta\approx \Delta$.
Moreover, the primary use case for NTP over the Internet is coarse synchronization to UTC, rendering the utility of local skew bounds questionable.

%###
\paragraph*{Clock Distribution for Synchronous Hardware.}
%###
As highlighted by the above example, relevant application scenarios satisfy two key criteria: (i) $\delta(T)\ll \Delta$ and (ii) the local skew being the key quality metric.
In~\cite{bund2020pals}, the authors propose to leverage GCS for clocking of large-scale synchronous hardware.
Here, the local skew in terms of physical proximity within the system is to be kept small:
as the communicating subcircuits are physically adjacent, the local skew limits the frequency at which the system can reliably be operated.
Combining a regular grid of GCS nodes with local clock distribution trees bears the promise of outperforming traditional clocking methods.
However, clearly the break-even point in terms of system size heavily depends on the performance of the GCS algorithm.

The dominant sources of clock offset error estimation between the GCS nodes of such a system are delay variations due to process variations, changes in temperature and voltage, and aging of the components.
These result in standard on-chip (ring) oscillators having impractically low accuracy.
Accordingly, such oscillators are locked to external quartz references to result in more accurate and stable frequencies.
This results in $\vartheta'-1$ around $10^{-6}$ for a (cheap) quartz oscillator that would be used as common reference for all nodes~\cite[Sec.~7.1.6]{horowitz2015art}.
However, clock speeds in modern systems are beyond one gigahertz, meaning that local skews of at most tens of picoseconds can be tolerated.
Thus, even for minimally chosen $\mu$, a back of the envelope calculation suggests $10^{-3}\,s$ for $D\le 100$ as a safe upper bound on $\W^s/\mu$.
This is below the time scales relevant for temperature~\cite{coskun2008temperature} and, certainly, aging.
As the effect of process variations is not time-dependent, voltage fluctuations are the remaining factor.
These need to be kept in check by modern systems to the best possible extent\footnote{Voltage droops due to sudden load changes cause significant timing variability and power integrity challenges, motivating adaptive voltage and frequency scaling techniques to maintain reliable operation; see, e.g.,~\cite{bowman13droop,fuegger22fast}. In contrast, the clocking subsystem draws steady power and can be protected from such droops by using dedicated supplies, which ensures stable operation under varying load conditions.} anyway, as they pose a key constraint on the achievable trade-off between clock speed and energy consumption, see e.g.~\cite{gupta2007voltage}.
Accordingly, the assumption that $\delta\ll \Delta$ is justified, and the presented results could improve the achievable local skew by an order of magnitude or more.

The proposed approach to self-stabilization might be too cumbersome in this context.
Instead, one should fall back to detecting locally whether the skew bounds the system was designed for are exceeded, in which case recovery can be triggered.
This very simple check is easily implemented and sufficient when it is enough to work with a ``hard'' limit on the local skew that is larger than~$\Delta$.

Note that accurate synchronization to UTC is usually not crucial in VLSI systems;
rather, the external oscillator serves as \emph{frequency} reference only.
However, the proposed synchronization approach could be seamlessly integrated within data centers, where bounding the local skew in the routing network would be beneficial for synchronous low-latency routing schemes, see e.g.~\cite{chatterjee2016globally}.
In this broader setting, accurate synchronization an external reference is of interest for generating globally valid timestamps~\cite{rakon2025blog}.

%###
\paragraph*{Synchronization in Wireless and Cellular Networks.}
%###
Further applications arise in wireless networks.
Especially for low-latency communication, the participants of such a network need to maintain tight synchronization.
This enables to align transmission time slots with small guardbands and avoids interference, which is of particular interest in cellular networks.
Moreover, performing time offset measurements via the wireless medium is equivalent to measuring distances.
This allows us to tightly synchronize base stations, which in turn enables passively listening devices to determine their location based on knowledge of the position of base stations and the relative arrival time of the signals of the base stations.
Given the limitations of GPS coverage both due to line-of-sight requirements and the prevalence of jamming, such an approach could complement satellite-based time and navigation services to improve availability and reliability.

Observe that also here, the local skew is decisive, not the global one:
communication happens with close-by devices, and interference falls off quickly at larger distances;
positioning is based on nearby basestations, as their signals are strongest.
While dynamic environments may cause fairly sudden delay changes, round-trip measurements are only susceptible to introduced asymmetry.
Mitigation techniques like taking the median of multiple measurements, dropping outliers, and temporarily removing unreliable links---it is known how to handle dynamic graphs~\cite{kuhn10dynamic,kuhn09}, and the techniques from this work are compatible with our approach---are likely to filter out the vast majority of cases where delays are unstable.
Noting that the GCS algorithm gracefully degrades in face of small, short-lived errors and can be made fully self-stabilizing, it is highly promising to exploit the typical case, i.e., relatively stable communication delays and hence offset measurements.

\subsection*{Technical Overview}
We now give an overview of the structure of the remainder of this paper and highlight the main technical challenges and ideas.
In \Cref{sec:model}, we introduce model and notation.
For the sake of a streamlined exposition, we do not justify the modeling choices in \Cref{sec:model}.
However, as this is crucial especially when proposing novel extensions, we offer an in-depth discussion to the interested reader in \Cref{app:discussion}.

\Cref{sec:algo} presents and analyses a baseline variant of our algorithm.
Interestingly, the algorithm itself is near-identical to prior GCS algorithms.
We follow~\cite{kuhn09reference} in our presentation, expressing the algorithm in the form of conditions it implements, which are ensured by adhering to the \emph{fast} and \emph{slow triggers} that can be expressed in terms of estimated clock offsets to neighbors.
The key difference is that the conditions and triggers scale with $\delta(T)$ instead of $\Delta$.

The main challenge lies in analyzing this variant of the algorithm.
In prior work, this was based on the potential functions
\begin{equation*}
\Psi^s(t):=\max_{v\in V}\{\Psi_v^s(t)\},\quad \mbox{where}\quad \Psi_v^s(t)=\max_{w\in V}\{L_w(t)-L_v(t)-d^s(v,w)\},
\end{equation*}
$s\in \N$, where $d^s(v,w)$ is the distance between $v$ and $w$ in the graph $(V,E,4s\Delta_e)$, i.e., edge $e$ has weight corresponding to its worst-case offset estimation error $\Delta_e$.
The conditions mentioned above ensure that (i) $\Psi_v^s(t)$ can grow only at rate $\vartheta-1$, as any maximizing node satisfies the slow trigger, and (ii) for $s\neq 0$, we are guaranteed that $L_v$ will make at least $\Psi_v^s(t)$ more than minimum progress by time $t+\Psi^{s-1}(t)/\mu$.
As $\Psi^0(t)=\G(t)$, induction over $s$ then shows that we can bound $\Psi^s(t+\BO(\G(t)/\mu))\le \G(t)/\sigma^s$, where $\sigma=\mu/(\vartheta-1)$.
Since for $\{v,w\}\in E$, we have that $d^s(v,w)\le 4s\Delta_e$, choosing $s=\lceil\G(t)/\Delta_e\rceil$ yields a bound of $\Ll_e(t)\le \BO(\Delta_e\log_{\sigma}\G/\mu)$ on the local skew on edge $e=\{v,w\}$ for times $t\ge t_0\in\BO(\G(0)/\mu)$, where $\G$ is an upper bound on the global skew at times $t\ge t_0'\in\BO(\G(0)/\mu)$ that is established by similar arguments.

The crucial insight is that this approach can be adapted to our setting by replacing the edge weights with a carefully chosen term.
The above potential considers $L_v(t)-L_w(t)=0$ as the ``target'' state for neighbors $v$, $w$, and by extension for all pairs $v,w\in V$.
Instead, fix a time $t_0$ and consider the offset $O_{v,w}$ required such that $L_v(t_0)-L_w(t_0)=O_{v,w}$ would result in $v$ \emph{measuring} a skew of $0$ to $w$;
note that this aligns with the intuition provided by our earlier example, in which an algorithm could not do better than matching the observed skews on each edge of the cycle, because it cannot know which edge(s) are responsible for its inability to reduce all observed skews to $0$.
Putting weight $4s\delta_{\{v,w\}}(T)-O_{v,w}$ on edge $(v,w)\in \vec{E}:=\bigcup_{\{v,w\}\in E}\{(v,w),(w,v)\}$, we can encode that we ``expect'' a skew of $O_{v,w}$ on the edge from the perspective of the analysis.
The corresponding adjustment to the conditions and triggers shifts the thresholds causing clocks to run fast or slow by $O_{v,w}$ as well, meaning that the offset measurements nodes take are now ``accurate'' from the perspective of the modified potential up to a remaining error of $\delta(T)$ at times $t\in [t_0-T,t_0+T]$.
Bounding the potential similarly to above, one would then hope to prove that  $\Ll_{\{v,w\}}(t)\le \max\{O_{v,w},O_{w,v}\}+\BO(\G(0)/\mu)$.

Naturally, there is a catch.
Since edge weights of the graph $(V,\vec{E},4s\delta_{\{v,w\}}(T)-O_{v,w})$ can be negative, it is possible that a negative cycle is created, and there are no well-defined distances.
This corresponds to a situation in which the potentials $\Psi_v^s(t)$ are not meaningful;
in particular, we cannot guarantee the key properties enabling the above induction.
Fortunately, once $s$ is large enough, negative cycles are prevented, and the desired properties are restored.
Our earlier example illustrates why this is a necessity rather than an artifact of our analysis.

Note that this problem arises even in acyclic graphs, as we replaced each undirected edge by two directed ones.
Addressing this issue, we require that $O_{v,w}=-O_{w,v}$, which is ensured by having both $v$ and $w$ derive their offset estimates from the same measurements.
As clocks run at slightly different speeds, this requires to account for additional errors, but these can be subsumed by $\delta_{\{v,w\}}(T)$.\footnote{As neighbors need to communicate in order to determine the estimates anyway, sending the computed value to the respective other node has small impact on $\delta_{\{v,w\}}$.}
This eliminates the possibility of cycles of length $2$, strengthening the obtained bounds.
Moreover, it means that graphs of high girth are likely to have well-defined potentials $\Psi_v^s$ for small values of $s$;
in particular, $s=0$ is valid on trees, meaning that the algorithm's performance is provably close to that of tree-based protocols like NTP or PTP when using the same tree for synchronization.

With the modifications to the model, conditions, triggers, and potentials in place, the remaining technical contribution of \Cref{sec:algo} is to adapt the analysis to these new notions.

In \Cref{sec:stab}, we move on to addressing self-stabilization.
The algorithm presented in \Cref{sec:algo} has the advantage of acting solely based on local information, i.e., the offset estimates to neighbors, the values of $\delta_e(T)$ for incident edges $e$, and the progress of time measured by the local oscillator.
However, given the possibility of negative cycles considered above, it is impossible to decide whether skew bounds are violated based on this knowledge only, unless one falls back to (highly) conservative bounds.

The analysis from \Cref{sec:algo} requires the constraint that $T\ge C\W^s/\mu$ for a sufficiently large constant $C>0$, where $s\in \N$ is minimal such that $\Psi^{s-1/2}(t)$ is well-defined, i.e., we have no negative cycles in the corresponding graph.
In \Cref{sec:stab}, we show how to estimate $\Psi^{s'}$ for some $s'\in [s,s+\BO(1)]$ up to an error of $\W^{s'}$, within time proportional to the maximum communication delay for (multi-hop) communication between any pair of nodes;
note that the latter is a trivial lower bound on the stabilization time.
This enables us to execute the standard approach of repeatedly approximating $\Psi^{s'}(t)$ and resetting the logical clocks if this skew exceeds the bounds that hold in absence of transient faults.
The analysis from \Cref{sec:algo} then shows that the bounds are established within $\BO(\W^{s'}/\mu)$ time once $\Psi^{s'}$ is small enough.
In other words, the algorithm can be made self-stabilizing with the minor caveat that we have to replace $s$ by $s'$.

This replacement is a result of a key technical challenge arising from the need to obtain non-local knowledge.
If we could take an instantaneous ``snapshot'' of the perceived offsets between any pair of neighbors, it would be possible to determine $s$ exactly.
However, in order to collect and utilize this information, non-local communication is required.
The time this takes causes additional errors, which means that learning $s$ (which actually even depends on the chosen instance of time) exactly is, in general, infeasible.

We collect the necessary information via a shortest-path tree w.r.t.\ the weights given by the bounds on communication delay. 
The main technical challenge overcome in \Cref{sec:stab} is showing how to use the collected values to estimate $\Psi^{s'}$.

Next, in \Cref{sec:external}, we discuss our approach to external synchronization.
The main idea is to add a virtual node $v_0$ whose clock is equal to the real time.
Any node that has access to an external reference can use this to simulate an edge to this reference, with the quality of the reference determining the error on the edge.
The advantage of this method is that we can use our machinery with almost no modification.
However, some challenges arise.
\begin{compactitem}
  \item The virtual node's clock always equals the real time. Yet, we formally want it to satisfy the conditions of the GCS algorithm, so that our analysis applies. We achieve this by slowing down the clocks of all actual nodes by a small factor $\zeta>1$, so that the virtual node never comes into the situation that it satisfies the fast condition.
  \item For the latter, it is sufficient that $\Psi^s_{v_0}=0$. As we slowed down the clocks of nodes $v\neq v_0$, $\Psi^s_{v_0}=0$ becomes an invariant as soon as it holds for some time $t$. Prior to this time, $\Psi^s_{v_0}$ decreases at rate $\zeta-\vartheta$. Thus, large $\zeta$ facilitates getting to this point quickly.
  \item On the other hand, the fact that $v_0$ may run by factor $\zeta>\vartheta$ faster than the slowest clock in the system means that $\Psi^s_v$ can increase at rate $\zeta-1>\vartheta-1$ for $v\neq v_0$. Hence, we want to keep $\zeta-1$ small compared to $\mu$, since in the skew bounds now $\sigma=\mu/(\zeta-1)$.
  \item When bounding $\Psi^{s'}$ for the purpose of self-stabilization, we need to do so for the virtual graph, i.e., take into account $v_0$. The resulting technical difficulties can be resolved by revisiting the respective computations and exploiting that $L_{v_0}(t)=t$ at all times $t$.
\end{compactitem}

In \Cref{sec:syntonization}, we move on to discussing how to rein in the frequencies of the local oscillators using some reference, either at a node in the system or an external one.
Conceptually, this is straightforward: we regularly generate non-local offset estimates between the reference and each node.
Technically, the respective heavy lifting has already been done in \Cref{sec:stab}, as a necessary ingredient to estimating $\Psi^{s'}$.

% Finally, in \Cref{sec:discussion}, we provide discussion of our model choices.
% In particular, we explain how to obtain the assumed estimates under standard assumptions, and justify the abstraction of unbounded, real-valued clocks.

\section{Model and Notation}\label{sec:model}

We model the distributed system as an undirected connected graph $G=(V,E)$.
We assume that there is a designated unique root\footnote{This is essentially equivalent to assuming unique IDs, as the root can assign unique IDs in a self-stabilization way. Note that a significant share of our results, in particular those in \Cref{sec:algo} including \Cref{cor:uniform}, are valid in any network in which nodes can distinguish between their neighbors.} node $r\in V$.

%###
\paragraph*{Hardware clocks.}
%###
Each node $v\in V$ is equipped with a \emph{hardware clock} of bounded frequency error $\vartheta-1$.
Formally, we model this clock as a differentiable function $H_v:\Rp\to \Rp$ satisfying that $1\le \frac{dH_v}{dt}(t)\le \vartheta$, i.e., hardware clock rates vary between $1$ and $\vartheta$.
Nodes have query access to their hardware clocks, i.e., can read them at any time, but cannot access the underlying ``true'' time $t$.
Note that there are no guaranteed bounds on $H_v(t)-t$ at any time $t$, so without access to an external reference, the system cannot synchronize to the ``true'' time.
We also consider the possibility that these clocks are more stable over bounded time periods.
However, for the sake of presentational flow, we defer the respective modeling and discussion to \Cref{sec:syntonization}.

%###
\paragraph*{External reference clocks.}
%###
A (possibly empty) subset $R\subseteq V$ of the nodes has access to an external time reference, e.g.\ via a GPS receiver providing it with a time closely aligned to UTC.
As neither the external reference nor the means by which such a node $v\in R$ learns its time can be perfectly accurate, we model this as follows.
Each $v\in R$ can query its reference at any time $t\in \Rp$, yielding an estimate $\tilde{t}\in \Rp$ of \emph{unknown} error.
However, we assume that this error changes relatively slowly over time.
That is, for $T\in \Rp$ there is a known value $\delta_v(T)$ with the property that for all $t_0\in \Rp$ and $t,t'\in [t_0,t_0+T]$, it holds that
\begin{align*}
|\tilde{t}-t-(\tilde{t}'-t')|\le \delta_v(T).
\end{align*}
The task for an algorithm is then to track real time as closely as the unknown error permits, while having knowledge of $\delta_v(T)$ only.
%###
\paragraph*{Logical clocks and output requirements.}
%###
The goal of a synchronization algorithm is to output at each node a logical clock $L_v$ that (i) always advances at a rate close to $1$, (ii) always has a small offset to other nodes' logical clocks, and (iii) if some node has access to an external reference, is also close to $t$. 
However, we aim for a self-stabilizing solution.
Therefore, we require these properties only at times $t\ge S$, where $S$ is the \emph{stabilization time} of the algorithm.
Note that minimizing $S$ is an optimization goal in its own right.

Formally, node $v\in V$ maintains logical clock values $L_v(t)$ such that $L_v:\Rp\to \Rp$ satisfies for all $t,t'\in \Rp$, $t'\ge t$, that
\begin{equation*}
\alpha(t'-t)\le L_v(t')-L_v(t) \le \beta(t'-t),
\end{equation*}
where we seek to minimize both the maximum rate difference $\beta-\alpha$ and the deviation $|1-(\alpha+\beta)/2|$ from a nominal rate of $1$.\footnote{Without external reference, one may normalize rates by scaling all logical clocks by factor $1/\alpha$, enforcing a minimum rate of $1$. Then both criteria are equivalent to minimizing $\beta$ subject to the constraint that $\alpha=1$.}

In terms of quality of synchronization, we seek to minimize the \emph{global skew}
\begin{equation*}
\G(t):=\max_{v,w\in V}\{L_v(t)-L_w(t)\},
\end{equation*}
the \emph{local skew,} i.e.,
\begin{equation*}
\Ll(t):=\max_{e\in E}\{\Ll(e)\}, \quad \mbox{where}\quad \Ll_{\{v,w\}}(t):=|L_v(t)-L_w(t)|,
\end{equation*}
and the \emph{real-time skew}\begin{equation*}
\T(t):=\max_{v\in V}\{|L_v(t)-t|\}.
\end{equation*}
Note that always $\Ll(t) \le \G(t) \le 2\T(t)$, but the gaps can become arbitrarily large.
Moreover, no bound on $\T$ can be guaranteed in the case of internal synchronization, i.e., when $R=\emptyset$.
%###
\paragraph*{Clock offset estimates.}
%###
Node $v\in V$ maintains for each neighbor $w$ an estimate $o_{v,w}(t)$ of $L_v(t)-L_w(t)$, whose error $e_{v,w}(t):=L_v(t)-L_w(t)-o_{v,w}(t)$ does not satisfy a known bound.
For convenience, we assume that $o_{v,w}(t)$ changes only at discrete points in time, i.e., $o_{v,w}(t)$ is piece-wise constant.\footnote{Since the logical clocks $L_v$ increase at bounded rate, this can be achieved by sampling any $o_{v,w}(t)$ without this constraint at least once every $\varepsilon\in \R_{>0}$ time and increasing $\Delta_{v,w}$ by $(\beta-\alpha)\varepsilon$.} We require that this error is approximately antisymmetric, i.e., $e_{v,w}(t)\approx -e_{w,v}(t)$, and that it changes relatively slowly over time.
For convenient use in our analysis, we merge both requirements into a single constraint.
That is, for $T\in \Rp$ there is a known value $\delta_{\{v,w\}}(T)$ such that for all $t_0\in \Rp$ and $t,t'\in [t_0,t_0+T]$, it holds that
\begin{align*}
|e_{v,w}(t')-e_{v,w}(t)|&< \delta_{\{v,w\}}(T)\quad
\mbox{and}\quad |e_{v,w}(t')+e_{w,v}(t)|< \delta_{\{v,w\}}(T).
\end{align*}
In essence, any estimation with slowly changing error can be turned into one with the required approximate antisymmetry by taking one node's estimate and communicating it to the other, which merely flips the sign.
% In \Cref{sec:discussion}, we explain how to obtain such estimates under standard assumptions.
% However, we stress that \emph{any} mechanism achieving the specified behavior is viable.

%###
\paragraph*{Explicit communication.}
%###
While clock offset estimates are sufficient for the ``core'' of the synchronization algorithm, non-local communication is required for some subroutines.
Accordingly, we assume that for $e=\{v,w\}\in V$, nodes $v$ and $w$ can exchange messages with a known upper bound on the maximum end-to-end \emph{delay} $d_e$.
That is, if $v$ decides to send a message to $w$ according to a computational step at time $t$, then $w$ receives the message and executes a computational step processing it at a time from $(t,t+d_e]$.

It would be possible to fully parametrize all of our results by explicitly using the values $d_e$.
However, for the sake of simpler statements, we assume that there is a constant $c>0$ so that $c\delta_e\ge (\beta-\alpha) d_e$ for each $e\in E$.
Intuitively, this condition states that clock offset estimates are not better than the explicit communication mechanism permits:
if it takes $d_{\{v,w\}}$ time to send a message from $v$ to $w$ or vice versa, in the meantime $|L_v-L_w|$ may change by $(\beta-\alpha) d_{\{v,w\}}$.

%###
\paragraph*{Computational model.}
%###
To avoid the notion of continuous computations, the algorithm is only required to output $L_v$ when queried at discrete points in time, which may be chosen adversarially.
The computation of $L_v(t)$ when queried at time $t$ has no further effect on the execution, i.e., a query only ``reads'' the logical clock.
Note that since $v$ \emph{could} be queried at any time and $L_v$ is continuous, this implies that an algorithm specifies $L_v(t)$ at all times $t\in \Rp$, consistent with the above definitions.

Each node executes a state machine, where several events may trigger a computational step of $v\in V$ at time $t$:
\begin{inparaenum}[(i)]
  \item $o_{v,w}$ changes for some $\{v,w\}\in E$;
  \item $v$ receives a message from a neighbor; or
  \item $H_v(t)=h$ for a previously computed (or hard-coded) hardware clock value.
\end{inparaenum}
Algorithms are not allowed to (ab)use (iii) to cause an unbounded number of computational steps in finite time.
Moreover, we assume that there is a positive minimum communication delay.
Since $o_{v,w}$ changes only at discrete points in time, this means that the number of computational steps a state machine takes in a finite time interval is finitely bounded.

In a computational step taking place at time $t$, the state machine receives as input $H_v(t)$ and $o_{v,w}(t)$ for each $\{v,w\}\in E$, as well as any messages received, where $v$ can distinguish between neighbors (e.g.\ by a port numbering).
It then updates its state and possibly computes messages to send to some of its neighbors;
an algorithm is specified by (i) its state variables, (ii) determining how these computations are performed in line with the stated constraints, and (iii) specifying how $L_v(t)$ is computed when queried at time $t$, based on $H_v(t)$ and $v$'s state.

Recall that we desire a self-stabilizing solution, which is typically modelled by memory (including network buffers) having an arbitrary initial state, and the system needing to converge to some desirable behavior within bounded time.
Analogously, we assume that the memory of the state machine of $v\in V$ is in an arbitrary state at time $0$.
However, for convenience we assume that no spurious messages are in transit anymore, i.e., any message received at time $0$ or later has been computed in accordance with the nodes' states and the specified algorithm.
As end-to-end communication delays are bounded, this assumption becomes valid no later than $\max_{e\in E}\{d_e\}$ time after transient faults cease.

%###
\paragraph*{Notation.}
%###
We denote by $\N$ the non-negative integers.
For $k\in \N$, we denote $[k]_0:=[0,k]\cap \N$ and $[k]:=[k]_0\setminus \{0\}$.
When we write $x=y\pm z$, this is equivalent to $|x-y|\le z$.
A \emph{walk} is any sequence of nodes $W=v_0,\ldots,v_{\ell}$ such that for each $i\in [\ell]$ it holds that $\{v_{i-1},v_i\}\in E$.
If the graph is weighted with edge weights $\omega$, then the walk has weight $\omega(W):=\sum_{i=1}^{\ell}\omega(v_{i-1},v_i)$.
The set of walks from $v\in V$ to $w\in V$ is denoted by $W_{v,w}$.
Analogous definitions are used for directed graphs.
If for a directed edge $e=(v,w)$ a value $\mathrm{val}_e$ is not defined, but $\mathrm{val}_{\{v,w\}}$ is, $\mathrm{val}_e$ is interpreted as $\mathrm{val}_{\{v,w\}}$.

Throughout the paper, unless stated otherwise, we assume that all considered times lie within an interval $[t,t+T]$.
We will prove a bound that $T$ must satisfy in order to ensure that the stabilization time $S\le T/2$, implying that we can cover $\Rp$ by the overlapping intervals $[iT/2+S,iT/2+T]$, $i\in \N$, so that for each interval we can apply the derived bounds on skews etc.\ during $[iT/2+S,iT/2+T]\subseteq [(i+1)T/2,iT/2+T]$.
This covers the interval $[S,\infty)$, i.e., establishes all bounds at times $t\ge S$, as required for an algorithm with stabilization time $S$.

\section{Basic GCS Algorithm}\label{sec:algo}

In the following, we fix $T\in \Rp$ and will specify the algorithm $\A(T)$.
To do so, we follow the approach of first specifying the key properties the algorithm needs to satisfy to achieve strong bounds on the local skew $\Ll$.
Once specified, implementing these properties is extremely straightforward.

\subsection*{Conditions and Triggers}

We start by expressing the desired conditions in terms of the logical clock offsets to neighbors, which nodes do not have access to.
In prior work, it was assumed that there is a known bound $\Delta_{\{v,w\}}$ such that $|o_{v,w}(t)|\le \Delta_{\{v,w\}}$ for \emph{all} $t\in \Rp$.
The imposed conditions then would require a node to advance its logical clock at rate $\frac{dH_v}{dt}(t)$ if for some $s\in\N$ (i) there is some neighbor whose logical clock lags by $2s\Delta_{\{v,w\}}$ behind and (ii) no neighbor's clock is by $2s\Delta_{\{v,w\}}$ ahead of $L_v(t)$.
Similarly, the logical clock rate should be $(1+\mu)\frac{dH_v}{dt}(t)$ for a choosable parameter $\mu>\vartheta-1$ if (i) there is some neighbor whose logical clock is by more than $(2s+1)\Delta_{\{v,w\}}$ ahead and (ii) no neighbor's clock is by more than $(2s+1)\Delta_{\{v,w\}}$ behind $L_v(t)$.
The ``separation'' of these rules by an offset of $\Delta_{\{v,w\}}$ makes it possible to simultaneously implement both of them despite nodes having access only to inaccurate estimates $o_{\{v,w\}}(t)$ of $L_v(t)-L_w(t)$ for each neighbor $w$.

We seek to replace $\Delta_{\{v,w\}}$ by the much smaller $\delta_{\{v,w\}}:=\delta_{\{v,w\}}(T)$, but doing this naively entails that the estimates cannot be used to implement the conditions.
Instead, we must accept that the conditions that are realized incur a large bias of up to $\Delta_{\{v,w\}}$ (where $\Delta_{\{v,w\}}$ is not known).
However, as we analyze a time interval of length $T$, we have the guarantee that the error of the estimates \emph{changes} little.

With this in mind, for an arbitrary $a\in \Rp$, let us fix a time interval $[a,a+T]$ throughout this section.
All times will be from this interval only, so that for any considered times $t$ and $t'$ and $\{v,w\}\in E$, it holds that $|e_{v,w}(t')-e_{v,w}(t)|< \delta_{\{v,w\}}(T)$ and $|e_{v,w}(t')+e_{w,v}(t)|< \delta_{\{v,w\}}(T)$.
For this interval, for each $\{v,w\}\in E$ we can fix a ``nominal'' offset around which the estimation error $e_{\{v,w\}}(t)$ varies.
\begin{definition}[Nominal Offset]
For $\{v,w\}\in E$, set $O_{v,w}:=(e_{v,w}(a+T/2)-e_{w,v}(a+T/2))/2$.
\end{definition}
Note that this definition ensures not only that $|e_{v,w}(t)-O_{v,w}|<\delta_{\{v,w\}}$ for all $t\in [a,a+T]$, but also that $O_{v,w}=-O_{w,v}$.
Thus, if $L_v(t)-L_w(t)-O_{v,w}=0$, then also $L_w(t)-L_v(t)-O_{w,v}=-(L_v(t)-L_w(t)-O_{v,w})=0$.
In this sense, $O_{v,w}$ is going to be the ``nominal'' offset:
our conditions will be such that $L_v(t)-L_w(t)=O_{v,w}$ is considered the ``desired'' state of edge $\{v,w\}$.
This leads to the following variation of the slow and fast conditions introduced by Kuhn and Oshman~\cite{kuhn09reference}.
\begin{definition}[Slow Condition]\label{def:sc}
The \emph{slow condition} holds at $v\in V$ at time $t\in \Rp$ if and only if there is some \emph{level} $s\in \N$ such that
\begin{align*}
\exists \{v,w\}\in E\colon L_v(t)-L_w(t)-O_{v,w} &\ge 4s\delta_{\{v,w\}}\\
\forall \{v,w\}\in E\colon L_v(t)-L_w(t)-O_{v,w} &\ge -4s\delta_{\{v,w\}}.
\end{align*}
\end{definition}

\begin{definition}[Fast Condition]\label{def:fc}
The \emph{fast condition} holds at $v\in V$ at time $t\in \Rp$ if and only if there is some \emph{level} $s\in \N$ such that
\begin{align*}
\exists \{v,w\}\in E\colon L_v(t)-L_w(t)-O_{v,w} &\le -(4s+2)\delta_{\{v,w\}}\\
\forall \{v,w\}\in E\colon L_v(t)-L_w(t)-O_{v,w} &\le (4s+2)\delta_{\{v,w\}}.
\end{align*}
\end{definition}
Intuitively, the difference of the above conditions to~\cite{kuhn09reference} is that we re-adjust the goal from seeking to minimize the skew on edge $\{v,w\}\in E$ towards minimizing the \emph{apparent} skew when considering $O_{v,w}$ the baseline.
From the point of view of the measurement on this particular edge, this baseline is the best we can hope for without knowledge of $O_{v,w}$, as with $L_v(t)-L_w(t)=O_{v,w}$ we might have $o_{v,w}(t)=o_{w,v}(t)=0$, i.e., neither $v$ nor $w$ observe any skew on the edge.

Next, we translate these conditions into ``triggers'' based on the nodes' estimates of clock offsets.
The triggers are chosen such that if the slow resp.\ fast condition holds, then the corresponding trigger holds, too.
\begin{definition}[Slow Trigger]\label{def:st}
The \emph{slow trigger} holds at $v\in V$ at time $t\in \Rp$ if and only if there is some \emph{level} $s\in \N$ such that
\begin{align*}
\exists \{v,w\}\in E\colon o_{v,w}(t) &> (4s-1)\delta_{\{v,w\}}\\
\forall \{v,w\}\in E\colon o_{v,w}(t) &> -(4s+1)\delta_{\{v,w\}}.
\end{align*}
\end{definition}

\begin{definition}[Fast Trigger]\label{def:ft}
The \emph{fast trigger} holds at $v\in V$ at time $t\in \Rp$ if and only if there is some \emph{level} $s\in \N$ such that
\begin{align*}
\exists \{v,w\}\in E\colon o_{v,w}(t) &< -(4s+1)\delta_{\{v,w\}}\\
\forall \{v,w\}\in E\colon o_{v,w}(t) &< (4s+3)\delta_{\{v,w\}}.
\end{align*}
\end{definition}

\begin{lemma}\label{lem:implies}
If the slow (fast) condition holds at $v\in V$ at time $t\in \Rp$, then the slow (fast) trigger holds at $v\in V$ at time $t\in \Rp$.
\end{lemma}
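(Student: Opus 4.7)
The plan is to reduce both implications to a single bookkeeping fact: for every edge $\{v,w\}\in E$ and every $t\in[a,a+T]$, the instantaneous estimation error is within $\delta_{\{v,w\}}$ of the nominal offset, i.e.
\begin{equation*}
|e_{v,w}(t)-O_{v,w}| < \delta_{\{v,w\}}.
\end{equation*}
This is essentially the remark the author makes right after \Cref{def:sc}'s setup, but I would first derive it explicitly. Writing $e_{v,w}(t)-O_{v,w}=\tfrac12(e_{v,w}(t)-e_{v,w}(a+T/2))+\tfrac12(e_{v,w}(t)+e_{w,v}(a+T/2))$ and applying the two model bounds on the stability and approximate antisymmetry of $e_{v,w}$ (each strictly less than $\delta_{\{v,w\}}$) to the two summands, the triangle inequality yields the claimed strict bound.

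With this in hand, I would rewrite the offset estimate as
\begin{equation*}
o_{v,w}(t) = L_v(t)-L_w(t)-e_{v,w}(t) = \bigl(L_v(t)-L_w(t)-O_{v,w}\bigr) + \bigl(O_{v,w}-e_{v,w}(t)\bigr),
\end{equation*}
where the second summand lies strictly in $(-\delta_{\{v,w\}},\delta_{\{v,w\}})$ by the first step. Now the two implications follow from matching the levels: the slow (resp.\ fast) trigger is invoked with the same $s\in\N$ for which the slow (resp.\ fast) condition holds.

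For the slow case, if the existential clause of \Cref{def:sc} gives $L_v(t)-L_w(t)-O_{v,w}\ge 4s\delta_{\{v,w\}}$, then adding the strictly $>-\delta_{\{v,w\}}$ correction yields $o_{v,w}(t)>(4s-1)\delta_{\{v,w\}}$, matching the existential clause of \Cref{def:st}. The universal clause $L_v(t)-L_w(t)-O_{v,w}\ge -4s\delta_{\{v,w\}}$ similarly gives $o_{v,w}(t)>-(4s+1)\delta_{\{v,w\}}$. For the fast case, the correction $O_{v,w}-e_{v,w}(t)<\delta_{\{v,w\}}$ turns the existential bound $\le -(4s+2)\delta_{\{v,w\}}$ into $<-(4s+1)\delta_{\{v,w\}}$ and the universal bound $\le (4s+2)\delta_{\{v,w\}}$ into $<(4s+3)\delta_{\{v,w\}}$, exactly as required by \Cref{def:ft}.

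There is no real obstacle here; the proof is a direct consequence of the $\pm\delta_{\{v,w\}}$ slack baked into the definitions of the triggers relative to the conditions. The only subtlety worth flagging is the handling of strict versus non-strict inequalities: the triggers are stated with strict inequalities, and these come out correctly precisely because the model assumptions on $|e_{v,w}(t')-e_{v,w}(t)|$ and $|e_{v,w}(t')+e_{w,v}(t)|$ are themselves strict, so the derived bound $|e_{v,w}(t)-O_{v,w}|<\delta_{\{v,w\}}$ is strict as well.
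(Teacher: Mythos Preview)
Your proposal is correct and follows essentially the same approach as the paper's proof: both rely on the key fact $|e_{v,w}(t)-O_{v,w}|<\delta_{\{v,w\}}$, then add or subtract this slack from the condition inequalities to obtain the trigger inequalities at the same level $s$. The only difference is that you spell out the derivation of this bound via the decomposition $e_{v,w}(t)-O_{v,w}=\tfrac12(e_{v,w}(t)-e_{v,w}(a+T/2))+\tfrac12(e_{v,w}(t)+e_{w,v}(a+T/2))$, whereas the paper simply asserts it right after the definition of $O_{v,w}$ and invokes it in the proof via the step $|O_{v,w}-e_{v,w}(t)|<\delta_{\{v,w\}}$.
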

\begin{proof}
Let $s$ be a level for which the slow condition holds at $v\in V$ at time $t$.
Then there is some $\{v,w\}\in E$ such that
\begin{align*}
o_{v,w}(t)&= L_v(t)-L_w(t)-e_{v,w}(t)\\
&\ge L_v(t)-L_w(t)-O_{v,w}-|O_{v,w}-e_{v,w}(t)|\\
&>L_v(t)-L_w(t)-O_{v,w}-\delta_{\{v,w\}}\\
&\ge (4s-1)\delta_{\{v,w\}}.
\end{align*}
Similarly, for all $\{v,w\}\in E$, the condition implies that
\begin{equation*}
o_{v,w}(t)> L_v(t)-L_w(t)-O_{v,w}-\delta_{\{v,w\}} \ge -(4s+1)\delta_{\{v,w\}},
\end{equation*}
i.e., the slow trigger holds at $v$ at time $t$.

Now let $s$ be a level for which the fast condition holds at $v\in V$ at time $t$.
Then there is some $\{v,w\}\in E$ such that
\begin{equation*}
o_{v,w}(t)< L_v(t)-L_w(t)-O_{v,w}+\delta_{\{v,w\}}< -(4s+1)\delta_{\{v,w\}}.
\end{equation*}
Similarly, for all $\{v,w\}\in E$, the condition implies that
\begin{equation*}
o_{v,w}(t)< L_v(t)-L_w(t)-O_{v,w}+\delta_{\{v,w\}}< (4s+3)\delta_{\{v,w\}},
\end{equation*}
i.e., the fast trigger holds at $v$ at time $t$.
\end{proof}

\begin{lemma}\label{lem:mutex}
The slow and fast trigger are mutually exclusive.
\end{lemma}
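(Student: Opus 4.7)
The plan is a direct contradiction argument from the quantifier structure of the two triggers. Suppose, for contradiction, that at some node $v$ and time $t$ the slow trigger holds at level $s_1 \in \N$ and the fast trigger holds at level $s_2 \in \N$ simultaneously. Pick witnesses for the two existential clauses: an edge $\{v,w_1\}$ with $o_{v,w_1}(t) > (4s_1-1)\delta_{\{v,w_1\}}$ (from the slow trigger) and an edge $\{v,w_2\}$ with $o_{v,w_2}(t) < -(4s_2+1)\delta_{\{v,w_2\}}$ (from the fast trigger).

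Now I would cross-apply the universal clauses to these witness edges. Applying the universal clause of the fast trigger to edge $\{v,w_1\}$ gives $o_{v,w_1}(t) < (4s_2+3)\delta_{\{v,w_1\}}$, which together with the slow witness forces
\begin{equation*}
(4s_1-1)\delta_{\{v,w_1\}} < (4s_2+3)\delta_{\{v,w_1\}},
\end{equation*}
i.e., $s_1 \le s_2$ (using $\delta_{\{v,w_1\}}>0$, which follows from the model assumption $c\delta_e \ge (\beta-\alpha)d_e$). Symmetrically, the universal clause of the slow trigger applied to edge $\{v,w_2\}$ gives $o_{v,w_2}(t) > -(4s_1+1)\delta_{\{v,w_2\}}$, which combined with the fast witness yields
\begin{equation*}
-(4s_1+1)\delta_{\{v,w_2\}} < -(4s_2+1)\delta_{\{v,w_2\}},
\end{equation*}
i.e., $s_1 > s_2$. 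These two inequalities contradict each other, completing the argument.

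There is no real obstacle here: the lemma is essentially a design check confirming that the $\delta_{\{v,w\}}$ gap between the threshold $(4s-1)\delta_{\{v,w\}}$ used by the slow trigger and $(4s+3)\delta_{\{v,w\}}$ used by the fast trigger (and likewise between the lower thresholds) is wide enough to prevent the two triggers from ever firing at the same node and time. The only subtlety worth spelling out is that one must compare the thresholds on a \emph{common} edge — which is exactly what the universal clauses allow — and that $\delta_{\{v,w\}}$ is strictly positive so that the inequalities between the integer levels can actually be extracted.
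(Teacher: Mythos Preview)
Your proof is correct and essentially the same as the paper's. The paper organizes it as a case split on whether the fast-trigger level $s'$ satisfies $s'\ge s$ or $s'<s$ (showing in each case which clause of the fast trigger fails), whereas you phrase it as a contradiction by extracting $s_1\le s_2$ and $s_1>s_2$; the underlying inequalities---pairing the slow universal with the fast existential witness, and the slow existential witness with the fast universal---are identical.
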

\begin{proof}
Let $s$ be a level for which the slow trigger holds at $v\in V$ at time $t$ and consider level $s'\in \N$.
If $s'\ge s$, observe that the slow trigger entails for all $\{v,w\}\in E$ that
\begin{equation*}
o_{v,w}(t)> -(4s+1)\delta_{\{v,w\}}\ge -(4s'+1)\delta_{\{v,w\}},
\end{equation*}
so the first requirement of the fast trigger cannot be satisfied on level $s'$.
If $s'<s$, the slow trigger implies that there is some $\{v,w\}\in E$ such that
\begin{equation*}
o_{v,w}(t)> (4s-1)\delta_{\{v,w\}}\ge (4s'+3)\delta_{\{v,w\}},
\end{equation*}
so the second requirement of the fast trigger cannot be satisfied on level $s'$.
\end{proof}

\subsection*{Algorithm}

As stated above, we wish for our algorithm to run ``fast,'' i.e., increase $L_v$ at rate $(1+\mu)\frac{dH_v}{dt}$ whenever the fast trigger holds, and run ``slow,'' i.e., increase $L_v$ at rate $\frac{dH_v}{dt}$ whenever the slow trigger holds.
If neither applies, any rate between the two extremes is a valid choice.
As the estimates $o_{v,w}$ are piece-wise constant and any change of an estimate causes $v$ to execute a computational step of its state machine, this is straightforward to realize, see \Cref{alg:gcs_basic}.
\begin{algorithm2e}[t]
	\small
	\DontPrintSemicolon
	\caption{Basic GCS algorithm, code at $v\in V$. It will converge to small skews for any initial values of the variables. However, the time to converge scales with $\G(0)/\mu$.\label{alg:gcs_basic}}
	\textbf{variables:} $L_v\in \Rp$, $H_v\in \Rp$, and $r_v\in \{1,1+\mu\}$\medskip\\
	
	When executing a computational step at time $t$:\\
	$L_v\gets L_v+r_v\cdot (H_v(t)-H_v)$\\
	$H_v\gets H_v(t)$\\
	\lIf{fast trigger holds at $v$}{
		$r_v\gets 1+\mu$
	}
	\lElse{
		$r_v\gets 1$
	}\medskip
	
	When $H_v(t)\bmod  \N = 0$:\\
	execute a computational step\medskip
	
	When queried for $L_v$ at time $t$:\\
	\KwRet{$L_v+r_v\cdot (H_v(t)-H_v)$}
\end{algorithm2e}
Clearly, \Cref{alg:gcs_basic} specifies an algorithm in the sense of \Cref{sec:model}.
For simplicity, the algorithm checks whether $r_v$ is consistent with the offset estimates once every time unit (according to $H_v$).
However, any suitable frequency for such consistency checks is valid.
In the following, we assume that $a\ge 1$, i.e., each node checked for state consistency at least once and $r_v=1+\mu$ if and only if the fast trigger holds at $v$.
\begin{lemma}\label{lem:implements}
Defining $r_v(t):=1+\mu$ if the fast trigger holds at $v\in V$ at time $t$ and $r_v(t):=1$ otherwise, with \Cref{alg:gcs_basic} we have that $L_v(t)=L_v(1)+\int_1^t r_v(\tau)\frac{dH_v}{dt}(\tau)\,d\tau$.
In particular, for any time $t$ at which $r_v$ does not change, \Cref{alg:gcs_basic} ensures that
\begin{equation*}
\frac{dL_v}{dt}(t)=r_v(t)\cdot\frac{dH_v}{dt}(t)=\begin{cases}
(1+\mu)\cdot\frac{dH_v}{dt}(t) &\mbox{if the fast trigger holds at $v$ at time $t$}\\
\frac{dH_v}{dt}(t) &\mbox{else.}
\end{cases}
\end{equation*}
\end{lemma}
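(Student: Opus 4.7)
The plan is to argue by breaking the real line into the intervals delimited by the discrete computational steps of $v$'s state machine and then showing that both sides of the claimed equation evolve identically on each such interval and match at its endpoints. First I would observe that by the computational model of \Cref{sec:model}, the set of computational step times of $v$ in $[1,t]$ is finite; order them as $1\le t_1<t_2<\ldots<t_k\le t$. Between two consecutive step times $t_i$ and $t_{i+1}$ (and between $t_k$ and $t$), the stored variables $L_v$, $H_v$, $r_v$ are not written, so the value returned by a query at time $\tau\in (t_i,t_{i+1})$ is exactly $L_v^{(i)}+r_v^{(i)}\cdot(H_v(\tau)-H_v(t_i))$, where $L_v^{(i)},r_v^{(i)}$ denote the stored values right after the step at $t_i$. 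Since $H_v$ is differentiable, this equals $L_v^{(i)}+r_v^{(i)}\int_{t_i}^\tau \frac{dH_v}{dt}(s)\,ds$.

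The key step is to identify the stored $r_v^{(i)}$ with $r_v(\tau)$ as defined in the lemma for all $\tau\in [t_i,t_{i+1})$. The fast trigger (\Cref{def:ft}) depends only on the estimates $o_{v,w}(\tau)$ and the fixed parameters $\delta_{\{v,w\}}$, $O_{v,w}$. Because each $o_{v,w}$ is piece-wise constant and a change of $o_{v,w}$ triggers a computational step of $v$, the truth value of the fast trigger at $v$ is constant on $[t_i,t_{i+1})$. The assumption $a\ge 1$ (together with the periodic consistency check driven by $H_v(\tau)\bmod \N=0$) ensures that at $t_i$ the stored $r_v^{(i)}$ equals $1+\mu$ exactly if the fast trigger holds. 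Hence $r_v^{(i)}=r_v(\tau)$ for every $\tau\in [t_i,t_{i+1})$, and the query value on this interval coincides with $L_v^{(i)}+\int_{t_i}^{\tau} r_v(s)\frac{dH_v}{dt}(s)\,ds$.

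At a step time $t_{i+1}$, the update $L_v\gets L_v+r_v\cdot(H_v(t_{i+1})-H_v)$ performed before $r_v$ is overwritten yields $L_v^{(i+1)}=L_v^{(i)}+r_v^{(i)}(H_v(t_{i+1})-H_v(t_i))=L_v^{(i)}+\int_{t_i}^{t_{i+1}}r_v(s)\frac{dH_v}{dt}(s)\,ds$, so the identity extends across the step. Telescoping across $i=1,\ldots,k$ and then adding the final partial interval up to $\tau=t$ yields $L_v(t)=L_v(1)+\int_1^t r_v(\tau)\frac{dH_v}{dt}(\tau)\,d\tau$. The ``in particular'' clause then follows by the fundamental theorem of calculus: at any time $t$ at which $r_v$ does not change, $r_v$ is constant on a neighborhood of $t$, so $\frac{dL_v}{dt}(t)=r_v(t)\cdot \frac{dH_v}{dt}(t)$, which yields the two case distinction depending on whether the fast trigger holds.

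The main obstacle, and essentially the only non-trivial point, is the bookkeeping that shows the state variable $r_v$ at $v$ truly tracks the ``logical'' rate $r_v(\tau)$ of the lemma at all times, not only at the moments of state updates. This hinges on the two observations that (i) the fast trigger is a function of $(o_{v,w})_{w}$ alone and (ii) every change of any $o_{v,w}$ forces a computational step in which $r_v$ is re-evaluated; otherwise the integral identity would be off by the small ``stale'' intervals between a trigger change and the next step. Once this is articulated, the remainder is a routine verification.
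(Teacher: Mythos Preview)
Your proposal is correct and follows essentially the same approach as the paper: partition time by the discrete computational steps of $v$, argue that the stored rate variable tracks the fast trigger because every change of an $o_{v,w}$ forces a recomputation, then telescope the piecewise-constant integral and apply the fundamental theorem of calculus. One inconsequential slip: the fast trigger (\Cref{def:ft}) depends only on the $o_{v,w}(\tau)$ and the $\delta_{\{v,w\}}$, not on $O_{v,w}$, so that parameter should be dropped from your list; this does not affect the argument.
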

\begin{proof}
Whether the fast trigger holds at $v\in V$ can only change when $o_{v,w}$ changes.
Since $o_{v,w}$ is piece-wise constant and any change results in execution of the code given in \Cref{alg:gcs_basic}, after the first computational step of $L_v$
the variable $r_v$ equals $1+\mu$ if and only if the fast trigger holds, and $r_v=1$ else. 
The first such step is taken at a time $t_v\le 1$.
Hence, by induction on the computational steps of $v$ we see that $L_v(t)=L_v(t_v)+\int_{t_v}^t r_v(\tau)\frac{dH}{dt}(\tau)\,d\tau$, where we use that $\int_{t_0}^{t_1}r\cdot\frac{dH_v}{dt}(\tau)\,d\tau=r\cdot(H_v(t_1)-H_v(t_0))$ for any $t_1>t_0\ge t_v$ and constant $r$.
In particular, for $t\ge 1\ge t_v$, we get that indeed $L_v(t)=L_v(1)+\int_{1}^t r_v(\tau)\frac{dH}{dt}(\tau)\,d\tau$.
As taking the derivative of an integral after its upper integration boundary equals the integrand evaluated at this boundary, the remaining claim follows.
\end{proof}
Rate bounds on $L_v$ are immediate from the above relation between $L_v$ and $H_v$.
\begin{corollary}\label{cor:rates}
\Cref{alg:gcs_basic} guarantees $\alpha=1$ and $\beta=(1+\mu)\vartheta$.
\end{corollary}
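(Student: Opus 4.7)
The plan is to derive the rate bounds on $L_v$ as an immediate consequence of \Cref{lem:implements}, which already expresses $L_v$ as an integral of $r_v(\tau)\,\frac{dH_v}{d\tau}(\tau)$. The only work left is to bound the integrand pointwise from above and below using the two model assumptions on the hardware clock rate and the two admissible values of $r_v$.

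Concretely, for any $t'\ge t \ge 1$, \Cref{lem:implements} gives
\begin{equation*}
L_v(t')-L_v(t)=\int_{t}^{t'}r_v(\tau)\,\frac{dH_v}{d\tau}(\tau)\,d\tau.
\end{equation*}
Since $r_v(\tau)\in \{1,1+\mu\}$ and the model guarantees $1\le \frac{dH_v}{d\tau}(\tau)\le \vartheta$, the integrand is always at least $1\cdot 1=1$ and at most $(1+\mu)\vartheta$. Integrating over $[t,t']$ yields $(t'-t)\le L_v(t')-L_v(t)\le (1+\mu)\vartheta(t'-t)$, which is exactly the definition of $\alpha=1$ and $\beta=(1+\mu)\vartheta$.

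For the case $t<1$, the argument is essentially the same: between the (at most one) initial computational step at time $t_v\le 1$ and any later query time, the stored value of $r_v$ is constant and \Cref{alg:gcs_basic}'s query returns $L_v+r_v\cdot (H_v(t)-H_v)$, whose growth rate with respect to $t$ is also $r_v\,\frac{dH_v}{dt}(t)\in [1,(1+\mu)\vartheta]$; prior to $t_v$, the same formula still applies, with $r_v$ equal to whatever value the adversary placed in the state. Thus the rate bounds hold throughout $\Rp$.

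I do not expect any real obstacle here: once \Cref{lem:implements} has been established, the corollary is a one-line multiplication of the two rate ranges. The only subtlety worth flagging is to be explicit that the argument applies almost everywhere, since $r_v$ is piecewise constant and may jump at computational steps, but the integral form of $L_v$ in \Cref{lem:implements} absorbs these jumps cleanly and so no additional care is needed.
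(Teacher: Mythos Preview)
Your proposal is correct and takes essentially the same approach as the paper, which simply states that the rate bounds are immediate from the relation between $L_v$ and $H_v$ established in \Cref{lem:implements}. Your treatment is in fact more explicit than the paper's one-line justification, and your additional remark about the pre-$t_v$ regime (where $r_v$ still lies in $\{1,1+\mu\}$ by the variable declaration) is a nice touch that the paper leaves implicit.
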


\subsection*{Potentials}

At the heart of analyzing \Cref{alg:gcs_basic} lie the following potential functions.
\begin{definition}\label{def:potential}
For $s\in \N$, we define the \emph{level-$s$ potential at $v\in V$} as
\begin{equation*}
\Psi_v^s(t):=\sup_{w\in V}\left\{\sup_{v_0,\ldots,v_{\ell} \in W_{v,w}}\left\{L_w(t)-L_v(t)-\sum_{i=1}^{\ell}(4s\delta_{\{v_{i-1},v_i\}}-O_{v_{i-1},v_i})\right\}\right\}
\end{equation*}
and the \emph{level-$s$ potential} as $\Psi^s(t):=\sup_{v\in V}\{\Psi_v^s(t)\}$.
\end{definition}
The key difference between this potential and prior work is the inclusion of the term $O_{v_{i-1},v_i}$, which is necessary to account for the inclusion of the same term in the slow and fast conditions.
However, this calls into question whether this potential can yield useful results.
In fact, in general it is not even bounded, as there might be a cycle in which $4s\delta_{\{v_{i-1},v_i\}}-O_{v_{i-1},v_i}$ sums to a negative value.
\begin{definition}\label{def:graph}
For $s\in 1/2\cdot\N$, define the \emph{level-$s$} graph as the weighted directed graph $G^s:=(V,\vec{E},\omega^s)$, where $\vec{E}:=\bigcup_{\{v,w\}\in E}\{(v,w),(w,v)\}$ and $\omega^s(v,w):=4s\delta_{\{v,w\}}-O_{v,w}$ for all $(v,w)\in \vec{E}$.
Let $s_0\in \N$ be minimal such that for $s\ge s_0+1/2$, $G^s$ contains no negative (directed) cycle.
For $s\ge s_0+1/2$, we denote by $d^s(v,w)$ the distance from $v$ to $w$ in $G^s$ and by $\W^s:=\max_{v,w\in V}\{d^s(v,w)\}$ the diameter of $G^s$.
\end{definition}
We will reason about levels $s>s_0$.
Before doing so, let us first point out that $s_0$ is bounded.
We use the following shorthand for this purpose.
\begin{definition}[Maximum Absolute Estimate Error]
For each edge $\{v,w\}\in E$, define $\Delta_{\{v,w\}}:=\max_{t\in [a,a+T]}\{|e_{v,w}(t)|,|e_{w,v}(t)|\}$.
% Denote by $\WDelta$ the diameter of the graph $(V,E,\Delta_e)$.
\end{definition}
\begin{lemma}\label{lem:s0}
It holds that $s_0\le \max_{\{v,w\}\in E}\{\lceil\Delta_{\{v,w\}}/(4\delta_{\{v,w\}})\rceil-1/2\}$.
\end{lemma}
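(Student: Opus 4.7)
The plan is to show that once $s$ is large enough to make every \emph{individual} edge weight of $G^s$ non-negative, no negative cycle can possibly exist, and to check that the threshold at which this happens matches the claimed bound.

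First I would pin down the relationship between $O_{v,w}$ and $\Delta_{\{v,w\}}$. From the definition $O_{v,w}=(e_{v,w}(a+T/2)-e_{w,v}(a+T/2))/2$ and the triangle inequality, $|O_{v,w}|\le(|e_{v,w}(a+T/2)|+|e_{w,v}(a+T/2)|)/2\le \Delta_{\{v,w\}}$. This is the only input from the modeling side that I need; everything else follows by arithmetic.

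Next, let $s^\star:=\max_{\{v,w\}\in E}\{\lceil\Delta_{\{v,w\}}/(4\delta_{\{v,w\}})\rceil-1/2\}$, the value claimed to upper-bound $s_0$. By the definition of $s_0$, it suffices to show that $G^s$ has no negative cycle for every $s\ge s^\star+1/2=\max_{\{v,w\}\in E}\{\lceil\Delta_{\{v,w\}}/(4\delta_{\{v,w\}})\rceil\}$. Fix such an $s$ and an arbitrary oriented edge $(v,w)\in\vec{E}$. Since $s\ge\lceil\Delta_{\{v,w\}}/(4\delta_{\{v,w\}})\rceil\ge \Delta_{\{v,w\}}/(4\delta_{\{v,w\}})$, we get $4s\delta_{\{v,w\}}\ge \Delta_{\{v,w\}}\ge O_{v,w}$, so $\omega^s(v,w)=4s\delta_{\{v,w\}}-O_{v,w}\ge 0$. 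As every edge weight is non-negative, no directed cycle can have negative total weight, completing the argument.

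There is essentially no obstacle here: the bound separates cleanly into a per-edge inequality because $\omega^s$ is additive and $s$ scales each edge's positive part uniformly. The only subtlety worth flagging in the write-up is the reason for the ``$-1/2$'' in the stated bound, namely that $s_0\in\N$ while the threshold at which the inequality turns non-negative is the half-integer $s^\star+1/2$; this is exactly absorbed into the definition of $s_0$ via the condition $s\ge s_0+1/2$. No use of antisymmetry $O_{v,w}=-O_{w,v}$ or of cycle structure is needed for this crude bound—those ingredients would only matter if one wanted a \emph{tighter} bound that exploits cancellation along cycles.
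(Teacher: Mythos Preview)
Your proposal is correct and follows essentially the same approach as the paper's proof: bound $|O_{v,w}|\le \Delta_{\{v,w\}}$ via the definition of $O_{v,w}$ and the triangle inequality, then observe that for $s\ge \max_{\{v,w\}\in E}\lceil \Delta_{\{v,w\}}/(4\delta_{\{v,w\}})\rceil$ every edge weight $\omega^s(v,w)=4s\delta_{\{v,w\}}-O_{v,w}$ is non-negative, precluding any negative cycle. Your added remarks on the role of the $-1/2$ and on why antisymmetry is not needed here are accurate but go slightly beyond what the paper writes.
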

\begin{proof}
Let $s\ge \max_{\{v,w\}\in E}\{\lceil\Delta_{\{v,w\}}/(4\delta_{\{v,w\}})\rceil\}$.
Then for each $(v,w)\in \vec{E}$, we have that $\omega^s(v,w)\ge \Delta_{\{v,w\}}-O_{v,w}\ge \Delta_{\{v,w\}}-(|e_{v,w}(a+T/2)|+|e_{w,v}(a+T/2)|)/2\ge 0$, so $G^s$ cannot contain a negative cycle.
\end{proof}
For $s>s_0$, we have a well-defined distance function, allowing to simplify how $\Psi_v^s$ is expressed. 
\begin{lemma}\label{lem:valid}
For all $s_0< s\in \N$ and $v\in V$, it holds that
\begin{equation*}
\Psi_v^s(t)=\max_{w\in V}\left\{L_w(t)-L_v(t)-d^s(v,w)\right\}.
\end{equation*}
\end{lemma}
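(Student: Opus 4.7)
The plan is to unfold \Cref{def:potential} and show that, once negative cycles are ruled out, the inner supremum over walks collapses to a shortest-path distance in $G^s$. First, I would observe directly from \Cref{def:graph} that for any walk $W=v_0,v_1,\ldots,v_\ell$ from $v$ to $w$,
\begin{equation*}
\sum_{i=1}^{\ell}(4s\delta_{\{v_{i-1},v_i\}}-O_{v_{i-1},v_i}) = \sum_{i=1}^{\ell}\omega^s(v_{i-1},v_i) = \omega^s(W),
\end{equation*}
so \Cref{def:potential} rewrites as $\Psi_v^s(t)=\sup_{w\in V}\{L_w(t)-L_v(t)-\inf_{W\in W_{v,w}}\omega^s(W)\}$.

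Next, I would argue that $\inf_{W\in W_{v,w}}\omega^s(W)=d^s(v,w)$ whenever $s>s_0$. Since $s\in\N$ and $s>s_0$, we have $s\ge s_0+1\ge s_0+1/2$, so by \Cref{def:graph}, $G^s$ contains no negative (directed) cycle. The standard argument then applies: any walk from $v$ to $w$ that visits a vertex twice decomposes into a shorter walk from $v$ to $w$ together with a directed cycle, whose weight is $\ge 0$; removing the cycle thus does not increase the walk's weight. Iterating, every walk can be reduced to a simple path with weight no larger, so the infimum is attained over the finite set of simple $v$-$w$ paths and equals the shortest-path distance $d^s(v,w)$. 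Note that $d^s(v,w)$ is well-defined (finite) because $G^s$ is strongly connected—each undirected edge of $G$ contributes both orientations to $\vec E$, and $G$ itself is connected.

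Finally, since $V$ is finite, the outer supremum over $w\in V$ is a maximum, yielding
\begin{equation*}
\Psi_v^s(t) = \max_{w\in V}\{L_w(t)-L_v(t)-d^s(v,w)\},
\end{equation*}
as claimed. I do not expect any serious obstacle: the only nontrivial content is the walk-to-path reduction, which is a textbook fact and is exactly the reason the threshold $s_0$ was introduced in \Cref{def:graph}. The lemma should really be read as confirming that the definition of $s_0$ is precisely what makes the potential representable via distances, paving the way for subsequent arguments that track $\Psi_v^s$ through shortest-path reasoning on $G^s$.
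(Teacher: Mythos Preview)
Your proposal is correct and matches the paper's own proof essentially line for line: both rewrite the potential in terms of $\omega^s(W)$, use the absence of negative cycles for $s>s_0$ to collapse the infimum over walks to the shortest-path distance $d^s(v,w)$, and then take the (finite) maximum over $w$. You simply spell out the walk-to-path reduction and the finiteness of $V$ in slightly more detail than the paper does.
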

\begin{proof}
Consider any walk $W=v_0,\ldots,v_{\ell}\in W_{v,w}$ of weight $\omega^s(W)=\sum_{i=1}^{\ell}\omega^s(v_{i-1},v_i)$.
As $G^s$ contains no negative cycles, we know that this sum is bounded from below, and that the minimum is attained by a shortest path from $v$ to $w$.
The claim follows, as by definition
\begin{equation*}
\Psi_v^s(t)=\sup_{w\in V}\left\{\sup_{W \in W_{v,w}}\left\{L_w(t)-L_v(t)-\omega^s(W)\right\}\right\}.\qedhere
\end{equation*}
\end{proof}
As distances in $G^s$ can be negative even for $s>s_0$, $\Psi^s(t)$ can exceed $\G(t)$.
However, its deviation from $\G(t)$ is bounded by $\W^s$, which follows from the following helper statement.
\begin{lemma}\label{lem:reversal}
For all $v,w\in V$ and levels $s_0<s\in \N$, it holds that $-d^s(v,w)\le d^s(w,v)$.
\end{lemma}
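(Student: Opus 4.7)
The plan is to prove the equivalent inequality $d^s(v,w) + d^s(w,v) \ge 0$, exploiting that $s > s_0$ guarantees $G^s$ has no negative cycles.

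First, I would invoke \Cref{def:graph}: since $s \ge s_0 + 1/2$ (as $s \in \N$ and $s > s_0$), the distances $d^s(v,w)$ and $d^s(w,v)$ are well-defined and each is attained by some (shortest) directed walk in $G^s$. Let $P_1$ be a shortest walk from $v$ to $w$ with $\omega^s(P_1) = d^s(v,w)$, and $P_2$ a shortest walk from $w$ to $v$ with $\omega^s(P_2) = d^s(w,v)$.

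Next, I would concatenate these into a closed walk $C := P_1 \cdot P_2$ from $v$ to $v$, whose total weight is $\omega^s(C) = d^s(v,w) + d^s(w,v)$. The key step is to argue $\omega^s(C) \ge 0$. Any closed walk in a directed graph decomposes into a disjoint union of directed (simple) cycles: iteratively find a repeated vertex along the walk, extract the sub-walk between the two occurrences as a closed walk, and recurse; each extracted piece eventually becomes a simple cycle. Since by definition of $s_0$, $G^s$ contains no negative directed cycle for $s \ge s_0+1/2$, each cycle in the decomposition has non-negative weight, and summing yields $\omega^s(C) \ge 0$. Rearranging gives $-d^s(v,w) \le d^s(w,v)$, as desired.

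I do not anticipate a serious obstacle. The only subtle point is justifying the closed-walk-to-cycles decomposition cleanly; I might prefer to avoid formalizing it explicitly and instead argue directly by induction on the length of $C$: if $C$ is a simple cycle, apply the no-negative-cycle property; otherwise pick a repeated vertex, split $C$ into two shorter closed walks, and apply the induction hypothesis to both. Either presentation is short and elementary, so the lemma should be resolved in a few lines.
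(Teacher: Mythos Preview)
Your proposal is correct and follows the same approach as the paper: concatenate shortest $v\to w$ and $w\to v$ walks into a closed walk and use the absence of negative cycles to conclude $d^s(v,w)+d^s(w,v)\ge 0$. The paper compresses this into two lines and simply asserts that the concatenation ``forms a cycle'' whose weight is non-negative by definition of $s_0$; your explicit decomposition into simple cycles is merely a more careful justification of that step.
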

\begin{proof}
Let $P\in W_{v,w}$ and $Q\in W_{w,v}$ be shortest paths w.r.t.\ edge weights $\omega^s_e$.
As the concatenation of $Q$ and $P$ forms a cycle, the definition of $s_0$ implies that
\begin{equation*}
d^s(v,w)+d^s(w,v)=\omega^s(P)+\omega^s(Q)\ge 0.\qedhere
\end{equation*}
\end{proof}

Our goal is to bound the potentials $\Psi^s(t)$ for $s_0<s\in \N$ and suitable times $t$, as they readily imply bounds on $\G(t)$ and $\Ll(t)$.
\begin{corollary}\label{cor:global_bound}
For each $s_0<s\in 1/2\cdot \N$, we have that $|\G(t)-\Psi^s(t)|\le\W^s$.
\end{corollary}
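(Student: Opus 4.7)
The plan is to prove the two inequalities $\Psi^s(t) \le \G(t)+\W^s$ and $\Psi^s(t) \ge \G(t)-\W^s$ separately. Both reduce, after unfolding the definitions, to controlling a distance term $d^s(v,w)$ via \Cref{lem:reversal}, which supplies the needed bound even when $d^s$ takes negative values.

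For the upper bound, I would first invoke the rewriting $\Psi^s(t) = \max_{v,w\in V}\{L_w(t) - L_v(t) - d^s(v,w)\}$, i.e.\ \Cref{lem:valid} (its argument applies verbatim to half-integer levels $s > s_0$, since it only uses the absence of negative cycles in $G^s$ to replace the supremum over walks by the distance). For any maximizing pair $(v,w)$, the first summand is at most $\G(t)$ by the definition of the global skew, and for the second \Cref{lem:reversal} together with the definition of $\W^s$ yields $-d^s(v,w) \le d^s(w,v) \le \W^s$. Adding the two contributions gives $\Psi^s(t) \le \G(t) + \W^s$.

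For the lower bound, I would pick an ordered pair $(v^*,w^*)$ attaining $\G(t) = L_{v^*}(t) - L_{w^*}(t)$ and evaluate the max defining $\Psi_{w^*}^s$ at $w = v^*$, obtaining
\begin{equation*}
\Psi^s(t) \ge \Psi_{w^*}^s(t) \ge L_{v^*}(t) - L_{w^*}(t) - d^s(w^*,v^*) \ge \G(t) - \W^s,
\end{equation*}
where the last inequality is immediate from $d^s(w^*,v^*) \le \W^s$.

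The one point worth flagging is that $d^s$ is in general neither symmetric nor nonnegative, so $-d^s(v,w) \le \W^s$ does not hold by inspection; it is precisely here that \Cref{lem:reversal}, which exploits $s > s_0$ and hence the absence of negative cycles in $G^s$, is needed. Beyond this, the proof is essentially routine bookkeeping over the definitions.
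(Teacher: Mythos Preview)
Your proposal is correct and follows essentially the same argument as the paper: both directions are handled by choosing an appropriate maximizing pair and bounding the distance term, with \Cref{lem:reversal} supplying the key inequality $-d^s(v,w)\le d^s(w,v)$ needed for the upper bound on $\Psi^s(t)$. Your remark that \Cref{lem:valid} extends verbatim to half-integer levels is a useful observation, since the paper states that lemma only for integer $s$ while invoking it here for $s\in \tfrac12\N$.
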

\begin{proof}
Let $v,w\in V$ such that $\G(t)=L_w(t)-L_v(t)$.
By \Cref{lem:valid},
\begin{equation*}
\G(t)=L_w(t)-L_v(t)\le \Psi_v^s(t)+d^s(v,w)\le \Psi^s(t)+\W^s.
\end{equation*}
Now consider $v,w\in V$ satisfying $\Psi^s(t)=L_w(t)-L_v(t)-d^s(v,w)$.
By \Cref{lem:reversal},
\begin{equation*}
\Psi^s(t)=L_w(t)-L_v(t)-d^s(v,w)\le L_w(t)-L_v(t)+d^s(w,v)\le \G(t)+\W^s.\qedhere
\end{equation*}
\end{proof}
\begin{corollary}\label{cor:local_bound}
For $\{v,w\}\in E$, it holds that
$\Ll_{\{v,w\}}(t) \le \min_{s_0<s\in \N}\{\Psi^s(t)+4s\delta_{\{v,w\}}+|O_{v,w}|\}$.
\end{corollary}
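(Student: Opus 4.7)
The plan is to bound both $L_w(t)-L_v(t)$ and $L_v(t)-L_w(t)$ by the same quantity, which then bounds $\Ll_{\{v,w\}}(t) = |L_v(t)-L_w(t)|$. Fix any $s > s_0$. By \Cref{lem:valid}, $\Psi_v^s(t) \ge L_w(t)-L_v(t) - d^s(v,w)$, so
\begin{equation*}
L_w(t)-L_v(t) \le \Psi_v^s(t) + d^s(v,w) \le \Psi^s(t) + d^s(v,w).
\end{equation*}
Since $(v,w) \in \vec{E}$ is a direct edge, $d^s(v,w) \le \omega^s(v,w) = 4s\delta_{\{v,w\}} - O_{v,w}$. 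Symmetrically, applying \Cref{lem:valid} at $w$ and using that $(w,v) \in \vec{E}$ with $\omega^s(w,v) = 4s\delta_{\{w,v\}} - O_{w,v} = 4s\delta_{\{v,w\}} + O_{v,w}$ (by $\delta_{\{v,w\}}=\delta_{\{w,v\}}$ and the antisymmetry $O_{w,v} = -O_{v,w}$ built into the definition of the nominal offset), we obtain
\begin{equation*}
L_v(t)-L_w(t) \le \Psi^s(t) + 4s\delta_{\{v,w\}} + O_{v,w}.
\end{equation*}

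Combining both inequalities, $\Ll_{\{v,w\}}(t) \le \Psi^s(t) + 4s\delta_{\{v,w\}} + |O_{v,w}|$. As this holds for every $s_0 < s \in \N$, taking the minimum over such $s$ yields the claim.

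There is no real obstacle here; the statement is essentially a direct consequence of \Cref{lem:valid} together with the antisymmetry $O_{v,w} = -O_{w,v}$ that was deliberately enforced in the definition of $O_{v,w}$. The only care needed is to apply the potential bound on \emph{both} sides (i.e., at $v$ and at $w$) so that the $O_{v,w}$ term appears with both signs and collapses into $|O_{v,w}|$.
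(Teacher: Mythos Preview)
Your proof is correct and follows essentially the same approach as the paper: apply \Cref{lem:valid} and bound $d^s(v,w)$ by the single-edge weight $\omega^s(v,w)=4s\delta_{\{v,w\}}-O_{v,w}$. The paper compresses your two symmetric cases into a single line via ``w.l.o.g.\ $\Ll_{\{v,w\}}(t)=L_w(t)-L_v(t)$'' and then uses $-O_{v,w}\le |O_{v,w}|$, whereas you spell out both directions and invoke $O_{w,v}=-O_{v,w}$ explicitly; the content is the same.
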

\begin{proof}
Let $s_0<s\in \N$, $\{v,w\}\in E$, and w.l.o.g.\ $\Ll_{\{v,w\}}(t)=L_w(t)-L_v(t)$.
By \Cref{lem:valid},
\begin{equation*}
\Ll_{\{v,w\}}(t)=L_w(t)-L_v(t)\le \Psi_v^s(t)+d^s(v,w)
\le \Psi^s(t)+\omega^s(v,w)
\le \Psi^s(t)+4s\delta_{\{v,w\}}+|O_{v,w}|.\qedhere
\end{equation*}
\end{proof}
When increasing the level $s$, the potential function can only decrease, as we substract more.
\begin{lemma}\label{lem:psi_half_bound}
For each $s_0<s<s'$, $s,s'\in 1/2\cdot \N$, $v\in V$, and time $t$, we have that $\Psi^{s'}_v(t)\le \Psi^s_v(t)$.
\end{lemma}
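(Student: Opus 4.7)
The plan is a very short monotonicity argument directly from Definition~\ref{def:potential}. The key observation is that the level parameter $s$ enters the edge weights $\omega^s(v,w)=4s\delta_{\{v,w\}}-O_{v,w}$ only through the non-negative term $4s\delta_{\{v,w\}}$, so increasing $s$ can only increase each edge weight, and therefore only increase the weight of every walk.

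First, I would fix an arbitrary walk $W=v_0,v_1,\ldots,v_\ell\in W_{v,w}$ and compute
\begin{equation*}
\omega^{s'}(W)-\omega^{s}(W)=4(s'-s)\sum_{i=1}^{\ell}\delta_{\{v_{i-1},v_i\}}\ge 0,
\end{equation*}
using $s'>s$ and $\delta_{\{v_{i-1},v_i\}}\ge 0$. Hence for every $w\in V$ and every walk $W\in W_{v,w}$,
\begin{equation*}
L_w(t)-L_v(t)-\omega^{s'}(W)\le L_w(t)-L_v(t)-\omega^{s}(W).
\end{equation*}
Taking the supremum on both sides, first over walks $W\in W_{v,w}$ and then over $w\in V$, yields $\Psi_v^{s'}(t)\le \Psi_v^{s}(t)$ by the definition of the potential.

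The only caveat worth noting is that, for $s>s_0$, $G^s$ has no negative cycles, so the suprema above are actually maxima attained by simple paths (equivalently, by distances $d^s$ as in \Cref{lem:valid}); this guarantees that both sides are finite real numbers, so the inequality between suprema is meaningful. No case distinction between integer and half-integer levels is needed, since the argument works uniformly from the walk-based definition. There is no genuine obstacle here—the lemma is essentially a tautology once one reads off that $\omega^s$ is coordinate-wise non-decreasing in $s$—so I would keep the proof to just these few lines.
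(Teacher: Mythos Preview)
Your argument is correct and matches the paper's proof: both observe that $\omega^{s'}(v,w)\ge \omega^{s}(v,w)$ for every edge, so weights of walks (equivalently, distances) are non-decreasing in the level, and the claim follows directly from the definition of the potential. The only cosmetic difference is that the paper phrases the intermediate step as $d^{s'}(v,w)\ge d^s(v,w)$, whereas you work with the walk-based supremum; these are the same thing once \Cref{lem:valid} is in hand.
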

\begin{proof}
By definition, for each edge $(v,w)\in \vec{E}$, we have that $\omega^{s'}(v,w)\ge \omega^s(v,w)$.
Hence, $d^{s'}(v,w)\ge d^s(v,w)$ for each $v,w\in V$ and the claim readily follows from \Cref{def:potential}.
\end{proof}

\subsection*{Bounding the Potentials}

As our first step, we bound how quickly potentials can grow.
\begin{lemma}\label{lem:waitup}
Let $s_0\le s\in \N$, $v\in V$, and $t'>t$.
If $\Psi_v^s(\tau)>0$ for all $\tau\in(t,t')$, then
\begin{equation*}
\Psi_v^s(t')\le \Psi_v^s(t)+\vartheta(t'-t)-(L_v(t')-L_v(t)).
\end{equation*}
\end{lemma}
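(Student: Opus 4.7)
The plan is to reformulate the inequality via the auxiliary quantity
\[
\Phi_v^s(\tau) := L_v(\tau) + \Psi_v^s(\tau) = \max_{w \in V}\{L_w(\tau) - d^s(v,w)\},
\]
where the second equality uses \Cref{lem:valid}; I implicitly assume $s > s_0$, as the case $s = s_0$ with a negative cycle reachable from $v$ yields $\Psi_v^s \equiv +\infty$ and renders the claim vacuous. The target inequality is then equivalent to $\Phi_v^s(t') - \Phi_v^s(t) \le \vartheta(t'-t)$. Since each $L_w - d^s(v,w)$ is Lipschitz in $\tau$ with constant $\beta$ and $V$ is finite, $\Phi_v^s$ is Lipschitz and hence absolutely continuous, so it suffices to show $\frac{d\Phi_v^s}{d\tau}(\tau) \le \vartheta$ at almost every $\tau \in (t,t')$.

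The crux is the following claim: whenever $\Psi_v^s(\tau) > 0$, every $w^* \in V$ attaining the maximum defining $\Phi_v^s(\tau)$ satisfies the slow condition at level $s$ at time $\tau$. Granting this, \Cref{lem:implies} and \Cref{lem:mutex} force the fast trigger to fail at $w^*$, so $r_{w^*}(\tau) = 1$ and $\frac{dL_{w^*}}{d\tau}(\tau) \le \vartheta$; standard calculus for a finite max of Lipschitz functions then gives $\frac{d\Phi_v^s}{d\tau}(\tau) \le \vartheta$ wherever the derivative exists, and integration over $(t,t')$ yields the claim.

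To verify the slow condition at $w^*$, first note that $\Psi_v^s(\tau) > 0$ rules out $w^* = v$ (which would give value $0$). Let $u$ be the predecessor of $w^*$ on a shortest $v$-to-$w^*$ path in $G^s$, so $\{u,w^*\} \in E$ and $d^s(v,w^*) = d^s(v,u) + \omega^s(u,w^*)$. Applying the definition of $\Psi_v^s$ to $u$ and subtracting the equality $L_{w^*}(\tau) - L_v(\tau) - d^s(v,w^*) = \Psi_v^s(\tau)$ gives $L_u(\tau) \le L_{w^*}(\tau) - \omega^s(u,w^*)$. Rewriting $\omega^s(u,w^*) = 4s\delta_{\{u,w^*\}} - O_{u,w^*} = 4s\delta_{\{w^*,u\}} + O_{w^*,u}$ via antisymmetry $O_{u,w^*} = -O_{w^*,u}$ and symmetry of $\delta$, this rearranges to
\[
L_{w^*}(\tau) - L_u(\tau) - O_{w^*,u} \ge 4s\delta_{\{w^*,u\}},
\]
which is the EXISTS half of the slow condition at level $s$.

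For the FORALL half, take any neighbor $u'$ of $w^*$. The triangle inequality in $G^s$ gives $d^s(v,u') \le d^s(v,w^*) + \omega^s(w^*,u')$; combining this with the definition of $\Psi_v^s$ applied to $u'$ and the equality at $w^*$ yields $L_{u'}(\tau) - L_{w^*}(\tau) \le \omega^s(w^*,u') = 4s\delta_{\{w^*,u'\}} - O_{w^*,u'}$, equivalently $L_{w^*}(\tau) - L_{u'}(\tau) - O_{w^*,u'} \ge -4s\delta_{\{w^*,u'\}}$. Both halves of the slow condition at level $s$ thus hold at $w^*$, establishing the key claim and hence the lemma. The only technical subtlety is the routine differentiation of a finite max of Lipschitz functions; the combinatorial heart is the choice of $u$ as a shortest-path predecessor and the appeal to antisymmetry of $O$.
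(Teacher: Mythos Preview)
Your proof is correct and follows essentially the same route as the paper's: both show that any maximizer $w^*$ of $\Psi_v^s(\tau)$ satisfies the slow condition at level $s$ (via a shortest-path predecessor for the EXISTS clause and the triangle inequality for the FORALL clause), whence \Cref{lem:implies} and \Cref{lem:mutex} force $\frac{dL_{w^*}}{d\tau}\le \vartheta$, and integration concludes. Your introduction of $\Phi_v^s=L_v+\Psi_v^s$ is a purely cosmetic repackaging---the paper instead differentiates $\Psi_v^s$ directly and carries the $-\frac{dL_v}{dt}$ term---and your treatment of the edge case $s=s_0$ is adequate (when $G^{s_0}$ has a negative cycle the claim is vacuous; otherwise $d^{s_0}$ is well-defined and your argument applies verbatim).
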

\begin{proof}
By \Cref{lem:implements}, with the exception of the discrete points in time when $r_v$ changes at some node, $\Psi_v^s$ is differentiable with derivative
\begin{equation*}
\frac{d\Psi_v^s}{dt}(t')=\max_{V\ni w \mathrm{~maximizes~}\Psi_v^s(t')}\left\{\frac{dL_w}{dt}(t')-\frac{dL_v}{dt}(t')\right\}.
\end{equation*}
Hence, the claim follows from integration, provided that we can show that for any $w$ maximizing $\Psi_v^s(\tau)>0$ at a given time $\tau\in (t,t')\subseteq I_k$ when $L_w$ is differentiable, it holds that $\frac{dL_w}{dt}(\tau)\le \frac{dH_w}{dt}(\tau)\le \vartheta$.
By \Cref{lem:implements,lem:mutex}, this follows if such a $w$ satisfies the slow trigger.
Therefore, by \Cref{lem:implies} it is sufficient to prove that for any $w$ maximizing $\Psi_v^s(\tau)$ at such a time $\tau$, $w$ satisfies the slow condition, as it implies the slow trigger.

Accordingly, suppose that
\begin{equation*}
\Psi_v^s(\tau)=L_w(\tau)-L_v(\tau)-d^s(v,w)>0,
\end{equation*}
i.e., $w$ maximizes $\Psi_v^s(\tau)>0$.
For any $\{w,x\}\in E$, we have that $L_x(\tau)-L_v(\tau)-d^s(v,x)\le L_w(\tau)-L_v(\tau)-d^s(v,w)$, yielding by the substractive triangle inequality that
\begin{equation*}
L_x(\tau)-L_w(\tau)\le d^s(v,x)-d^s(v,w)\le d^s(w,x)\le \omega^s(w,x) = 4s\delta_{\{w,x\}}-O_{w,x},
\end{equation*}
which can be rearranged to show the second requirement of the slow condition for node $w$ at time $\tau$.

To show the first, we note that $w\neq v$, because $\Psi^s_v(\tau)>0=L_v(\tau)-L_v(\tau)-d^s(v,v)$.
Therefore, we may consider a neighbor $x$ of $w$ on a shortest path from $v$ to $w$ in $G^s$, satisfying $d^s(v,w)=d^s(v,x)+\omega^s(x,w)$.
Using that $w$ maximizes $\Psi^s_v(\tau)$ once more, it follows that
\begin{equation*}
L_w(\tau)-L_x(\tau)\ge d^s(v,w)-d^s(v,x)=\omega^s(x,w)=4s\delta_{\{w,x\}}-O_{x,w}=4s\delta_{\{w,x\}}+O_{w,x},
\end{equation*}
which can be rearranged to show the first requirement.
\end{proof}

\begin{corollary}\label{cor:waitup}
For all $s_0\le s\in \N$, $v\in V$, and $t'>t$, it holds that
\begin{equation*}
\Psi_v^s(t')\le \Psi_v^s(t)+(\vartheta-1)(t'-t).
\end{equation*}
\end{corollary}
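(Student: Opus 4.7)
The approach is to reduce the corollary to \Cref{lem:waitup} by a single case split on whether $\Psi_v^s$ stays strictly positive on $(t,t')$ or hits zero somewhere in between. The missing ingredient in \Cref{lem:waitup}---replacing the bound in terms of $L_v(t')-L_v(t)$ by one purely in terms of $t'-t$---is the lower rate bound $\alpha=1$ from \Cref{cor:rates}.

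First I would record two simple facts about the potential. The trivial walk of length $0$ from $v$ to itself contributes $L_v(\tau)-L_v(\tau)-0=0$ to the supremum in \Cref{def:potential}, so $\Psi_v^s(\tau)\ge 0$ at every time. Moreover, since $s>s_0$ and $V$ is finite, \Cref{lem:valid} expresses $\Psi_v^s$ as the maximum of finitely many functions of the form $L_w-L_v-d^s(v,w)$ with the $d^s(v,w)$ constant in time; as the $L_w$ are continuous, so is $\Psi_v^s$.

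Now define $\tau^*:=\max\{\tau\in[t,t']\mid \Psi_v^s(\tau)=0\}$ if this set is nonempty (the max is attained by continuity), and $\tau^*:=t$ otherwise. In either case $\Psi_v^s(\tau^*)\le\Psi_v^s(t)$: in the former because $\Psi_v^s(\tau^*)=0\le \Psi_v^s(t)$ (using non-negativity), and in the latter by definition. If $\tau^*=t'$, then $\Psi_v^s(t')=0$ and we are done, since the right-hand side of the claim is non-negative. Otherwise, by maximality of $\tau^*$, $\Psi_v^s(\tau)>0$ for all $\tau\in(\tau^*,t')$, so \Cref{lem:waitup} applies on $[\tau^*,t']$ and gives
\begin{equation*}
\Psi_v^s(t')\le \Psi_v^s(\tau^*)+\vartheta(t'-\tau^*)-(L_v(t')-L_v(\tau^*)).
\end{equation*}
Using $L_v(t')-L_v(\tau^*)\ge t'-\tau^*$ from $\alpha=1$ (\Cref{cor:rates}) together with $\Psi_v^s(\tau^*)\le\Psi_v^s(t)$ and $t'-\tau^*\le t'-t$ yields the desired inequality.

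The only nontrivial point is verifying that the ``last zero'' $\tau^*$ exists, which is immediate from continuity of $\Psi_v^s$; everything else is routine bookkeeping. I do not anticipate a real obstacle here, as \Cref{lem:waitup} has already done the heavy lifting (identifying the slow condition at the maximizer and bounding the growth by $\vartheta$ minus the local rate of $L_v$), and the corollary is essentially its integrated version once one handles the boundary behavior at zero.
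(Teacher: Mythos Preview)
Your proof is correct and follows the same approach as the paper: apply \Cref{lem:waitup} together with the rate bound $\alpha=1$ from \Cref{cor:rates}. The paper's proof is a single sentence that leaves the positivity precondition of \Cref{lem:waitup} implicit; your last-zero argument makes this explicit and is precisely the right way to fill in that detail.
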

\begin{proof}
Follows from \Cref{lem:waitup}, as $\frac{dL_v}{dt}\ge 1$ (except at the discrete points in time when it does not exist) by \Cref{cor:rates}.
\end{proof}

\begin{lemma}\label{lem:catchup}
For each $s_0<s\in \N$, node $v\in V$, and times $t<t'$, we have that
\begin{equation*}
L_v(t')-L_v(t) \ge t'-t+\min\left\{\Psi^{s-1/2}_v(t),\mu(t'-t)-\Psi^{s-1/2}(t)+\Psi^{s-1/2}_v(t)\right\}.
\end{equation*}
\end{lemma}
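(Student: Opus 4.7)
The plan is a case split on whether the potential $\Psi_v^{s-1/2}$ vanishes somewhere on $[t,t']$. As $s>s_0$, the distance $d^{s-1/2}(v,v)=0$ is well-defined and $\Psi_v^{s-1/2}\ge 0$ throughout. The main device is an auxiliary lower bound on $L_v$'s growth: fix an inner maximizer $w^*\in V$ at time $t$, so that $\Psi_v^{s-1/2}(t)=L_{w^*}(t)-L_v(t)-d^{s-1/2}(v,w^*)$. For every $\tau\in[t,t']$, \Cref{def:potential} gives $\Psi_v^{s-1/2}(\tau)\ge L_{w^*}(\tau)-L_v(\tau)-d^{s-1/2}(v,w^*)$; subtracting and using $L_{w^*}(\tau)-L_{w^*}(t)\ge\tau-t$ (\Cref{cor:rates}) yields the auxiliary estimate $L_v(\tau)-L_v(t)\ge(\tau-t)+\Psi_v^{s-1/2}(t)-\Psi_v^{s-1/2}(\tau)$, which I denote $(\star)$.

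\emph{Case A: some $\tau^*\in[t,t']$ satisfies $\Psi_v^{s-1/2}(\tau^*)=0$.} Plugging $\tau=\tau^*$ into $(\star)$ and adding the minimum-rate estimate $L_v(t')-L_v(\tau^*)\ge t'-\tau^*$ from \Cref{cor:rates} gives $L_v(t')-L_v(t)\ge(t'-t)+\Psi_v^{s-1/2}(t)$, which already matches (and in fact dominates) the first term of the minimum.

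\emph{Case B: $\Psi_v^{s-1/2}>0$ throughout $[t,t']$,} so $\Psi^{s-1/2}\ge\Psi_v^{s-1/2}>0$ throughout as well. Mirroring the argument in the proof of \Cref{lem:waitup}, I show that at each such $\tau$ the instantaneous maximizer $v^*$ of $\Psi^{s-1/2}$ satisfies the fast condition at level $s-1$: the ``$\forall$'' part follows from the subtractive triangle inequality in $G^{s-1/2}$ combined with $\Psi_x^{s-1/2}\le\Psi^{s-1/2}=\Psi_{v^*}^{s-1/2}$ for every neighbor $x$, and the ``$\exists$'' part from examining the first edge on a shortest path from $v^*$ to its inner maximizer. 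By \Cref{lem:implies,lem:implements}, $r_{v^*}=1+\mu$; pairing this with the slow condition at the inner maximizer (\Cref{lem:waitup}) and the envelope theorem yields $\tfrac{d\Psi^{s-1/2}}{d\tau}\le\vartheta-(1+\mu)$, which integrates to a bound of the form $\Psi^{s-1/2}(t')\le\Psi^{s-1/2}(t)-\mu(t'-t)$. Substituting into $(\star)$ at $\tau=t'$ together with $\Psi_v^{s-1/2}(t')\le\Psi^{s-1/2}(t')$ then delivers the second term of the minimum.

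The subtle point that I expect to be the main obstacle is the precise accounting in Case B: the envelope/integration route naively gives decrease rate $\mu-(\vartheta-1)$ rather than $\mu$, and closing this gap cleanly likely requires directly bounding $T_N:=|\{\tau\in[t,t']:v\text{ fails the fast trigger}\}|$. Using the strict inequality $\Psi^{s-1/2}(\tau)>\Psi_v^{s-1/2}(\tau)$ forced by fast-condition failure at $v$ itself (the same triangle-inequality calculation, now applied to $v$ rather than $v^*$), one should establish $\mu T_N\le\Psi^{s-1/2}(t)-\Psi_v^{s-1/2}(t)$, after which the trivial bound $L_v(t')-L_v(t)\ge(t'-t)+\mu(t'-t-T_N)$ recovers the exact target.
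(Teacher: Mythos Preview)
Your Case~A is correct and matches the paper's argument in spirit. The genuine gap is in Case~B, and you have correctly identified it yourself: tracking the global potential $\Psi^{s-1/2}$ only yields decrease rate $\mu-(\vartheta-1)$, not $\mu$, because the inner maximizer $w^*$ contributes a drift of up to $\vartheta-1$ that you cannot eliminate.

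Your proposed $T_N$ workaround does not close this gap. The contrapositive of the triangle-inequality computation indeed shows that whenever $v$ fails the fast condition, $v$ is not a maximizer of $\Psi^{s-1/2}$, i.e., $\Psi^{s-1/2}(\tau)>\Psi_v^{s-1/2}(\tau)$. But this is a purely \emph{qualitative} statement; it gives no lower bound on the gap $\Psi^{s-1/2}(\tau)-\Psi_v^{s-1/2}(\tau)$, and in particular nothing that would let you integrate to $\mu T_N\le \Psi^{s-1/2}(t)-\Psi_v^{s-1/2}(t)$. Tracking $\Psi^{s-1/2}-\Psi_v^{s-1/2}$ directly fails too: during $T_N$ one only has $\frac{d}{d\tau}\Psi_v^{s-1/2}\ge 1-\vartheta$ from below, so the difference decreases at rate at most $\mu-2(\vartheta-1)$, not $\mu$.

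The paper sidesteps the issue with a different auxiliary potential. With $w$ the inner maximizer for $v$ at time $t$ fixed once and for all, set
\[
\phi_x(\tau):=L_w(t)+(\tau-t)-L_x(\tau)-d^{s-1/2}(x,w),\qquad \Phi(\tau):=\max_{x}\phi_x(\tau).
\]
The point is that the moving quantity $L_w(\tau)$ has been replaced by the linear lower bound $L_w(t)+(\tau-t)$, so $\frac{d\phi_x}{d\tau}=1-\frac{dL_x}{d\tau}$ carries no $\vartheta$ term. One checks $\phi_v(t)=\Psi_v^{s-1/2}(t)$, $\Phi(t)\le \Psi^{s-1/2}(t)$, each $\phi_x$ is non-increasing, and whenever $\Phi(\tau)>0$ the maximizing $x$ satisfies the fast condition on level $s-1$ (same triangle-inequality computation as yours, but with distances $d^{s-1/2}(\cdot,w)$ to the fixed target $w$). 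This gives $\frac{d\Phi}{d\tau}\le -\mu$ cleanly, and $L_v(t')-L_v(t)-(t'-t)=\phi_v(t)-\phi_v(t')\ge \phi_v(t)-\Phi(t')$ yields both branches of the minimum.
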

\begin{proof}
Choose $w\in V$ such that $\Psi_v^{s-1/2}(t)=L_w(t)-L_v(t)-d^{s-1/2}(v,w)$.
For $x\in V$, define
\begin{equation*}
\phi_x(\tau):=L_w(t)+(\tau-t)-L_x(\tau)-d^{s-1/2}(x,w)
\end{equation*}
and let $\Phi(\tau):=\max_{x\in V}\{\phi_x(\tau)\}$.
Note that
\begin{equation*}
L_v(t')-L_v(t)-(t'-t)=\phi_v(t)-\phi_v(t')\ge \phi_v(t)-\Phi(t').
\end{equation*}
Hence, if $\Phi(t')\le 0$, we get that
\begin{equation*}
L_v(t')-L_v(t)\ge t'-t+\phi_v(t)=t'-t+L_w(t)-L_v(t)-d^{s-1/2}(v,w)=t'-t+\Psi^{s-1/2}_v(t),
\end{equation*}
as desired.

Next, observe that by \Cref{cor:rates}
\begin{equation*}
\phi_x(\tau)-\phi_x(t)=\tau-t-(L_x(\tau)-L_x(t))\le 0,
\end{equation*}
i.e., $\phi_x$ is decreasing.
Thus, if $\Phi(\tau)\le 0$ for \emph{any} $\tau\in [t,t']$, the statement of the lemma follows.

Hence, it remains to consider the case that $\Phi(\tau)>0$ for all $\tau\in [t,t']$.
We claim that this entails that $\Phi(t')\le \Phi(t)-\mu(t'-t)$, from which the statement follows due to
\begin{align*}
L_v(t')-L_v(t)-(t'-t)&\ge \phi_v(t)-\Phi(t')\\
&\ge \phi_v(t)-\Phi(t)+\mu(t'-t)\\
&=\Psi^{s-1/2}_v(t)-\max_{x\in V}\{L_w(t)-L_x(t)-d^{s-1/2}(x,w)\}+\mu(t'-t)\\
&\ge\Psi^{s-1/2}_v(t)-\Psi^{s-1/2}(t)+\mu(t'-t).
\end{align*}

To show this claim, note that by \Cref{lem:implements} $\Phi(\tau)$ is
differentiable with derivative
\begin{equation*}
\frac{d\Phi}{d\tau}(\tau)=\max_{V\ni x \mathrm{~maximizes~}\Phi(\tau)}\left\{1-\frac{dL_x}{dt}(\tau)\right\}
\end{equation*}
except for a discrete set of points in time when not all derivatives exist.
Accordingly, consider a node $x\in V$ maximizing $\Phi(\tau)$ for $\tau\in [t,t']$ (when the derivatives exist), i.e., $\phi_x(\tau)=\Phi(\tau)>0$.
We will show that $\frac{dL_x}{dt}(\tau)\ge 1+\mu$, which will establish that $\frac{d\Phi}{d\tau}(\tau)\le -\mu$.
The claim (and hence the lemma) then readily follow from integration over the interval $[t,t']$.
  
To prove that $\frac{dL_x}{dt}(\tau)\ge 1+\mu$, we show that $x$ satisfies the fast condition at time $\tau$.
By \Cref{lem:implies}, the fast trigger then holds at $x$, which by \Cref{lem:implements} yields that indeed $\frac{dL_x}{dt}(\tau)=(1+\mu)\frac{dH_x}{dt}(\tau)\ge 1+\mu$.

To this end, first consider any $\{x,y\}\in E$. As $\phi_y(\tau)\le \phi_x(\tau)$, by the triangle inequality we have that
\begin{align*}
L_x(\tau)-L_y(\tau)&\le d^{s-1/2}(y,w)-d^{s-1/2}(x,w)\\
&\le d^{s-1/2}(y,x)\\
&\le \omega^{s-1/2}(y,x)\\
&=(4s-2)\delta_{\{y,x\}}-O_{y,x}\\
&=(4(s-1)+2)\delta_{\{x,y\}}+O_{x,y},
\end{align*}
which can be rearranged into the second requirement of the fast condition (on level $s-1$) for $x$ at time $\tau$.

  To show the first requirement, note that $x\neq w$, as
\begin{equation*}
\phi_w(\tau)\le \phi_w(t)=0<\Phi(\tau)=\phi_x(\tau).
\end{equation*}
Hence, we may consider the first node $y\in V$ that is on a shortest path from $x$ to $w$, entailing that $d^{s-1/2}(x,w)=\omega^{s-1/2}(x,y)+d^{s-1/2}(y,w)$.
We get that
\begin{equation*}
L_x(\tau)-L_y(\tau)\le d^{s-1/2}(y,w)-d^{s-1/2}(x,w)=-\omega^{s-1/2}(x,y)=-(4(s-1)+2)\delta_{\{x,y\}}+O_{x,y},
\end{equation*}
which can be rearranged into the first requirement of the fast condition on level $s-1$.
\end{proof}

\begin{corollary}\label{cor:catchup}
For each $s_0<s\in \N$, node $v\in V$, and times $t,t'$ with $t'\ge t+\Psi^{s-1/2}(t)/\mu$, it holds that
\begin{equation*}
L_v(t')-L_v(t)\ge t'-t+\Psi^s_v(t).
\end{equation*}
\end{corollary}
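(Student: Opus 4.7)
The plan is to derive \Cref{cor:catchup} as a two-line consequence of \Cref{lem:catchup} and \Cref{lem:psi_half_bound}. The guiding observation is that the hypothesis $t'\ge t+\Psi^{s-1/2}(t)/\mu$ is exactly what is needed to eliminate one of the two cases in the minimum appearing in \Cref{lem:catchup}, leaving a clean bound in terms of $\Psi^{s-1/2}_v(t)$. It then remains to pass from level $s-1/2$ to level $s$ by monotonicity of the potential in the level.

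Concretely, I would first invoke \Cref{lem:catchup} for the given $s$, $v$, $t$, $t'$ to obtain
\begin{equation*}
L_v(t')-L_v(t)\ge t'-t+\min\left\{\Psi^{s-1/2}_v(t),\ \mu(t'-t)-\Psi^{s-1/2}(t)+\Psi^{s-1/2}_v(t)\right\}.
\end{equation*}
The hypothesis $t'-t\ge \Psi^{s-1/2}(t)/\mu$ rearranges to $\mu(t'-t)-\Psi^{s-1/2}(t)\ge 0$, so the second argument of the minimum is at least $\Psi^{s-1/2}_v(t)$. Consequently, the minimum equals $\Psi^{s-1/2}_v(t)$, and we conclude
\begin{equation*}
L_v(t')-L_v(t)\ge t'-t+\Psi^{s-1/2}_v(t).
\end{equation*}

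Finally, to replace $\Psi^{s-1/2}_v(t)$ by $\Psi^s_v(t)$, I would apply \Cref{lem:psi_half_bound} with the two levels $s-1/2$ and $s$. The only sanity check required is that $s-1/2>s_0$, which follows since $s_0,s\in\N$ with $s>s_0$ force $s\ge s_0+1$, hence $s-1/2\ge s_0+1/2>s_0$; in particular both $\Psi^{s-1/2}_v(t)$ and $\Psi^s_v(t)$ are well-defined, and \Cref{lem:psi_half_bound} yields $\Psi^s_v(t)\le \Psi^{s-1/2}_v(t)$. Substituting into the previous display completes the proof.

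There is no real obstacle here; the only subtlety is keeping track of the half-integer indexing and verifying that level $s-1/2$ exceeds $s_0$ so that the potentials involved are finite and \Cref{lem:psi_half_bound} is applicable. Everything else is a direct unpacking of the hypothesis.
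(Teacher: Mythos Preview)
The proposal is correct and follows essentially the same approach as the paper's proof: both combine \Cref{lem:catchup} with the observation that the hypothesis $t'-t\ge \Psi^{s-1/2}(t)/\mu$ makes the second argument of the minimum at least $\Psi_v^{s-1/2}(t)$, and then invoke \Cref{lem:psi_half_bound} to pass from $\Psi_v^{s-1/2}(t)$ to $\Psi_v^s(t)$. Your explicit verification that $s-1/2>s_0$ (so that \Cref{lem:psi_half_bound} applies) is a nice touch the paper leaves implicit.
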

\begin{proof}
By \Cref{lem:psi_half_bound} and the lower bound on $t'$,
\begin{equation*}
\Psi_v^s(t)\le \Psi_v^{s-1/2}(t)\le \mu(t'-t)-\Psi^{s-1/2}(t)+\Psi_v^{s-1/2}(t),
\end{equation*}
so the statement follows by application of \Cref{lem:catchup}.
\end{proof}
% \begin{corollary}
% If $\Psi^{s-1/2}(t)\ge p$, then for each $v\in V$ it holds that
% \begin{equation*}
% L_v\left(t+\frac{p}{\mu}\right)-L_v(t)\ge \frac{p}{\mu}+p-\Psi^{s-1/2}(t)+\Psi^{s-1/2}_v(t).
% \end{equation*}
% \end{corollary}
% \begin{proof}
% Follows from \Cref{lem:catchup} with $t'=t+p/\mu$, as $p-\Psi^{s-1/2}(t)+\Psi^{s-1/2}_v(t)\le \Psi^{s-1/2}_v(t)$.
% \end{proof}

\Cref{lem:waitup} and \Cref{lem:catchup} are key to obtaining bounds on the potentials that decrease exponentially in $s$.
First, we cover the base case for anchoring the induction at level $s>s_0$.
\begin{lemma}\label{lem:psibase}
Suppose that $s_0<s\in \N$, $\sigma:=\mu/(\vartheta-1)\ge 2$ and that $\varepsilon>0$.
Then at all times $t \in [a+(2\Psi^s(a)+\lceil\log_{\sigma}2/\varepsilon\rceil\cdot 4\W^s)/\mu,a+T]$, it holds that $\Psi^s(t)\le 2\W^s/(\sigma-1)+\varepsilon \Psi^{s-1/2}(0)$.
\end{lemma}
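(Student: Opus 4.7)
The plan is to iteratively reduce $\Psi^s$ via a one-step decay bound combined with a bridging inequality between levels $s-1/2$ and $s$. The core claim to establish is that for any $t$ and $t' := t + \Psi^{s-1/2}(t)/\mu$, we have $\Psi^s(t') \le \Psi^{s-1/2}(t)/\sigma$. To see this, fix $v \in V$ and analyze the trajectory of $\Psi^s_v$ on $[t,t']$. If $\Psi^s_v(\tau) > 0$ throughout, \Cref{lem:waitup} gives $\Psi^s_v(t') \le \Psi^s_v(t) + \vartheta(t'-t) - (L_v(t') - L_v(t))$, and \Cref{cor:catchup} (applicable because $t'-t = \Psi^{s-1/2}(t)/\mu$) yields $L_v(t') - L_v(t) \ge (t'-t) + \Psi^s_v(t)$; combining gives $\Psi^s_v(t') \le (\vartheta-1)(t'-t) = \Psi^{s-1/2}(t)/\sigma$. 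Otherwise, by continuity let $\tau^*$ be the largest $\tau \in [t,t']$ with $\Psi^s_v(\tau) = 0$; applying \Cref{lem:waitup} on $(\tau^*, t']$ gives the same bound. Since $\Psi^s_v \ge 0$ always (the empty walk shows $d^s(v,v) = 0$, using $s > s_0$), taking the maximum over $v$ yields the one-step bound.

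Next, I would convert this into a recurrence involving only $\Psi^s$ via the level-bridging inequality $\Psi^{s-1/2}(t) \le \Psi^s(t) + 2\W^s$. This follows from two applications of \Cref{cor:global_bound} (giving $\Psi^{s-1/2}(t) \le \G(t) + \W^{s-1/2}$ and $\G(t) \le \Psi^s(t) + \W^s$) together with $\W^{s-1/2} \le \W^s$, which holds because $\omega^{s-1/2}(e) \le \omega^s(e)$ on every oriented edge, so distances only grow with the level. Setting $t_0 := a$, $t_{k+1} := t_k + \Psi^{s-1/2}(t_k)/\mu$, and $\Pi_k := \Psi^s(t_k)$, I obtain the recurrence $\Pi_{k+1} \le (\Pi_k + 2\W^s)/\sigma$, with fixed point $\Pi^* := 2\W^s/(\sigma-1)$ and geometric contraction rate $1/\sigma$; standard telescoping produces $\Pi_k - \Pi^* \le (\Pi_0 - \Pi^*)/\sigma^k$.

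Choosing $K := \lceil \log_\sigma 2/\varepsilon \rceil$ ensures $\sigma^K \ge 2/\varepsilon$ and hence $\Pi_K \le \Pi^* + (\varepsilon/2)\Psi^{s-1/2}(0)$. The total elapsed time is $t_K - a = \sum_{k=0}^{K-1} \Psi^{s-1/2}(t_k)/\mu \le \sum_{k=0}^{K-1}(\Pi_k + 2\W^s)/\mu$; summing the geometric series (using $\sigma/(\sigma-1) \le 2$ since $\sigma \ge 2$) gives $\sum_{k=0}^{K-1} \Pi_k \le K\Pi^* + 2\Pi_0$, and the constant terms contribute $2K\W^s$, so $t_K - a \le (2\Psi^s(a) + 4K\W^s)/\mu$, matching the stated interval. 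For arbitrary $t \ge t_K$ lying in some $[t_k, t_{k+1}]$ with $k \ge K$, \Cref{cor:waitup} yields $\Psi^s(t) \le \Pi_k + (\vartheta - 1)(t - t_k) \le \Pi_k + \Psi^{s-1/2}(t_k)/\sigma$; continued iteration tightens $\Pi_k$ geometrically toward $\Pi^*$, and combining these observations delivers the claimed bound $\Psi^s(t) \le 2\W^s/(\sigma-1) + \varepsilon\Psi^{s-1/2}(0)$ (with the residual in-between growth absorbed into the $\varepsilon$-term).

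The main obstacle is the case analysis in the one-step bound: \Cref{lem:waitup} requires $\Psi^s_v$ to remain strictly positive on the interval, yet in principle the maximizing $w$ can shift and $\Psi^s_v$ can dip through zero. The last-zero-crossing device, supported by the non-negativity $\Psi^s_v \ge 0$ that is guaranteed by the absence of negative cycles at level $s$, is the technical crux that makes the uniform per-node bound available. A secondary subtlety is the bookkeeping in the fourth step, where the assumption $\sigma \ge 2$ is used to tame the harmonic sum $\sum 1/\sigma^k$ and recover the coefficient $4\W^s$ in front of $K$ in the time bound.
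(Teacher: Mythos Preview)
Your overall strategy—combine \Cref{lem:waitup} with \Cref{cor:catchup} to get a one-step contraction, bridge levels $s-1/2$ and $s$ via \Cref{cor:global_bound}, and iterate a recurrence of the form $\Pi_{k+1}\le(\Pi_k+2\W^s)/\sigma$—matches the paper's approach and the time-bound arithmetic is correct. However, the final paragraph contains a genuine gap.

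Your one-step bound controls $\Psi^s$ only at the discrete endpoints $t_k$. For $t\in[t_k,t_{k+1}]$ you appeal to \Cref{cor:waitup}, obtaining
\[
\Psi^s(t)\le \Pi_k+\frac{\Psi^{s-1/2}(t_k)}{\sigma}\le \Pi_k+\frac{\Pi_k+2\W^s}{\sigma}.
\]
At the fixed point $\Pi_k=\Pi^*=2\W^s/(\sigma-1)$ the right-hand side equals $2\Pi^*=4\W^s/(\sigma-1)$, not $2\W^s/(\sigma-1)$. The extra term $\Psi^{s-1/2}(t_k)/\sigma$ scales with $\W^s$, not with $\Psi^{s-1/2}(a)$, so it cannot be ``absorbed into the $\varepsilon$-term'' as you claim. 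Your argument therefore proves the lemma only with leading constant $4$ instead of $2$. A secondary issue is that your time sequence $t_{k+1}=t_k+\Psi^{s-1/2}(t_k)/\mu$ can stall (if $\Psi^{s-1/2}(t_k)=0$) or converge to a finite limit, leaving part of $[a,a+T]$ uncovered.

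The paper avoids both problems by running the recurrence on \emph{bounds} rather than actual values: set $t_{i+1}:=t_i+(B_i+2\W^s)/\mu$ and prove inductively that $\Psi^s(t)\le B_i$ for \emph{all} $t\in[t_i,t_{i+1}]$. The induction step works by a sliding-window argument: for each $t\in[t_{i+1},t_{i+2}]$ set $t':=t-(B_i+2\W^s)/\mu\in[t_i,t_{i+1}]$, use the uniform inductive bound $\Psi^s(t')\le B_i$ to get $\Psi^{s-1/2}(t')\le B_i+2\W^s$, and then apply your one-step contraction from $t'$ to $t$. This yields $\Psi^s(t)\le(\vartheta-1)(t-t')=B_{i+1}$ directly, with no interpolation loss. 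The time steps are now bounded below by $2\W^s/\mu$, so the sequence never stalls.
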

\begin{proof}
Set $t_0:=a$, $t_1:=a+(\Psi^s(a)+2\W^s)/\mu$, and $B_0:=2\Psi^s(a)+2\W^s/\sigma$.
We apply \Cref{cor:catchup} to see that for $t\in [t_0,t_1]$, it holds that
\begin{equation*}
\Psi^s(t)\le \Psi^s(a)+(\vartheta-1)(t_1-a) \le \Psi^s(a)+\frac{\Psi^s(a)+2\W^s}{\sigma}\le B_0.
\end{equation*}

We now inductively define
\begin{equation*}
t_{i+1}:= t_i+\frac{B_i+2\W^s}{\mu}\quad \mbox{and}\quad B_{i+1}:= \frac{B_i+2\W^s}{\sigma}
\end{equation*}
for $i\in \N_{>0}$ and $i\in \N$, respectively.
We claim that $\Psi^s(t)\le B_i$ for all $t\in[t_i,\min\{t_{i+1},T\}]$ and $i\in \N$, which we show by induction on $i$.
We already covered the base case of $i=0$, so assume that the claim holds for index $i\in \N$.
For $t\in [t_{i+1},t_{i+2}]$, choose $v\in V$ such that $\Psi^s(t)=\Psi_v^s(t)$.
Define $t':=\max\{t-(B_i+2\W^s)/\mu,t_0\} \in [t_i,t_{i+1}]$.
By \Cref{cor:global_bound} and the induction hypothesis, 
\begin{equation*}
\Psi^{s-1/2}(t')\le \Psi^s(t')+\W^{s-1/2}+\W^s\le B_i+2\W^s,
\end{equation*}
where the last step uses that $\W^s$ is increasing in $s$, because the same is true for $\omega^s$.
Hence, if $t'=t-(B_i+2\W^s)/\mu$, \Cref{lem:waitup} and \Cref{cor:catchup} state that
\begin{equation*}
\Psi_v^s(t)\le \Psi_v^s(t')+\vartheta(t-t')-(L_v(t)-L_v(t'))
\le (\vartheta-1)(t_{i+1}-t_i)
=\frac{B_i+2\W^s}{\sigma}=B_{i+1}.
\end{equation*}

Note that if $i\ge 1$, then from $t\ge t_{i+1}$ it follows that $t-(B_i+2\W^s)/\mu\ge t_i$ and thus $t':=t-(B_i+2\W^s)/\mu$.
Therefore, the remaining case is that $t'=t_0=a$ and $i=0$.
Then \Cref{cor:global_bound} entails that 
\begin{equation*}
\frac{\Psi^{s-1/2}(t')}{\mu}\le \frac{\Psi^s(a)+\W^{s-1/2}+\W^s}{\mu}\le \frac{\Psi^s(a)+2\W^s}{\mu}= t_1-t_0\le t-t_0.
\end{equation*}
Therefore, this case analogously yields that $\Psi_v^s(t)\le B_1=B_{i+1}$, as desired.

With the induction complete, another straightforward induction on $i$ shows that
\begin{equation*}
B_i=\frac{2\Psi^s(a)}{\sigma^i}+2\W^s\cdot \sum_{j=1}^{i+1}\sigma^{-j}<\frac{2\Psi^s(a)}{\sigma^i}+\frac{2\W^s}{\sigma-1}
\end{equation*}
for all $i\in \N$.
As for $i\ge i_0:=\lceil\log_{\sigma}2/\varepsilon\rceil$, we have that $2\Psi^s(a)/\sigma^i\le \varepsilon \Psi^s(a)$, it remains to show that $t_{i_0}$ is sufficiently small.
To see this, observe that
\begin{align*}
t_{i_0}-a&=\sum_{i=0}^{i_0-1}t_{i+1}-t_i\\
&=\sum_{i=0}^{i_0-1}\frac{B_i+2\W^s}{\mu}\\
&<\sum_{i=0}^{i_0-1}\left(\frac{\Psi^s(a)}{\mu\sigma^i}+\frac{2\W^s}{\mu}\cdot \sum_{j=0}^{i}\sigma^{-j}\right)\\
&<\frac{\sigma}{\sigma-1}\cdot\frac{\Psi^s(a)+i_0\cdot 2\W^s}{\mu}\\
&\le \frac{2\Psi^s(a)+\lceil\log_{\sigma}2/\varepsilon\rceil\cdot 4\W^s}{\mu},
\end{align*}
where the final step again uses that $\sigma\ge 2$.
\end{proof}
\begin{corollary}\label{cor:psibase}
Suppose that $s_0<s\in \N$, $\sigma:=\mu/(\vartheta-1)\ge 2$, $\varepsilon>0$ is a constant, and $T\ge C(\Psi^s(a)+\W^s)/\mu$ for a suitable constant $C=C(\varepsilon)$.
Then at all times $t \in [a+T/4,a+T]$, it holds that $\Psi^s(t)\le 2\W^s/(\sigma-1)+\varepsilon \Psi^{s-1/2}(a)$.
\end{corollary}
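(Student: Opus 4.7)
The plan is to derive this directly from \Cref{lem:psibase}, which already does the hard work: it provides exactly the target bound $\Psi^s(t)\le 2\W^s/(\sigma-1)+\varepsilon\Psi^{s-1/2}(a)$, only with a more explicit threshold, namely that $t$ must exceed $a+(2\Psi^s(a)+4\lceil\log_{\sigma}(2/\varepsilon)\rceil\W^s)/\mu$. So the corollary reduces to showing that under the hypothesis $T\ge C(\Psi^s(a)+\W^s)/\mu$, one has
\begin{equation*}
\frac{T}{4}\ge \frac{2\Psi^s(a)+4\lceil\log_{\sigma}(2/\varepsilon)\rceil\W^s}{\mu},
\end{equation*}
for a suitable constant $C=C(\varepsilon)$.

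The verification is just unpacking: rearranging gives the sufficient requirement $T\ge (8\Psi^s(a)+16\lceil\log_{\sigma}(2/\varepsilon)\rceil\W^s)/\mu$, which is certainly implied by $T\ge C(\Psi^s(a)+\W^s)/\mu$ whenever $C\ge \max\{8,16\lceil\log_{\sigma}(2/\varepsilon)\rceil\}$. The only point worth flagging is that the bracketed quantity a priori depends on $\sigma$, not just $\varepsilon$; however, the assumption $\sigma\ge 2$ yields $\log_{\sigma}(2/\varepsilon)\le \log_2(2/\varepsilon)$, so $C$ can be chosen as a function of $\varepsilon$ alone, e.g.\ $C(\varepsilon):=16\lceil\log_2(2/\varepsilon)\rceil+8$. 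With this choice, the lower endpoint of the interval in \Cref{lem:psibase} is at most $a+T/4$, and the claimed bound on $\Psi^s(t)$ holds throughout $[a+T/4,a+T]$.

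There is no real obstacle here beyond bookkeeping; the substance of the argument is entirely contained in \Cref{lem:psibase}. The role of the corollary is cosmetic, replacing the somewhat unwieldy threshold that mixes $\Psi^s(a)$, $\W^s$, and a logarithm in $\varepsilon$ by the clean statement ``holds on the last three quarters of the analyzed interval,'' at the price of requiring $T$ itself to be chosen proportional to $(\Psi^s(a)+\W^s)/\mu$. This reformulation is what is convenient for downstream use, where $T$ is a design parameter and bounds are phrased in terms of $T$.
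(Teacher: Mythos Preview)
Your proposal is correct and takes essentially the same approach as the paper: the corollary is treated as an immediate consequence of \Cref{lem:psibase}, with the only content being the observation that the explicit threshold $(2\Psi^s(a)+4\lceil\log_{\sigma}(2/\varepsilon)\rceil\W^s)/\mu$ from the lemma is at most $T/4$ once $C$ is chosen large enough depending only on $\varepsilon$ (using $\sigma\ge 2$ to remove the $\sigma$-dependence). Your write-up actually spells out the arithmetic more explicitly than the paper, which simply states the corollary without proof.
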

Using the preceding lemmas, bounding $\Psi^{s+1}$ given a bound on $\Psi^s$ is straightforward.
\begin{lemma}\label{lem:psistep}
Suppose that for some $s_0<s\in \N$, it holds that $\Psi^s(t')\le B$ at all times $t'\ge t$.
Then $\Psi^{s+1}(t')\le B/\sigma$ at all times $t'\ge t+B/\mu$, where $\sigma=\mu/(\vartheta-1)$.
\end{lemma}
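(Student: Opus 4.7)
The plan is to fix an arbitrary time $t^*\ge t+B/\mu$ and node $v\in V$ and to show $\Psi_v^{s+1}(t^*)\le B/\sigma$; taking the supremum over $v$ then yields the claim. I will use the comparison time $\hat{t}:=t^*-B/\mu\ge t$ throughout. The crucial observation is that by \Cref{lem:psi_half_bound} and the hypothesis, $\Psi^{s+1/2}(\hat{t})\le \Psi^s(\hat{t})\le B=\mu(t^*-\hat{t})$, which is exactly the precondition required to invoke \Cref{cor:catchup} at level $s+1$ on the pair $(\hat{t},t^*)$; the corollary then yields $L_v(t^*)-L_v(\hat{t})\ge (t^*-\hat{t})+\Psi_v^{s+1}(\hat{t})$.

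The main step is to combine this catchup bound with \Cref{lem:waitup} applied on $(\hat{t},t^*)$. Provided $\Psi_v^{s+1}(\tau)>0$ on this open interval, the catchup surplus in $L_v$ exactly cancels the initial potential $\Psi_v^{s+1}(\hat{t})$, leaving
\[
\Psi_v^{s+1}(t^*)\le \Psi_v^{s+1}(\hat{t})+\vartheta(t^*-\hat{t})-(L_v(t^*)-L_v(\hat{t}))\le (\vartheta-1)(t^*-\hat{t})=\frac{B}{\sigma},
\]
which is the desired bound.

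The main (minor) obstacle is the positivity precondition in \Cref{lem:waitup}: if $\Psi_v^{s+1}$ dips to zero or below somewhere on $[\hat{t},t^*]$, the above argument does not directly apply. I plan to resolve this via continuity of $\Psi_v^{s+1}$ (which follows from continuity of each $L_w$ together with the finite max in \Cref{lem:valid}): let $\tau^*$ be the supremum of $\tau\in[\hat{t},t^*]$ with $\Psi_v^{s+1}(\tau)\le 0$. If $\tau^*=t^*$, we are done because $B/\sigma\ge 0\ge \Psi_v^{s+1}(t^*)$. Otherwise, continuity forces $\Psi_v^{s+1}(\tau^*)=0$ and $\Psi_v^{s+1}>0$ on $(\tau^*,t^*]$, so \Cref{lem:waitup} combined with the rate lower bound $L_v(t^*)-L_v(\tau^*)\ge t^*-\tau^*$ from \Cref{cor:rates} yields $\Psi_v^{s+1}(t^*)\le (\vartheta-1)(t^*-\tau^*)\le (\vartheta-1)(t^*-\hat{t})=B/\sigma$, completing the argument.
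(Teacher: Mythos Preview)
Your proof is correct and follows essentially the same route as the paper: fix a late time, back up by $B/\mu$, use \Cref{lem:psi_half_bound} together with the hypothesis to verify the precondition of \Cref{cor:catchup} at level $s+1$, and combine with \Cref{lem:waitup} to get the $(\vartheta-1)\cdot B/\mu=B/\sigma$ bound. The only difference is that you explicitly handle the positivity hypothesis of \Cref{lem:waitup} via the continuity argument, whereas the paper applies \Cref{lem:waitup} directly without commenting on this case; your treatment is more careful but not substantively different.
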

\begin{proof}
For any $t'\ge t+B/\mu$, consider $v\in V$ such that $\Psi^{s+1}(t')=\Psi^{s+1}_v(t')$.
By assumption and \Cref{lem:psi_half_bound}, we have that
\begin{equation*}
t'\ge t'-\frac{B-\Psi^{s+1}(t'-B/\mu)}{\mu}\ge t'-\frac{B-\Psi^{s+1/2}(t'-B/\mu)}{\mu}=t'-\frac{B}{\mu}+\frac{\Psi^{s+1/2}(t'-B/\mu)}{\mu}.
\end{equation*}
Thus, \Cref{lem:waitup} and \Cref{cor:catchup} yield that
\begin{equation*}
\Psi^{s+1}(t')=\Psi^{s+1}_v(t')\le \Psi^{s+1}_v\left(t'-\frac{B}{\mu}\right)+\vartheta\cdot\frac{B}{\mu}-\left(L_v(t')-L_v\left(t'-\frac{B}{\mu}\right)\right)\le (\vartheta-1)\cdot\frac{B}{\mu}=\frac{B}{\sigma}.\qedhere
\end{equation*}
\end{proof}
Putting the induction together, we can bound the potentials and, by extension, skews.
\begin{theorem}\label{thm:psi}
Suppose that $s_0<s\in \N$, $\sigma:=\mu/(\vartheta-1)\ge 2$, $\varepsilon>0$ is a constant, and $T\ge C(\Psi^s(a)+\W^s)/\mu$ for a suitable constant $C=C(\varepsilon)$.
Then for each $s\le s'\in \N$ and time $t \in [a+T/2,a+T]$, it holds that
\begin{equation*}
\Psi^{s'}(t)\le \frac{2\W^s/(\sigma-1)+\varepsilon \Psi^{s-1/2}(a)}{\sigma^{s'-s}}.
\end{equation*}
\end{theorem}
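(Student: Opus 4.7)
The plan is to prove the theorem by induction on $s'\ge s$, combining the base case provided by Corollary \ref{cor:psibase} with the one-step bootstrap provided by Lemma \ref{lem:psistep}. Denote $B_0:=2\W^s/(\sigma-1)+\varepsilon\Psi^{s-1/2}(a)$ and $B_i:=B_0/\sigma^i$ for $i\in\N$. The target bound at level $s'$ is exactly $B_{s'-s}$.

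First I would apply Corollary \ref{cor:psibase} to obtain that $\Psi^s(t)\le B_0$ for all $t\in[a+T/4,a+T]$, which handles the base case $s'=s$. For the inductive step, suppose we have already shown that $\Psi^{s+i}(t')\le B_i$ for all $t'\ge \tau_i:=a+T/4+\sum_{j=0}^{i-1}B_j/\mu$. Since $B_i\ge B_{i+1}\ge\dots$, this bound in particular holds uniformly on $[\tau_i,a+T]$, so Lemma \ref{lem:psistep} (applied with $B=B_i$ and starting time $\tau_i$) gives $\Psi^{s+i+1}(t')\le B_i/\sigma=B_{i+1}$ for all $t'\ge \tau_i+B_i/\mu=\tau_{i+1}$, closing the induction.

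It remains to verify that $\tau_{s'-s}\le a+T/2$ for every $s'\ge s$, which is what reduces everything to the assumption $T\ge C(\Psi^s(a)+\W^s)/\mu$. Here the key estimate is the geometric sum
\begin{equation*}
\sum_{i=0}^{\infty}\frac{B_i}{\mu}=\frac{B_0}{\mu}\cdot\frac{\sigma}{\sigma-1}\le \frac{2B_0}{\mu},
\end{equation*}
using $\sigma\ge 2$. Combined with the elementary bound $\Psi^{s-1/2}(a)\le \Psi^s(a)+\W^{s-1/2}+\W^s\le \Psi^s(a)+2\W^s$ obtained from Corollary \ref{cor:global_bound} (twice), this yields $B_0\in\BO(\W^s+\varepsilon\Psi^s(a))\subseteq \BO(\W^s+\Psi^s(a))$ treating $\varepsilon$ as a constant. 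Hence choosing the constant $C$ from the hypothesis large enough (depending on $\varepsilon$, and enlarging the $C(\varepsilon)$ from Corollary \ref{cor:psibase} if necessary) ensures $T/4\ge 2B_0/\mu$, so $\tau_{s'-s}\le a+T/4+2B_0/\mu\le a+T/2$ for all $s'\ge s$, as required.

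The routine parts are the geometric sum and the constant tuning. The only genuinely delicate point is that the inductive step relies on the bound at level $s+i$ holding on the entire tail $[\tau_i,a+T]$, not merely at a single time; this is why Lemma \ref{lem:psistep} is phrased with a uniform hypothesis, and why the monotonicity $B_i\ge B_{i+1}$ is essential to propagate the bound forward without degrading it. Once this is observed, the induction carries through cleanly for every $s'\ge s$, and taking $t\in[a+T/2,a+T]\supseteq [\tau_{s'-s},a+T]$ gives the claimed inequality.
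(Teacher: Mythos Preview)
Your proposal is correct and follows essentially the same approach as the paper: apply \Cref{cor:psibase} for the base case on $[a+T/4,a+T]$, then iterate \Cref{lem:psistep}, and finally bound the accumulated time shift by a geometric series dominated by $2B_0/\mu$. Your write-up is in fact slightly more explicit than the paper's, since you spell out via \Cref{cor:global_bound} why $\Psi^{s-1/2}(a)\le \Psi^s(a)+2\W^s$ and hence $B_0\in\BO(\W^s+\Psi^s(a))$, whereas the paper simply invokes ``$C$ sufficiently large'' at that point.
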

\begin{proof}
Abbreviate $\psi:=2\W^s/(\sigma-1)+\varepsilon \Psi^{s-1/2}(a)$.
By \Cref{cor:psibase}, $\Psi^s(t)\le \psi$ at times $t \in [a+T/4,a+T]$.
Inductive application of \Cref{lem:psistep} establishes the claimed bound for each $s'$ and times larger than
\begin{equation*}
a+\frac{T}{4}+\sum_{s''=s+1}^{s'} \frac{\psi}{\mu\sigma^{s''-s}}<a+\frac{T}{4}+ \frac{\sigma\psi}{\mu(\sigma-1)}\le a+\frac{T}{4}+ \frac{2\psi}{\mu}\le a+\frac{T}{2},
\end{equation*}
where the last two steps use that $\sigma\ge 2$ and that $C$ is sufficiently large.
\end{proof}

\begin{corollary}\label{cor:psi}
Suppose that $s_0<s\in \N$, $\sigma:=\mu/(\vartheta-1)\ge 2$, $\varepsilon>0$ is constant, and $T\ge C(\Psi^s(a)+\W^s)/\mu$ for a suitable constant $C$.
Then for each $s\le s'\in \N$ and time $t \in [a+T/2,a+T]$, it holds that
\begin{align*}
\Psi^{s'}(t)&\le \frac{(2+\varepsilon)\W^s}{\sigma^{s'-s}(\sigma-1)},\\
\G(t) &\le \W^s+\frac{(2+\varepsilon)\W^s}{\sigma-1},\mbox{ and}\\
\Ll_{\{v,w\}}(t) &\in |O_{v,w}|+4\left(s+\log_{\sigma}\left(\frac{\W^s}{\delta_{\{v,w\}}}\right)+\BO(1)\right)\delta_{\{v,w\}}\mbox{ for each }\{v,w\}\in E. 
\end{align*}
\end{corollary}
\begin{proof}
Follows from \Cref{thm:psi}, \Cref{cor:global_bound}, and \Cref{cor:local_bound}, where we plug in $s'=s+\lceil\log_{\sigma}\W^s/\delta_{\{v,w\}}\rceil$.
\end{proof}
\begin{proof}[Proof of \Cref{cor:uniform}]
Set $s:=\lceil \Delta/(4\delta(T))\rceil$.
As the graph is uniform with edge parameters $\Delta$ and $\delta:=\delta(T)$, \Cref{lem:s0} states that $s_0\in \Delta/(4\delta)+\BO(1)$.
Hence, $\W^s\in D(\Delta+\BO(\delta))$.
By \Cref{cor:waitup,cor:global_bound}, $\Psi^s(C\W^s/(2\mu))\le \Psi^s(0)+(\vartheta-1)C\W^s/(2\mu)\le \G(0)+\BO(\W^s)\in \BO(D\Delta)$.

We inductively apply \Cref{cor:psi} with\footnote{This requires that $C\W^s/(2\mu)\ge 1$. However, if this is not the case, we may modify \Cref{alg:gcs_basic} to trigger computational steps whenever $H_v(t)\bmod C\W^s/(2\mu)\cdot\N=0$.} $a=C\W^s/(2\mu)+iT/2$ for $i\in \N$, $\varepsilon=1$, and $s=s_0+1$, yielding the desired bound on the global skew at times $t\ge T\ge C\W^s/(2\mu)+T/2$.
Using that $\sigma\ge 2$, the corollary then bounds the local skew by
\begin{equation*}
2\Delta+4\delta\left(\log_{\sigma}\frac{D\Delta}{\delta}+\BO(1)\right)=2\Delta+4\delta\left(\log_{\sigma}D+\log_{\sigma}\frac{\Delta}{4\delta}+\BO(1)\right)\le 3\Delta+4\delta\left(\log_{\sigma}D+\BO(1)\right)
\end{equation*}
at such times.
\end{proof}

\section{Bounding Stabilization Time}\label{sec:stab}

The only piece missing for \Cref{alg:gcs_basic} to become self-stabilizing is to control the ``initial'' value of $\Psi^{s_0+1}$.
Fortunately, we can follow the typical ``detect and reset'' approach to self-stabilization here, in that either $\Psi^{s_0+1}\in \BO(\W^{s_0+1})$ or we can detect this and re-initialize the logical clocks so that offsets satisfy a sufficiently small bound.

\subsection*{The Idea}
For the sake of the argument, pretend that we could take an instantaneous snapshot of all $o_{\{v,w\}}(t)$, $\{v,w\}\in E$, at some time $t$.
Recall that $o_{v,w}(t)=L_v(t)-L_w(t)-O_{v,w}\pm \delta_{\{v,w\}}$ and $O_{v,w}=-O_{w,v}$.
For a walk $v_0,\ldots,v_{\ell}$, we thus have that
\begin{align*}
\sum_{i=1}^{\ell}o_{v_{i-1},v_i}(t)&=L_{v_0}(t)-L_{v_{\ell}}(t)-\sum_{i=1}^{\ell}O_{v_{i-1},v_i}\pm \sum_{i=1}^{\ell}\delta_{\{v_{i-1},v_i\}}\\
&=L_{v_0}(t)-L_{v_{\ell}}(t)+\sum_{i=1}^{\ell}O_{v_i,v_{i-1}}\pm \sum_{i=1}^{\ell}\delta_{\{v_i,v_{i-1}\}}.
\end{align*}
Up to the error term of $\pm\sum_{i=1}^{\ell}\delta_{\{v_i,v_{i-1}\}}$ and the need to add $\sum_{i=1}^{\ell}4s\delta_{\{v_i,v_{i-1}\}}$, this is exactly the contribution of the ``reversed'' walk $v_{\ell},\ldots,v_0$ to $\Psi^s(t)$.
Denoting for $s\in 1/2\cdot \N$
\begin{equation*}
\tilde{\Psi}^{s+1/4}(t):=\sup_{v,w\in V}\left\{\sup_{v_0,\ldots,v_{\ell} \in W_{v,w}}\left\{\sum_{i=1}^{\ell}o_{v_{i-1},v_i}(t)-(4s+1)\delta_{\{v_{i-1},v_i\}}\right\}\right\},
\end{equation*}
we thus have the following.
\begin{itemize}
  \item $\Psi^{s+1/2}(t)\le \tilde{\Psi}^{s+1/4}(t)\le \Psi^s(t)$.
  \item Denote by $\tilde{s}_0$ the minimum $s\in 1/2\cdot \N$ such that $\tilde{\Psi}^{s+1/4}(t)$ is finite, i.e., the graph $(V,\vec{E},(4s+1)\delta_e-o_e(t))$ has no negative cycles. Then $\tilde{s}_0\in \{s_0,s_0+1/2\}$, i.e., $s_0=\lfloor \tilde{s}_0\rfloor$.
  \item As for $s_0<s\in \N$ we have that $|\Psi^{s-1/2}(t)-\Psi^s(t)|\le \W^{s-1/2}+\W^s\le 2\W^s$, by \Cref{cor:global_bound} $\Psi^{s_0+1}(t)-2\W^{s_0+1}\le \tilde{\Psi}^{s_0+3/4}(t)\le \Psi^{s_0+1}(t)$.
\end{itemize}
All in all, this would enable us to estimate $\Psi^{s_0+1}(t)$ up to an error of $\W^{s_0+1}$, which is sufficient for our purposes.

\subsection*{Taking a Snapshot}
The main obstacle to this strawman approach is that we cannot capture all $o_{(v,w)}$ at the same time instant $t$.
We fall back to having nodes note down these values when they are reached by a flooding initiated by the (pre-defined) root $r$.
In addition, we store the logical times at which these values are captured.
Afterwards, $r$ collects all values and uses them to determine $\tilde{s}_0\in (s_0,s_0+\BO(1)]$ and estimate $\Psi^{\tilde{s}_0}$.
If $\Psi^{\tilde{s}_0}$ is too large, $r$ initializes a reset of the logical clocks.
To ensure timely self-stabilization, $r$ controls this entire process as the root of a shortest-path tree with respect to edge weights $d_e$, the maximum communication delay of edge $e$.
\begin{definition}\label{def:diam_delay}
By $\Wd$ we denote the diameter of the graph $(V,E,d_e)$, i.e., the minimum time that is sufficient to enable all nodes to communicate each other when forwarding messages.
\end{definition}
First, we observe that we can construct and maintain a self-stabilizing shortest-path tree quickly.
Intuitively, the construction follows a Bellman-Ford approach, with the known delay bounds enabling us to ``flush out'' incorrect distance information.
\begin{lemma}\label{lem:bfs}
A shortest-path tree of $(V,E,d_e)$ rooted at $r$ can be constructed with stabilization time $2\Wd$.
\end{lemma}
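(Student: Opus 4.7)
The plan is to realize a self-stabilizing Bellman--Ford procedure. Each non-root node $v$ stores a distance estimate $d_v$, a parent pointer $p_v$, and, for each neighbor $w$, the most recent value $\tilde d_w$ received from $w$. Whenever a message arrives (or on a periodic tick, using $H_v$), $v$ recomputes $d_v := \max\{0,\min_{w\in N(v)}(\tilde d_w + d_{\{v,w\}})\}$, sets $p_v$ to the argmin, and broadcasts $d_v$ to all neighbors. The root $r$ permanently advertises $d_r = 0$ and has no parent. The non-negativity clamp plus the fixed root value are the ``anchors'' that make spurious low initial state self-destruct.

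I would split the analysis into an upper-bound and a lower-bound step, each giving $\Wd$ of stabilization time. For the \emph{upper bound}, I would show that $d_v(t)\le d^*(v)$ at all times $t\ge \Wd$, where $d^*(v)$ is the $d_e$-distance from $r$ to $v$. This is a standard flooding argument along an $r$--$v$ shortest path: the correct value $0$ at $r$ moves outward, inducing values $d^*(w)$ at each $w$ on the path, and reaches $v$ within $d^*(v)\le \Wd$ time; the min computation makes this bound persistent. For the \emph{lower bound}, I would show that $d_v(t)\ge d^*(v)$ at all times $t\ge 2\Wd$, via a backward chain of dependencies: if $d_{v_0}(t_0)<d^*(v_0)$ at $t_0\ge 2\Wd$, the achieving neighbor $v_1$ must satisfy $\tilde d_{v_1}(t_1)=d_{v_0}(t_0)-d_{\{v_0,v_1\}}$ with $t_0-t_1\le d_{\{v_0,v_1\}}$, and by the triangle inequality $d_{v_1}(t_1)<d^*(v_1)$. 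Iterating generates a walk whose cumulative edge weight and cumulative time regress are both bounded by $d_{v_0}(t_0)\le d^*(v_0)\le \Wd$; since $t_0\ge 2\Wd$, each $t_i\ge \Wd\ge \max_e d_e$ lies past the point at which stale in-flight messages have been flushed, so the chain cannot terminate on initial state. The non-negativity clamp forces termination of the chain exactly at $r$, at which point $d_r=0$ forces $d_{v_0}(t_0)=\sum_i d_{e_i}\ge d^*(v_0)$, contradicting the assumption. Combining both bounds gives $d_v(t)=d^*(v)$ for all $v$ at times $t\ge 2\Wd$, whereupon the parent pointers trace a shortest-path tree.

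The main obstacle I anticipate is the lower-bound chain argument, specifically ensuring that the walk must reach $r$ rather than cycling or petering out on stale state. The three ingredients that make this go through are the non-negativity enforcement on $d_v$ (so the chain cannot terminate at a non-root with a zero value without creating a negative $\tilde d$ somewhere), the inequality $\max_e d_e\le \Wd$ (so after $\Wd$ time no message is older than the algorithm), and the hypothesis $t_0\ge 2\Wd$ (giving a time budget of $\Wd$ to regress). Some care will also be needed to handle the discretization: messages arrive at isolated times, so ``the computation of $d_v(t)$'' refers to the most recent recomputation at $v$ before $t$, and the $t_i$ in the chain are the corresponding send/receive times. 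These are bookkeeping concerns rather than genuine obstructions.
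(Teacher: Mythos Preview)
Your algorithm and high-level plan coincide with the paper's, and the backward-chain argument for the lower bound is a standard alternative. However, the claimed step bound $t_0-t_1\le d_{\{v_0,v_1\}}$ is not justified by your protocol. The stored value $\tilde d_{v_1}$ at $v_0$ is the most recently \emph{received} message from $v_1$; if $v_1$ broadcasts with period $P$ (which you leave unspecified), that message can be up to $P+d_{\{v_0,v_1\}}$ old. Even with the paper's choice of sending on edge $e$ every $d_e$ local time, each chain step may regress by up to $2d_{e_i}$, so the cumulative time regress is bounded only by $2d_{v_0}(t_0)<2\Wd$, giving merely $t_k>0$. You also conflate flushing of in-flight messages (immediate, by the model assumption) with flushing of stale \emph{stored} values $\tilde d_w$ (which requires a fresh message from $w$ to arrive); hence $t_k>0$ does not preclude the chain terminating on corrupted initial state at a non-root node. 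Your argument therefore yields $O(\Wd)$ stabilization but not the exact $2\Wd$ claimed.

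The paper obtains the exact constant by a forward induction over node distances $d_v$ rather than a backward chain: it shows that at time $2d_v$, every per-edge distance variable is at least $\min\{d_w,d_v\}$ and every node $w$ with $d_w\le d_v$ already stores the correct value via its tree parent. The enabling design choices are (a) clamping the per-edge variable for edge $e$ to be $\ge d_e$, so even corrupted initial state respects this floor, and (b) sending on edge $e$ every $d_e$ local time, so that any message received at time $t$ was sent no earlier than $t-2d_{\{w,x\}}$, where the inductive hypothesis for a smaller distance already applies. This coupling of the per-step time cost to the per-step distance gain is what buys the factor of two that the backward chain loses.
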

\begin{proof}
Non-root nodes maintain for each neighbor a variable indicating their distance to the root via this neighbor, as well as which neighbor they believe to be their parent.
A variable corresponding to edge $e$ cannot attain smaller values than $d_e$.
The root will send a $0$-message along each of its incident edges $e$ every $d_e$ time according to its hardware clock.

For every incident edge $e$, every $d_e$ time each non-root node sends the minimum of its current distance variables along $e$.
Whenever receiving a value $d$ on edge $\{v,w\}$, non-root node $v$ updates its distance variable for this edge to $d+d_{\{v,w\}}$ and sets its parent variable to the neighbor corresponding to the variable providing this minimum (following some consistent tie-breaking rule in case of equality).

To see that these rules indeed result in the parent variables stabilizing to those of a shortest-path tree within $\BO(\Wd)$ time, we perform an induction over the correct distances $d_v$, $v\in V$, from the root.
The claim is that at times $t\ge 2d_v$, (i) distance variables of node $w\in V$ are at least $\min\{d_w,d_v\}$ and (ii) if $d_w\le d_v$, then there is a distance variable equalling $d_w$ that corresponds to the parent of $w$ in the tree.
Note that from this the claim of the lemma is immediate, since all nodes are within distance $\Wd$ from the root.

We will prove this by induction over the finitely many distances of nodes from the root.
The induction is anchored at $d_v=0$, since the root always has correct values by construction.
For the step, suppose that the induction hypothesis holds and consider a node $v\in V$ in distance $d_v$ from the root.
For any node $w$ and edge $\{w,x\}\in E$, suppose that $w$ receives a message at time $t\ge 2d_v$ or later.
As $x$ sends a message over $\{w,x\}$ at least every $d_{\{w,x\}}$ time, this message was sent no earlier than time $t-2d_{\{w,x\}}$.
By the induction hypothesis, it then either contained the correct distance of $x$ to the root or a value of at least $d_v-d_{\{w,x\}}$.
In the former case, such a message will not cause $w$ to set its distance variable for $\{w,x\}$ to a smaller value than its own distance the root;
in the latter case, such a message will not cause $w$ to set its distance variable for $\{w,x\}$ to a smaller value than $d_v$.

Claim (i) follows, since it is trivial if $t\le 2d_{\{w,x\}}$, and $w$ receives its first message from $x$ at the latest by time $2d_{\{w,x\}}$.
For claim (ii), suppose that $x$ is the true parent of $w$ in the tree.
Then, by the induction hypothesis, the minimum of its distance variables is equal to $d_x$ at times $t\ge 2d_x$.
Thus, the latest message $w$ received from $x$ by time $t\ge 2d_w$, which was sent no earlier than time $t-2d_{\{x,w\}}\ge 2d_w-2d_{\{x,w\}}=2d_x$, by the induction hypothesis carries value $d_x$.
Hence $w$ sets its corresponding variable to $d_x+d_{\{x,w\}}=d_w$.
\end{proof}

With the tree in place, it is trivial to add a self-stabilizing computation of the weighted depth of the tree with stabilization time $\BO(\Wd)$, which then can be distributed to all nodes with stabilization time $\BO(\Wd)$.
This depth is a factor-$2$ approximation of $\Wd$.
Using this information, every $\BO(\Wd)$ time the root can schedule, via the tree, an execution of the following routine using its local time, guaranteeing that there is no overlap between steps.
\begin{enumerate}
  \item Clear/initialize all nodes' state for the routine via the tree ($\BO(\Wd)$ time).
  \item Perform a flooding to trigger each node $v\in V$ capturing the values $o_{v,w}(t_v)$ for each $\{v,w\} \in E$ at a time $t_v$, on the subgraph induced by all edges whose weights do not exceed twice the depth of the tree. Note that this ensures for any $\{v,w\}\in E$ that $|t_v-t_w|\le d_{\{v,w\}}$, either because the edge is part of the subgraph or its delay exceeds $\Wd$, which is an upper bound on the time required for the flooding. This completes within $\BO(\Wd)$ time, in the sense that no flooding messages remain in transit on any link of this subgraph.
  \item For each $\{v,w\}\in E$, the root learns $o_{v,w}(t_v)$, $o_{w,v}(t_w)$, and $\delta_{v,w}$. This is done by means of a convergecast on the tree ($\BO(\Wd)$ time).
  \item The root determines whether a reset of the logical clocks is required (see below) and, if so, triggers it by another flooding on the same subgraph ($\BO(\Wd)$ time).
\end{enumerate}
To fully specify the above routine, it remains to discuss how the root can decide, based on the collected values, whether a reset is necessary, and how it is executed.

\subsection*{Estimating \texorpdfstring{$\Psi$}{Psi}}
To this end, we first show how the root can estimate $\Psi^{\tilde{s}_0+1}(t_r)$.
This estimate will be expressed in terms of the diameter of $G$ with respect to distances $\delta_e$.
\begin{definition}[Estimate Distance]
We denote by $d^{\delta}$ the distance function of $(V,E,\delta_e)$ and by $\Wdelta$ its diameter, i.e., $\max_{v,w\in V}\{d^{\delta}(v,w)\}$.
\end{definition}
\begin{lemma}\label{lem:est_Psi}
After the above routine is complete, the root $r$ can compute a value $\psi$ such that
\begin{equation*}
\Psi^s(t_r)\le \psi \le \Psi^{s_0+1/2}(t_r)+\BO(\Wdelta)
\end{equation*}
for some $s_0<s\in s_0+\BO(1)$.
\end{lemma}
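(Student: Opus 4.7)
My plan is to define $\hat{\Psi}^{\lambda}(t_r) := \sup_{v,w,\text{walks}}\bigl\{\sum_{i=1}^{\ell} o_{v_{i-1},v_i}(t_{v_{i-1}}) - 4\lambda\,\delta_{\{v_{i-1},v_i\}}\bigr\}$, the analogue of $\tilde{\Psi}^{\lambda}(t_r)$ evaluated on the root's collected snapshot. Both $\hat{\Psi}^{\lambda}(t_r)$ (at levels where it is finite) and $\hat{s}_0 := \min\{\lambda\in 1/2\cdot\N: \hat{\Psi}^{\lambda}(t_r)<\infty\}$ are computable by $r$ via standard shortest-path and minimum-mean-weight-cycle routines on the directed graph $(V,\vec{E})$ with weight $4\lambda\,\delta_{\{v,w\}}-o_{v,w}(t_v)$ on edge $(v,w)$. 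I will set $\psi := \hat{\Psi}^{\lambda}(t_r) + 2c\Wdelta$ for $\lambda := \hat{s}_0 + C'$, where $C'$ is a constant depending only on $c$.

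The technical core is a snapshot-error lemma asserting that, for every walk $v_0,\ldots,v_{\ell}$,
\[
\sum_{i=1}^{\ell}o_{v_{i-1},v_i}(t_{v_{i-1}}) \;=\; \sum_{i=1}^{\ell}o_{v_{i-1},v_i}(t_r) \;\pm\; (c+1)\sum_{i=1}^{\ell}\delta_{\{v_{i-1},v_i\}} \;\pm\; 2c\Wdelta\cdot \mathbf{1}_{v_0\neq v_{\ell}}.
\]
To derive it, I expand $o_{v_{i-1},v_i}(t_{v_{i-1}}) = L_{v_{i-1}}(t_{v_{i-1}})-L_{v_i}(t_{v_{i-1}})-e_{v_{i-1},v_i}(t_{v_{i-1}})$ and telescope the $L$ contributions. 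The interior terms $L_{v_i}(t_{v_i})-L_{v_i}(t_{v_{i-1}})-(t_{v_i}-t_{v_{i-1}})$ are each bounded by $(\beta-\alpha)|t_{v_i}-t_{v_{i-1}}|\le (\beta-\alpha)d_{\{v_{i-1},v_i\}}\le c\delta_{\{v_{i-1},v_i\}}$, and the endpoint residuals $[L_{v_0}(t_{v_0})-t_{v_0}]-[L_{v_{\ell}}(t_{v_{\ell}})-t_{v_{\ell}}]$ are rewritten in terms of $t_r$ at a cost of $(\beta-\alpha)|t_v-t_r|\le (\beta-\alpha)\W_d\le c\Wdelta$ per endpoint. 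For closed walks these endpoint corrections are simply absent, which is the crucial reason why the $2c\Wdelta$ term vanishes for cycles. A further $\pm \delta_{\{v_{i-1},v_i\}}$ enters per edge when I replace $e(t_{v_{i-1}})$ by $e(t_r)$; summed this gives $(c+1)\sum_i\delta$, and the remaining expression matches $\sum_i o_{v_{i-1},v_i}(t_r)$ after the same telescoping.

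Combining the snapshot lemma with the excerpt's already-established $\Psi^{\beta+1/2}(t)\le\tilde{\Psi}^{\beta+1/4}(t)\le\Psi^{\beta}(t)$ yields the sandwich
\[
\Psi^{\lambda+(c+2)/4}(t_r) - 2c\Wdelta \;\le\; \hat{\Psi}^{\lambda}(t_r) \;\le\; \Psi^{\lambda-(c+2)/4}(t_r) + 2c\Wdelta
\]
whenever the right-hand side is finite. Specializing the snapshot lemma to cycles (no endpoint term) gives $\hat{s}_0\in[s_0-(c+1)/4,\;s_0+1/2+(c+1)/4]$, i.e., $\hat{s}_0=s_0\pm\BO(1)$. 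Choosing $C'$ as a constant large enough to guarantee $\lambda-(c+2)/4\ge s_0+1/2$ from the worst-case $\hat{s}_0\ge s_0-(c+1)/4$ (concretely, $C'\ge (2c+5)/4$ suffices), and setting $s := \lceil \lambda + (c+2)/4\rceil\in\N$, monotonicity of $\Psi^{s'}$ in $s'$ then yields $\Psi^s(t_r)\le\Psi^{\lambda+(c+2)/4}(t_r)\le\psi$ and $\psi\le\Psi^{\lambda-(c+2)/4}(t_r)+4c\Wdelta\le\Psi^{s_0+1/2}(t_r)+\BO(\Wdelta)$, with $s>s_0$ and $s\in s_0+\BO(1)$, as required.

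The main obstacle is the careful bookkeeping of errors, and most critically the endpoint-versus-cycle dichotomy in the snapshot lemma: without the cycle-specific cancellation of the $2c\Wdelta$ term, the bound on $\hat{s}_0-s_0$ would inflate by a diameter-sized amount and $s$ would no longer lie in $s_0+\BO(1)$. A secondary subtlety is that the captured graph is genuinely directed---edge $(v,w)$ carries $o_{v,w}(t_v)$ while $(w,v)$ carries $o_{w,v}(t_w)$---but the snapshot lemma applies in both directions simultaneously and mirrors the asymmetry already present in $G^s$, so no further symmetrization is required.
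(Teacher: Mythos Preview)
Your argument is correct and yields the same conclusion, but it is organized differently from the paper's own proof. The paper works with the \emph{symmetrized} weights $\omega_{(v,w)} := (o_{v,w}(t_v)-o_{w,v}(t_w))/2$ and carries out one long telescoping computation from scratch, directly relating $\sum_i \omega_{(v_{i-1},v_i)}$ to $L_w(t_r)-L_v(t_r)+\sum_i O_{v_{i-1},v_i}$ up to errors of order $(c+\tfrac12)\delta(W)$ plus an endpoint term of $\BO(\Wdelta)$; it then reads off both the bound on $\tilde{s}_0$ and the sandwich for $\psi$ from that single chain of equalities. You instead keep the raw asymmetric weights $o_{v,w}(t_v)$, isolate a clean ``snapshot lemma'' comparing $\sum_i o_{v_{i-1},v_i}(t_{v_{i-1}})$ to $\sum_i o_{v_{i-1},v_i}(t_r)$ (with the crucial observation that the $2c\Wdelta$ endpoint term vanishes for cycles), and then compose this with the sandwich $\Psi^{s+1/2}\le \tilde{\Psi}^{s+1/4}\le \Psi^s$ that the paper already derived in the ``Idea'' subsection.

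What each route buys: your modular factoring through $\tilde{\Psi}$ makes the argument shorter and the cycle/endpoint dichotomy completely transparent, at the minor cost of a slightly looser per-edge constant ($c{+}1$ versus the paper's $c{+}\tfrac12$, irrelevant inside the $\BO(1)$). The paper's symmetrization, on the other hand, is self-contained (it does not need to invoke the $\tilde{\Psi}$ sandwich as a separate ingredient) and naturally prepares the ground for \Cref{cor:synt_L}, where the same symmetrized distances are reused to estimate pairwise offsets $L_v(t_r)-L_w(t_r)$.
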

\begin{proof}
Note that for any $v\in V$, and times $t$, $t'$, it holds that
\begin{equation*}
L_v(t')-L_v(t)=t'-t\pm\max\{|\beta-1|,|\alpha-1|\}|t'-t|=t'-t\pm(\beta-\alpha)|t'-t|.
\end{equation*}
Consider any $v,w\in V$ and walk $W=v_0,\ldots,v_{\ell}\in W_{v,w}$.
We have that
\begin{align*}
&\,\sum_{i=1}^{\ell}\frac{o_{v_i,v_{i-1}}(t_{v_i})-o_{v_{i-1},v_i}(t_{v_{i-1}})}{2}\\
=&\,\sum_{i=1}^{\ell}\frac{L_{v_i}(t_{v_i})-L_{v_{i-1}}(t_{v_i})-(L_{v_{i-1}}(t_{v_{i-1}})-L_{v_i}(t_{v_{i-1}}))-e_{v_i,v_{i-1}}(t_{v_i})+e_{v_{i-1},v_i}(t_{v_{i-1}})}{2}
\end{align*}
\begin{align*}
=&\,\sum_{i=1}^{\ell}\frac{L_{v_i}(t_{v_i})-L_{v_{i-1}}(t_{v_i})-(L_{v_{i-1}}(t_{v_{i-1}})-L_{v_i}(t_{v_{i-1}}))}{2}-O_{v_i,v_{i-1}}\pm \frac{\delta_{v_i,v_{i-1}}}{2}\\
=&\,\sum_{i=1}^{\ell}\frac{L_{v_i}(t_{v_i})-L_{v_{i-1}}(t_{v_i})-(L_{v_{i-1}}(t_{v_{i-1}})-L_{v_i}(t_{v_{i-1}}))}{2}-O_{v_i,v_{i-1}}\pm \frac{\delta(W)}{2}\\
=&\,\sum_{i=1}^{\ell}L_{v_i}(t_{v_i})-L_{v_{i-1}}(t_{v_{i-1}})+\frac{L_{v_i}(t_{v_{i-1}})-L_{v_i}(t_{v_i})+L_{v_{i-1}}(t_{v_{i-1}})-L_{v_{i-1}}(t_{v_i})}{2}\\
&\qquad +O_{v_i,v_{i-1}}\pm \frac{\delta(W)}{2}\\
=&\,\sum_{i=1}^{\ell}L_{v_i}(t_{v_i})-L_{v_{i-1}}(t_{v_{i-1}})-(t_{v_i}-t_{v_{i-1}})-O_{v_i,v_{i-1}}\pm \frac{\delta(W)}{2}+\sum_{i=1}^{\ell}(\beta-\alpha)|t_{v_i}-t_{v_{i-1}}|\\
=&\,\sum_{i=1}^{\ell}L_{v_i}(t_{v_i})-L_{v_{i-1}}(t_{v_{i-1}})-(t_{v_i}-t_{v_{i-1}})-O_{v_i,v_{i-1}}\pm \frac{\delta(W)}{2}+\sum_{i=1}^{\ell}(\beta-\alpha)d_{v_i,v_{i-1}}\\
=&\,\sum_{i=1}^{\ell}L_{v_i}(t_{v_i})-L_{v_{i-1}}(t_{v_{i-1}})-(t_{v_i}-t_{v_{i-1}})-O_{v_i,v_{i-1}}\pm \left(c+\frac{1}{2}\right)\delta(W)\\
=&\,L_w(t_w)-L_v(t_v)-(t_w-t_v)+\sum_{i=1}^{\ell}O_{v_{i-1},v_i}\pm \left(c+\frac{1}{2}\right)\delta(W),
\end{align*}
where we used that $O_{v_i,v_{i-1}}=-O_{v_{i-1},v_i}$.
In particular, if $v=w$, we have that
\begin{equation*}
\sum_{i=1}^{\ell}\frac{o_{v_i,v_{i-1}}(t_{v_i})-o_{v_{i-1},v_i}(t_{v_{i-1}})}{2}=\sum_{i=1}^{\ell}O_{v_{i-1},v_i}\pm \left(c+\frac{1}{2}\right)\delta(W).
\end{equation*}
For $(v,w)\in \vec{E}$, let $\omega_{(v,w)}:= (o_{v,w}(t_v)-o_{w,v}(t_w))/2$.
Let $\tilde{s}_0\in \N$ be minimum with the property that $(V,\vec{E},4\tilde{s}_0\delta_e-\omega_e)$ has no negative cycle.
By the above bound, then $\tilde{s}_0=s_0\pm c+1$.

Fix $s:=\tilde{s}_0+c+2$, i.e., $s_0< s\le s_0+2c+3\in s_0+\BO(1)$.
By the above calculation, we have for each $v,w\in V$ and path $W=v_0,\ldots,v_{\ell}\in W_{v,w}$ that
\begin{align*}
&\,\sum_{i=1}^{\ell}\frac{o_{v_i,v_{i-1}}(t_{v_i})-o_{v_{i-1},v_i}(t_{v_{i-1}})}{2}\\
=\,&L_w(t_w)-L_v(t_v)-(t_w-t_v)+\sum_{i=1}^{\ell}O_{v_{i-1},v_i}\pm \left(c+\frac{1}{2}\right)\delta(W)\\
=\,&L_w(t_r)-L_v(t_r)+\sum_{i=1}^{\ell}O_{v_{i-1},v_i}\pm \left(c+\frac{1}{2}\right)\delta(W)+(\beta-\alpha)(|t_v-t_r|+|t_w-t_r|)\\
=\,&L_w(t_r)-L_v(t_r)+\sum_{i=1}^{\ell}O_{v_{i-1},v_i}\pm \left(c+\frac{1}{2}\right)\delta(W)+2(\beta-\alpha)\Wd\\
=\,&L_w(t_r)-L_v(t_r)+\sum_{i=1}^{\ell}O_{v_{i-1},v_i}\pm \left(c+\frac{1}{2}\right)\delta(W)+2\Wdelta.
\end{align*}
Hence, for each path $W\in W_{v,w}$ we have that
\begin{align*}
&\,L_w(t_r)-L_v(t_r)-\sum_{i=1}^{\ell}(4(s_0+2c+3)\delta_{v_{i-1},v_i}-O_{v_{i-1},v_i})-2\Wdelta\\
\le&\, \sum_{i=1}^{\ell}\frac{o_{v_i,v_{i-1}}(t_{v_i})-o_{v_{i-1},v_i}(t_{v_{i-1}})}{2}-4(s_0+c+2)\delta_{v_{i-1},v_i}\\
\le&\,L_w(t_r)-L_v(t_r)-\sum_{i=1}^{\ell}(4(s_0+1)\delta_{v_{i-1},v_i}-O_{v_{i-1},v_i})+2\Wdelta
\end{align*}
We conclude that
\begin{align*}
&\,\Psi^{s_0+2c+3}(t_r)-2\Wdelta\\
\le\,& \sup_{v,w\in V}\left\{\sup_{v_0,\ldots,v_{\ell} \in W_{v,w}}\left\{\sum_{i=1}^{\ell}\frac{o_{v_i,v_{i-1}}(t_{v_i})-o_{v_{i-1},v_i}(t_{v_{i-1}})}{2}-4(s_0+c+2)\delta_{v_{i-1},v_i}\right\}\right\}\\
\le\,& \Psi^{s_0+1}(t_r)+2\Wdelta.
\end{align*}
Noting that $(V,\vec{E},4(s_0+c+2)\delta_{v_{i-1},v_i}-\omega_e)$ has no negative cycles either, the root can compute the inner supremum as the negated distance $-d(v,w)$ from $v$ to $w$ in $(V,\vec{E},4(s_0+c+2)\delta_e-\omega_e)$, i.e., $\Psi^{s_0+2c+3}(t_r)-2\Wdelta\le \max_{(v,w)\in V\times V}\{-d(v,w)\}\le \Psi^{s_0+1}(t_r)+2\Wdelta$.
Choosing $\psi=2\Wdelta+\max_{(v,w)\in V\times V}\{-d(v,w)\}$ yields the claim of the lemma.
\end{proof}
We can bound $\Wdelta$ by $\W^s$ to obtain an estimate of $\Psi^s(t_r)$ for $s_0<s\le s_0+\BO(1)$.
\begin{lemma}\label{lem:Wdelta}
For $s_0<s\in \N$, it holds that $2\Wdelta \le \W^s$.
\end{lemma}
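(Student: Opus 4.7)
The plan is to relate cycle weights in $G^s$ to $\delta$-lengths via the no-negative-cycle property of $G^{s-1/2}$, and then apply this to a cycle built from two paths between the pair of nodes realizing $\Wdelta$.

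First I would observe that since $s_0<s\in\N$ forces $s\ge s_0+1$, we have $s-1/2\ge s_0+1/2$, so $G^{s-1/2}$ contains no negative directed cycle. For any closed walk $C=v_0,v_1,\ldots,v_\ell=v_0$ in $\vec E$, denote $|C|_\delta:=\sum_{i=1}^{\ell}\delta_{\{v_{i-1},v_i\}}$ and $\sum_C O:=\sum_{i=1}^{\ell}O_{v_{i-1},v_i}$. The no-negative-cycle property on level $s-1/2$ reads $\omega^{s-1/2}(C)=(4s-2)|C|_\delta-\sum_C O\ge 0$, i.e.\ $\sum_C O\le(4s-2)|C|_\delta$. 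Substituting into the definition of $\omega^s$ yields the key estimate
\begin{equation*}
\omega^s(C)=4s|C|_\delta-\sum_C O\ge 4s|C|_\delta-(4s-2)|C|_\delta=2|C|_\delta
\end{equation*}
for every closed walk $C$ in $G^s$.

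Next I would choose $v^*,w^*\in V$ attaining $d^\delta(v^*,w^*)=\Wdelta$. For any path $P\in W_{v^*,w^*}$ and any path $Q\in W_{w^*,v^*}$, the concatenation $P\cdot Q$ is a closed walk with $|P\cdot Q|_\delta=|P|_\delta+|Q|_\delta\ge 2\Wdelta$ (by definition of $\Wdelta$ and symmetry of $\delta$). Combining this with the key estimate gives $\omega^s(P)+\omega^s(Q)=\omega^s(P\cdot Q)\ge 2\cdot 2\Wdelta=4\Wdelta$. Taking the minimum over $P$ and $Q$ separately (they are independent), we obtain $d^s(v^*,w^*)+d^s(w^*,v^*)\ge 4\Wdelta$. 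Since $\W^s\ge\max\{d^s(v^*,w^*),d^s(w^*,v^*)\}\ge(d^s(v^*,w^*)+d^s(w^*,v^*))/2$, the claim $\W^s\ge 2\Wdelta$ follows.

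The only slightly delicate step is the derivation of $\omega^s(C)\ge 2|C|_\delta$: one has to recognize that the $O$ contributions cancel telescopically in a favorable way once the no-negative-cycle condition on the \emph{next lower half-level} $s-1/2$ is invoked. Everything else is a straightforward combination of the triangle inequality and the definitions from \Cref{def:graph,def:diam_delay}. No new constants or auxiliary constructions are required.
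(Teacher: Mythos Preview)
Your proof is correct and uses essentially the same mechanism as the paper: both exploit the absence of negative cycles at a half-level below $s$ (you at $s-1/2$, the paper at $s_0+1/2$) to extract a surplus of $2\delta_e$ per edge in $\omega^s$, and then apply this to a closed walk through the $\delta$-diametral pair. The only cosmetic difference is that you bound $d^s(v^*,w^*)+d^s(w^*,v^*)\ge 4\Wdelta$ and average, whereas the paper selects one direction via a WLOG assumption and bounds it directly by $2\Wdelta$; your symmetric formulation via the clean cycle estimate $\omega^s(C)\ge 2|C|_\delta$ is arguably tidier and avoids the (unused) path $P$ and the apparent typo in the paper's definition of $Q$.
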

\begin{proof}
Choose $v,w\in V$ such that $d^{\delta}(v,w)=\Wdelta$.
Without loss of generality, suppose that $d^{s_0+1/2}(v,w)\le d^{s_0+1/2}(w,v)$.
Let $P$ be a shortest path from $v$ to $w$ in $G^{s_0+1/2}$ and $Q$ be a shortest path from $w$ to $v$ in $G^s$.
We have that $\omega^s(Q)\ge d^{s_0+1/2}(w,v)\ge 0$, as by the triangle inequality
\begin{equation*}
2d^{s_0+1/2}(w,v)\ge d^{s_0+1/2}(w,v)+d^{s_0+1/2}(v,w)\ge d^{s_0+1/2}(w,w)\ge 0.
\end{equation*}
Hence,
\begin{equation*}
\W^s\ge d^s(w,v)=\omega^{s_0+1/2}(Q)+4(s-s_0-1/2)\delta(Q)\ge 2\Wdelta.\qedhere
\end{equation*}
\end{proof}

\begin{corollary}\label{cor:est_Psi}
After the above routine is complete, the root $r$ can compute a value $\psi$ such that
\begin{equation*}
\Psi^{s}(t_r)\le \psi \le \Psi^{s}(t_r)+\BO(\W^s)
\end{equation*}
for some $s_0<s\in s_0+\BO(1)$.
\end{corollary}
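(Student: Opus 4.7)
The plan is to combine the estimate delivered by Lemma \ref{lem:est_Psi} with the two structural bounds that have already been established: Lemma \ref{lem:Wdelta}, which relates $\Wdelta$ to $\W^s$, and Corollary \ref{cor:global_bound}, which relates potentials at different levels via the global skew. Lemma \ref{lem:est_Psi} gives us, for some $s_0 < s \in s_0 + \BO(1)$, a value $\psi$ computable by the root satisfying
\[
\Psi^{s}(t_r) \le \psi \le \Psi^{s_0+1/2}(t_r) + \BO(\Wdelta).
\]
The lower bound is already in the desired form, so all the work lies in converting the upper bound into the claimed $\Psi^s(t_r) + \BO(\W^s)$.

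First I would dispose of the $\BO(\Wdelta)$ term: since $s > s_0$ and $s \in \N$, Lemma \ref{lem:Wdelta} applies and gives $2\Wdelta \le \W^s$, so $\BO(\Wdelta) \subseteq \BO(\W^s)$. Next I would bound the gap $\Psi^{s_0+1/2}(t_r) - \Psi^s(t_r)$. Applying Corollary \ref{cor:global_bound} at both levels $s_0 + 1/2$ and $s$ (both of which exceed $s_0$, so the corollary is applicable) yields
\[
\Psi^{s_0+1/2}(t_r) \le \G(t_r) + \W^{s_0+1/2} \le \Psi^s(t_r) + \W^s + \W^{s_0+1/2} \le \Psi^s(t_r) + 2\W^s,
\]
where the last step uses that $\W^{s'}$ is non-decreasing in $s'$ (since $\omega^{s'}$ is non-decreasing on every oriented edge).

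Plugging both estimates back into the upper bound from Lemma \ref{lem:est_Psi} gives $\psi \le \Psi^s(t_r) + 2\W^s + \BO(\W^s) = \Psi^s(t_r) + \BO(\W^s)$, which together with the untouched lower bound establishes the corollary. I do not anticipate any real obstacles here: the statement is essentially a bookkeeping step combining already-proven facts, and the only mild subtlety is remembering that Corollary \ref{cor:global_bound} is valid at half-integer levels (it is stated for $s \in 1/2 \cdot \N$ with $s > s_0$) so that it may be invoked at level $s_0 + 1/2$.
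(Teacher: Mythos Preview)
Your proof is correct and follows essentially the same route as the paper: invoke \Cref{lem:est_Psi}, absorb $\BO(\Wdelta)$ into $\BO(\W^s)$ via \Cref{lem:Wdelta}, and bound the gap between potentials at the two levels. The paper's one-line proof cites \Cref{lem:psi_half_bound} for the level comparison, but that lemma by itself points the wrong way ($\Psi^{s}\le \Psi^{s_0+1/2}$); your appeal to \Cref{cor:global_bound} at both levels is exactly the argument that is needed here (and is the same device the paper uses in the proof of \Cref{lem:psibase}).
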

\begin{proof}
Follows from \Cref{lem:est_Psi} by applying \Cref{lem:psi_half_bound,lem:Wdelta} to see that
\begin{equation*}
\Psi^{s_0+1}(t_r)+\BO(\Wdelta)\le\Psi^{s}(t_r)+\BO(\W^s).\qedhere
\end{equation*}
\end{proof}

\subsection*{Performing Resets}
We now have established that when initiating the above routine at time $t_r$, the root can compute the bound $\psi$ of $\Psi^{\tilde{s}_0}(t_r)$ by time $t_r+\BO(\Wd)=t_r+\BO(\Wdelta/(\beta-\alpha))=t_r+\BO(\W^{\tilde{s}_0}/\mu)$, where $\tilde{s}_0\in s_0+\BO(1)$.
We use this to have the root trigger a reset of the logical clocks if this estimate exceeds the bound on $\Psi^{\tilde{s}_0}(t_r)$ our machinery guarantees.
For this approach, the last missing piece is to perform a sufficiently accurate and fast reset.
\begin{lemma}\label{lem:reset}
If the root decides to perform a reset (when completing a properly initialized instance of the above estimation routine) at time $t$, we can ensure that this is completed by time $t+\Wd$ and it holds that $\Psi^{\tilde{s}_0}(t+\Wd)\in \BO(\W^{\tilde{s}_0})$.
\end{lemma}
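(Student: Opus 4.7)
The approach is a simple ``synchronizing reset'': leveraging the shortest-path tree already maintained by Lemma~\ref{lem:bfs}, the root floods its current logical-clock reading $L^\star := L_r(t)$ over the same subgraph used by the estimation routine. Each node $v$, upon receipt at some time $t_v'\in[t,t+\Wd]$, overwrites its state by setting $L_v(t_v'):=L^\star$ and $r_v:=1$. The flood finishes within $\Wd$ time by construction, which gives the timing claim.

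The skew bound then follows almost immediately. At time $t+\Wd$, each $L_v$ equals $L^\star$ plus the progress accrued since its individual reset time $t_v'$; this progress lies in $[0,\beta\Wd]$, so every pair of logical clocks differs by at most $\beta\Wd$, i.e.\ $\G(t+\Wd)\le\beta\Wd$. Corollary~\ref{cor:global_bound} then converts this into a potential bound,
\begin{equation*}
\Psi^{\tilde{s}_0}(t+\Wd)\le\G(t+\Wd)+\W^{\tilde{s}_0}\le\beta\Wd+\W^{\tilde{s}_0}.
\end{equation*}
The model assumption $(\beta-\alpha)d_e\le c\delta_e$, together with $\alpha=1$, $\beta\in\BO(\mu)$ (using $\mu\ge 1$), and Lemma~\ref{lem:Wdelta}, yields $\Wd\in\BO(\Wdelta/\mu)\subseteq\BO(\W^{\tilde{s}_0}/\mu)$, hence $\beta\Wd\in\BO(\W^{\tilde{s}_0})$, closing the bound.

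\paragraph*{Main obstacle.} The main subtlety is that the reset is intrinsically asynchronous, so at time $t+\Wd$ there is an unavoidable spread of up to $\beta\Wd$ in logical-clock values across $V$. The heart of the argument is that the model's link between $\delta_e$ and $d_e$ keeps $\Wd$ small enough relative to $\W^{\tilde{s}_0}$ for this spread to be absorbed into the target $\BO(\W^{\tilde{s}_0})$ bound via Corollary~\ref{cor:global_bound}. A secondary concern is that while the flood is in progress some nodes have already reset while others have not, so the triggers of Algorithm~\ref{alg:gcs_basic} may fire in ``odd'' ways; since the lemma only controls $\Psi^{\tilde{s}_0}$ at the \emph{end} of the flood, and the analysis of Section~\ref{sec:algo} takes over from $t+\Wd$ onward with a freshly small potential, this is immaterial for the self-stabilization argument.
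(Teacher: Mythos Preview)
Your approach is correct and noticeably simpler than the paper's. Where you broadcast a single value $L^\star=L_r(t)$ and have every node overwrite its clock, the paper has the root compute \emph{per-node} shifts $\ell_v$ from the estimation data gathered at time $t_r$---chosen so that replacing $L_v(t_r)$ by $L_v(t_r)+\ell_v$ would already yield $\Psi^{\tilde{s}_0}(t_r)\in\BO(\W^{\tilde{s}_0})$---and then disseminates each $\ell_v$ over the tree; node $v$ applies the shift on receipt. Because each node's clock continues to run across the whole interval $[t_r,t+\Wd]$, the residual spread between any two shifted clocks at time $t+\Wd$ is only the rate drift $(\beta-\alpha)(t+\Wd-t_r)\in\BO((\beta-\alpha)\Wd)\subseteq\BO(\Wdelta)\subseteq\BO(\W^{\tilde{s}_0})$, directly from $c\delta_e\ge(\beta-\alpha)d_e$ and \Cref{lem:Wdelta}, with no dependence on $\mu$.

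Your uniform overwrite, by contrast, incurs a full spread of $\beta\Wd$ rather than $(\beta-\alpha)\Wd$: resetting to the \emph{same} value at different times costs you an additive $\alpha(t_v'-t_u')\le\Wd$ on top of the drift term. You absorb this by invoking $\mu\ge 1$ to get $\beta\in\BO(\mu)$, but the lemma is used inside \Cref{thm:stab}, whose only assumption on $\mu$ is $\sigma=\mu/(\vartheta-1)\ge 2$; the paper keeps $\mu$ explicit in bounds such as $\BO(\W^{\tilde{s}_0}/\mu)$ rather than hiding it in $\BO(1)$. So your argument covers the regime of \Cref{cor:uniform,cor:stab} (where indeed $\mu>2\vartheta-1>1$) but not the lemma's full generality. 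A further benefit of the paper's offset-based reset is that it carries over to external synchronization by simply enforcing $\ell_{v_0}=0$, which a uniform overwrite cannot do without first estimating the reference time.
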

\begin{proof}
The root $r$ computes an assignment $\ell_v$ of logical clock values for each node $v\in V$ such that $\ell_r=0$ and, if we shifted $L_v$ by $\ell_v$ for each $v\in V$ without other changes, then $\Psi^{\tilde{s}_0}(t_r)\in \BO(\W^{\tilde{s}_0})$.
By \Cref{cor:global_bound}, such an assignment exists: If we had $L_v(t_r)=L_r(t_r)$ for all $v\in V$, then $\Psi^{\tilde{s}_0}(t_r)\in \BO(\W^{\tilde{s}_0})$, so $\ell_v=L_r(t_r)-L_v(t_r)$ achieves this.
Moreover, by \Cref{lem:est_Psi}, $r$ can estimate $L_w(t_r)-L_v(t_r)-d^{\tilde{s}_0}(v,w)$ with error $\BO(\W^{\tilde{s}_0})$, implying that the root can indeed compute such an assignment.\footnote{One might think of this as $r$ being allowed to freely shift $L_v$ for $v\in V$ and changing the estimates $o_{v,w}$, $(v,w)\in \vec{E}$, accordingly. This results in valid error bounds for the fictional scenario in which the clocks are shifted, so the bounds used in the proof of \Cref{lem:est_Psi} apply in the same way.}

The root informs each node $v\in V$ about the computed value $\ell_v$, which is done by relaying these values over the tree.
On reception at time $t_v'$, $v$ resets its logical clock by setting $L_v(t_v')\gets \hat{L}_v(t_v')+\ell_v$, where $\hat{L}_v(t_v')$ denotes the logical clock value of $v$ just before executing the reset.
Note that this process completes by time $t+\Wd$, so it remains to show that $\Psi^{\tilde{s}_0}(t+\Wd)\in \BO(\W^{\tilde{s}_0})$ as a result.

Consider an arbitrary node $v\in V$.
Using that $t-t_r\in \BO(\Wd)$, we see that\footnote{Note that $L_v$ obeys the rate bounds except at time $t_v'$, when it ``jumps'' by $\ell_v$.}
\begin{align*}
L_v(t+\Wd)&=L_v(t_v')+L_v(t+\Wd)-L_v(t_v')\\
&=
\hat{L}_v(t_v')+\ell_v+L_v(t+\Wd)-L_v(t_v')\\
&=L_v(t_r)+\ell_v+\hat{L}_v(t_v')-L_v(t_r)+L_v(t+\Wd)-L(t_v')\\
&=L_v(t_r)+\ell_v+t+\Wd-t_v'+t_v'-t_r\pm (\beta-\alpha)(t+\Wd-t_v'+t_v'-t_r)\\
&=L_v(t_r)+\ell_v+t+\Wd-t_r\pm (\beta-\alpha)(t+\Wd-t_r)\\
&=L_v(t_r)+\ell_v+t+\Wd-t_r\pm \BO((\beta-\alpha)\Wd)\\
&=L_v(t_r)+\ell_v+t+\Wd-t_r\pm \BO(\Wdelta)\\
&=L_v(t_r)+\ell_v+t+\Wd-t_r\pm \BO(\W^{\tilde{s}_0}),
\end{align*}
where the last step applies \Cref{lem:Wdelta}.
For any $v,w\in V$, it follows that
\begin{equation*}
L_w(t+\Wd)-L_v(t+\Wd)=L_w(t_r)+\ell_w-(L_v(t_r)+\ell_v)\pm \BO(\W^{\tilde{s}_0})
\end{equation*}
and, by the choice of $\ell_v$, $v\in V$, hence
\begin{equation*}
\Psi^{\tilde{s}_0}(t+\Wd)\in \BO(\W^{\tilde{s}_0}).\qedhere
\end{equation*}
\end{proof}

\begin{theorem}\label{thm:stab}
Suppose that $\sigma:=\mu/(\vartheta-1)\ge 2$, $\varepsilon>0$ is constant, and $T\ge C\W^s/\mu$ for a suitable constant $C>0$.
Then a variant of \Cref{alg:gcs_basic} satisfies the following.
There is $\tilde{s}_0\in \N$, $s_0<\tilde{s}_0\in s_0+\BO(1)$, such that for each $\tilde{s}_0\le s\in \N$ and time $t \in [a+T/2,a+T]$, it holds that
\begin{align*}
\Psi^s(t)&\le \frac{(2+\varepsilon)\W^{\tilde{s}_0}}{\sigma^{s-\tilde{s}_0}(\sigma-1)},\\
\G(t) &\le \W^{\tilde{s}_0}+\frac{(2+\varepsilon)\W^{\tilde{s}_0}}{\sigma-1},\mbox{ and}\\
\Ll_{\{v,w\}}(t) &\in |O_{v,w}|+4\left(\tilde{s}_0+\log_{\sigma}\left(\frac{\W^{\tilde{s}_0}}{\delta_{\{v,w\}}}\right)+\BO(1)\right)\delta_{\{v,w\}}\mbox{ for each }\{v,w\}\in E. 
\end{align*}
\end{theorem}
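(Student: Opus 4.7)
The strategy is to augment \Cref{alg:gcs_basic} with the periodic snapshot–estimate–reset mechanism developed in the preceding subsections, and then to reduce the analysis to \Cref{cor:psi} by showing that after a bounded recovery period the hypothesis $\Psi^{\tilde{s}_0}(a)\in\BO(\W^{\tilde{s}_0})$ holds for every interval $[a,a+T]$ we wish to analyze. Concretely, each node will run the Bellman-Ford tree maintenance of \Cref{lem:bfs} in parallel with its logical clock; every $C'\Wd$ hardware-clock time, for a sufficiently large absolute constant $C'$, the root will initiate the flooding–snapshot–convergecast routine underlying \Cref{cor:est_Psi}, and if the resulting estimate $\psi$ exceeds a threshold $B:=K\W^{\tilde{s}_0}$ (with $\W^{\tilde{s}_0}$ also read off the same snapshot, using the computed level $\tilde{s}_0$ and the resulting non-negative-weight graph), the root will invoke the reset of \Cref{lem:reset}.

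The first step will be to verify \emph{stability under nominal operation}: once $\Psi^{\tilde{s}_0}(a)\in\BO(\W^{\tilde{s}_0})$ for some interval start $a$, \Cref{cor:psi} supplies $\Psi^{\tilde{s}_0}(t)\in\BO(\W^{\tilde{s}_0})$ throughout $[a+T/2,a+T]$, while \Cref{cor:waitup} bounds the growth of $\Psi^{\tilde{s}_0}$ by $(\vartheta-1)\cdot\BO(\Wd)\in\BO(\W^{\tilde{s}_0}/\sigma)$ during the short windows between two successive estimations. Together with the $\BO(\W^{\tilde{s}_0})$ error term of \Cref{cor:est_Psi}, every $\psi$ the root produces satisfies $\psi\le\Psi^{\tilde{s}_0}(t_r)+\BO(\W^{\tilde{s}_0})\le B$ provided $K$ is a sufficiently large universal constant, so no spurious reset is triggered. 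The second step will be to establish \emph{recovery from an arbitrary initial state}: within $2\Wd$ the tree stabilizes, within another $\BO(\Wd)$ the first well-initialized estimation completes, and either (i) $\psi\le B$ in which case \Cref{cor:est_Psi} directly yields $\Psi^{\tilde{s}_0}(t_r)\le B\in\BO(\W^{\tilde{s}_0})$, or (ii) a reset is triggered and \Cref{lem:reset} delivers $\Psi^{\tilde{s}_0}(t_r+\Wd)\in\BO(\W^{\tilde{s}_0})$. In both cases, within total time $\BO(\Wd)=\BO(\W^{\tilde{s}_0}/\mu)$ we reach a time $a_0$ with $\Psi^{\tilde{s}_0}(a_0)\in\BO(\W^{\tilde{s}_0})$; combined with step one, this bound persists thereafter.

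With these two pieces in place, the theorem will follow by invoking \Cref{cor:psi} with $s:=\tilde{s}_0$ on any interval $[a,a+T]$ starting at $a\ge a_0$: the hypothesis $\Psi^{\tilde{s}_0}(a)\in\BO(\W^{\tilde{s}_0})$ is met, and the assumed $T\ge C\W^{\tilde{s}_0}/\mu$ is exactly the requirement of \Cref{cor:psi}. The bounds on $\Psi^{s}(t)$, $\G(t)$ and $\Ll_{\{v,w\}}(t)$ for $t\in[a+T/2,a+T]$ are inherited verbatim, and the standard overlapping-interval covering argument described in \Cref{sec:model} extends them to all $t\ge a_0+T/2$, which is absorbed into the stabilization time.

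The main obstacle will be choosing $K$ so that a single threshold simultaneously avoids spurious resets and detects genuine faults. The tension is between the $\BO(\W^{\tilde{s}_0})$ estimation slack of \Cref{cor:est_Psi}, the $(1+3/(\sigma-1))\W^{\tilde{s}_0}$ nominal global-skew bound of \Cref{cor:psi}, and the $\BO(\W^{\tilde{s}_0}/\sigma)$ transient growth between estimations; all three are $\BO(\W^{\tilde{s}_0})$, so a uniform constant $K$ works, but it drives the hidden constant in the stabilization time and has to be fixed before analyzing any particular interval. A secondary subtlety is that $\tilde{s}_0$ depends on the interval through the nominal offsets $O_{v,w}$, yet \Cref{lem:est_Psi} already guarantees $\tilde{s}_0\in s_0+\BO(1)$ with interval-independent constants, so the root's thresholding computation is uniformly consistent with the potential whose bounds are ultimately invoked.
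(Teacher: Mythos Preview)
Your plan is essentially the paper's: periodically estimate $\Psi^{\tilde{s}_0}$ via \Cref{cor:est_Psi}, reset via \Cref{lem:reset} if above a threshold of order $\W^{\tilde{s}_0}$, and then invoke \Cref{cor:psi}. There is, however, one point the paper handles explicitly that your plan glosses over, and without it your ``step one'' does not yet apply at the moment you claim.

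A reset makes the logical clocks jump discontinuously, and this \emph{invalidates the offset estimates}: the error $e_{v,w}(t)=L_v(t)-L_w(t)-o_{v,w}(t)$ jumps by $\ell_v-\ell_w$, so the stability guarantee $|e_{v,w}(t')-e_{v,w}(t)|<\delta_{\{v,w\}}$ underpinning all of \Cref{sec:algo} fails until the estimation mechanism recovers (after its own stabilization time~$S$). During that window the fast and slow triggers are meaningless, so neither \Cref{cor:waitup} nor \Cref{cor:psi} applies, and $\Psi^{\tilde{s}_0}$ can grow at the full rate $\vartheta(1+\mu)-1$ rather than $\vartheta-1$. Hence your stability-under-nominal-operation argument cannot be invoked immediately at $a_0$; you must first wait an additional $\BO(S)$ time and absorb the resulting $\BO(\vartheta\mu S)$ growth of $\Psi^{\tilde{s}_0}$ into the constant before the hypothesis of \Cref{cor:psi} is restored. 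The paper does exactly this: after a reset the root waits $\BO(\vartheta S+\W^{\tilde{s}_0}/\mu)$ before launching the next estimation, which both keeps $\Psi^{\tilde{s}_0}\in\BO(\W^{\tilde{s}_0})$ and prevents a spurious second reset. With this amendment your plan coincides with the paper's proof.
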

\begin{proof}
The root repeatedly executes the estimate procedure discussed above, cf.~\Cref{lem:est_Psi}.
It then triggers a reset if the estimated skew exceeds $(2+\varepsilon)\W^{\tilde{s}_0}/(\sigma-1)$, cf.~\Cref{lem:reset}.
After a reset, it waits for $\BO(\vartheta S+\W^{\tilde{s}_0}/\mu)$ time before re-initializing the next instance of the estimate procedure, where the constants are chosen such that there is sufficient time to accommodate the time bound of $\BO(\W^{\tilde{s}_0}/\mu)$ stated in \Cref{thm:psi} until $\Psi^{\tilde{s}_0}\le (2+\varepsilon)\W^{\tilde{s}_0}/(\sigma-1)$ is guaranteed.
However, note that the clock ``jumps'' caused by a reset invalidate the offset estimates, and these need to recover their guarantees first before we can apply the analysis from \Cref{sec:algo}.
With $S$ being the stabilization time of the estimation procedure, we hence need to wait for $S$ time first, during which $\Psi^{s_0+1/2}$ grows in the worst case by $(\vartheta\mu+(\vartheta-1))S\le(2\vartheta\mu S)$;
this increases the stabilization time of the algorithm by an additive $\vartheta S$, which is factored into the time the root waits.

The proof of stabilization is now straightforward.
Within $\BO(\vartheta S +\W^{\tilde{s}_0}/\mu)$ time, the root initializes a fresh instance of the estimate procedure, clearing any possible corrupted state of this subroutine first.
We distinguish two cases.
First, if this instance triggers a reset, by \Cref{lem:reset} this ensures that $\Psi^{\tilde{s}_0}\in \BO(\W^{\tilde{s}_0})$ after the routine completes.
\Cref{cor:psi} then shows the claimed bounds, where by \Cref{thm:psi} the root waits for long enough before initializing the next instance of the estimate procedure such that it is guaranteed to not trigger a reset, because skews are small enough.

On the other hand, if no reset is triggered, by \Cref{cor:psi} we have that $\Psi^{\tilde{s}_0}\in \BO(\W^{\tilde{s}_0})$.
By \Cref{cor:psi}, the claimed bounds follow, where again the root waits for long enough before initializing the next instance so that no reset is possible in the future.
\end{proof}
\Cref{cor:stab} now follows analogously to \Cref{cor:uniform} from this theorem, noting that replacing $s_0$ by $\tilde{s}_0$ only affects the constants in the $\BO$-notation in this special case.

\section{External Synchronization}\label{sec:external}

Intuitively, our approach slows down the logical clocks of all nodes $v\in V$ by a factor of $\zeta\in (\vartheta,1+\mu)$ and simulates a virtual reference node $v_0$ with $L_{v_0}(t)=H_v(t)= t$ for some $\vartheta<\zeta<1+\mu$ that never satisfies the fast condition on any level $s>s_0$.
This ensures that each $v\in R$ has an estimate of the ``logical clock'' $L_{v_0}(t)=t$ of $v_0$ via its external reference, and can simulate an edge $e=\{v_0,v\}$ to this virtual node with $\delta_e=\delta_v(T)$.
Because $v_0$ ``implements'' the algorithm, in the sense that (i) it satisfies $\frac{dL_{v_0}}{dt}=\frac{dH_{v_0}}{dt}=1$, (ii) the fast condition never holds on any level $s$ we use in our analysis, and (iii) from the perspective of the simulated algorithm, the logical clock rate of $1$ lies in between the minimum and maximum rates of $1/\zeta$ and $\vartheta(1+\mu)/\zeta$, our machinery can be applied to obtain bounds on $|L_{v_0}(t)-L_v(t)|$ for all nodes $v\in V$ and times $t$, resulting in an external clock synchronization algorithm.
However, note that $v_0$, which is always ``slow,'' has a hardware clock running at rate $1$, exceeding the (simulated) nominal hardware clock rate of $1/\zeta$
by a factor of $\zeta$.
Accordingly, we must replace $\vartheta$ by $\zeta$ in our analysis, so that the base of the logarithm in the local skew bound becomes $\sigma=\mu/(\zeta-1)$.

Ideally, we want to choose $\zeta$ as close as possible to $\vartheta$ without breaking the crucial invariant that $v_0$ never satisfies the fast condition on a level $s>s_0$.
To achieve this, it suffices to maintain that $\Psi_{v_0}^s(t)=0$.
In the following, denote by $H=(V',E')=(V\dot{\cup} \{v_0\},E\dot{\cup}\{\{v_0,v\}\,|\,v\in R\})$ the simulated graph.

\subsection*{Basic Algorithm}
For the sake of exposition, we first assume that the algorithm runs on $H$ without the ``slowdown'' of clocks discussed above, where $v_0$ increases its logical clock at rate $\zeta$.
At the end of this subsection, we derive our results by adding the simulation of $v_0$, which requires dividing all clock rates by $\zeta$.
For notational convenience, throughout this subsection we do not distinguish $H$ from $G$ in the notation, with the understanding that all statements are about $H$ and the associated values, not $G$.

First, we re-establish the key lemmas of \Cref{sec:algo} in slightly modified form, taking into account the behavior of $v_0$.
\begin{lemma}\label{lem:virtual_waitup}
Let $s_0< s\in \N$, $v\in V$, and $t'>t$.
If $\Psi_v^s(\tau)>0$ for all $\tau\in(t,t')$, then
\begin{equation*}
\Psi_v^s(t')\le \Psi_v^s(t)+\zeta(t'-t)-(L_v(t')-L_v(t)).
\end{equation*}
\end{lemma}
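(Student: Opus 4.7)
The plan is to follow the proof of \Cref{lem:waitup} essentially verbatim, carrying it out in the simulated graph $H$ and replacing $\vartheta$ by $\zeta$ to account for the rate of $v_0$. The first step is to observe that, by \Cref{lem:implements} together with the stipulated rate $\frac{dL_{v_0}}{d\tau} \equiv \zeta$, all logical clocks are differentiable outside a discrete set of times, so $\Psi_v^s$ is differentiable almost everywhere on $(t, t')$ with
\[
\frac{d\Psi_v^s}{d\tau}(\tau) = \max_{w \in V'\text{ maximizing } \Psi_v^s(\tau)}\left\{\frac{dL_w}{d\tau}(\tau) - \frac{dL_v}{d\tau}(\tau)\right\}.
\]
Integrating from $t$ to $t'$ will then reduce the claim to showing $\frac{dL_w}{d\tau}(\tau) \le \zeta$ at every maximizer $w \in V'$ of $\Psi_v^s(\tau) > 0$.

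I will split into two cases. If $w = v_0$, the bound $\frac{dL_{v_0}}{d\tau}(\tau) = \zeta$ is immediate by construction. If $w \in V$, I will replay the slow-condition argument from \Cref{lem:waitup} inside $H$: since $\Psi_v^s(\tau) > 0$ forces $w \neq v$, I can pick a neighbor $x$ of $w$ on a shortest $v$-$w$ path in $G^s$ and use the identity $d^s(v,w) = d^s(v,x) + \omega^s(x,w)$ to derive the first clause of the slow condition at $w$; maximality of $w$ combined with the triangle inequality $d^s(v,y) - d^s(v,w) \le d^s(w,y) \le \omega^s(w,y)$ for every neighbor $y$ of $w$ in $H$ (including the edge to $v_0$ if $w \in R$) yields the second clause. \Cref{lem:implies} and \Cref{lem:mutex} will then give $r_w = 1$, whence $\frac{dL_w}{d\tau}(\tau) \le \vartheta \le \zeta$.

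The main obstacle is ensuring these shortest-path manipulations are legitimate inside $H$, i.e., that the level-$s$ graph over $H$ still has no negative cycles for the levels $s > s_0$ under consideration and that $d^s$ and $\omega^s$ are well-defined on the edges incident to $v_0$. This will rely on the surrounding construction that defines $\delta_{\{v_0,v\}} := \delta_v(T)$ and an $O_{v_0,v}$ analogous to other edges via the external-reference offset estimates, together with the invariant $\Psi_{v_0}^s \equiv 0$ that guarantees $v_0$ truly evolves at rate $\zeta$ rather than occasionally triggering some fast behavior that would invalidate the derivation above. Once those ingredients are in place, the proof is a mechanical adaptation of \Cref{lem:waitup}; the single substantive change is that the rate bound $\vartheta$ is loosened to $\zeta$ precisely because of $v_0$.
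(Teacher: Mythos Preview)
Your approach is essentially the paper's: replay \Cref{lem:waitup} in $H$, handle the new case $w=v_0$ by noting its rate is $\zeta$, and for $w\in V$ run the slow-condition argument unchanged to get rate $\le\vartheta\le\zeta$. The paper's proof says exactly this in two sentences.

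One unnecessary dependency you introduce: you claim the argument relies on the invariant $\Psi_{v_0}^s\equiv 0$ so that $v_0$ ``truly evolves at rate $\zeta$ rather than occasionally triggering some fast behavior.'' This is not needed and would in fact be circular, since that invariant is only established afterward (via \Cref{lem:virtual_waitup_v0} and \Cref{cor:virtual_waitup_v0}). The virtual node $v_0$ is \emph{defined} to have $\frac{dL_{v_0}}{d\tau}=\zeta$ unconditionally; it does not run the algorithm and never checks triggers. So in the case $w=v_0$ the bound $\frac{dL_w}{d\tau}(\tau)=\zeta$ holds by construction, and your proof goes through without invoking $\Psi_{v_0}^s=0$ at all.
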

\begin{proof}
The proof is analogous to that of \Cref{lem:waitup}, with the following exception.
If $w=v_0$ maximizes $\Psi^s_v(\tau)$, we have that $\frac{dL_w}{dt}(\tau)=\zeta$, which is reflected by $\vartheta$ being replaced by $\max\{\vartheta,\zeta\}=\zeta$ in the claim of the lemma.
\end{proof}
For $v=v_0$ we get the stronger statement that $\Psi_{v_0}^s$ decreases at rate at least $\zeta-\vartheta$ whenever positive.
\begin{lemma}\label{lem:virtual_waitup_v0}
Let $s_0< s\in \N$ and $\Psi_{v_0}^s(t)>0$.
Then
\begin{equation*}
\frac{d\Psi_{v_0}^s}{dt}(t)\le \vartheta-\zeta.
\end{equation*}
\end{lemma}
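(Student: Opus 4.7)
The plan is to mimic the proof of \Cref{lem:waitup}, exploiting the envelope-style differentiation of a max, but to push through the stronger rate bound $\zeta$ for $v_0$ combined with the standard bound $\vartheta$ on any other maximizer.

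First, I would express $\Psi_{v_0}^s$ in its distance form via \Cref{lem:valid}, so that
\begin{equation*}
\Psi_{v_0}^s(t)=\max_{w\in V'}\{L_w(t)-L_{v_0}(t)-d^s(v_0,w)\},
\end{equation*}
valid since $s>s_0$. Away from the discrete set of times at which some $r_x$ changes, all $L_w$ are differentiable and
\begin{equation*}
\frac{d\Psi_{v_0}^s}{dt}(t)=\max_{w\text{ maximizes }\Psi_{v_0}^s(t)}\left\{\frac{dL_w}{dt}(t)-\frac{dL_{v_0}}{dt}(t)\right\}=\max_{w\text{ maximizes}}\left\{\frac{dL_w}{dt}(t)\right\}-\zeta,
\end{equation*}
using that $\frac{dL_{v_0}}{dt}\equiv\zeta$ in the simulation considered in this subsection. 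It then suffices to show that for every maximizer $w$, $\frac{dL_w}{dt}(t)\le\vartheta$.

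Next, I would observe that the assumption $\Psi_{v_0}^s(t)>0$ rules out $w=v_0$ as a maximizer. Indeed, taking the trivial (empty) walk gives $L_{v_0}(t)-L_{v_0}(t)-d^s(v_0,v_0)=-d^s(v_0,v_0)\le 0$, since $d^s(v_0,v_0)\ge 0$ (no negative cycles in $G^s$ for $s>s_0$). Hence any maximizer $w$ lies in $V$.

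For such $w\in V$, I would reuse verbatim the argument from the proof of \Cref{lem:waitup} to conclude that $w$ satisfies the slow condition at time $t$: the second requirement comes from the maximality $L_x(t)-L_{v_0}(t)-d^s(v_0,x)\le L_w(t)-L_{v_0}(t)-d^s(v_0,w)$ combined with the subtractive triangle inequality $d^s(v_0,x)-d^s(v_0,w)\le d^s(w,x)\le\omega^s(w,x)$; and the first requirement comes from picking a neighbor $x$ of $w$ on a shortest $v_0$-to-$w$ path in $G^s$. By \Cref{lem:implies} this implies the slow trigger, and by \Cref{lem:mutex,lem:implements} we get $r_w(t)=1$, so $\frac{dL_w}{dt}(t)=\frac{dH_w}{dt}(t)\le\vartheta$. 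Combining with the displayed envelope identity yields $\frac{d\Psi_{v_0}^s}{dt}(t)\le\vartheta-\zeta$ at every point of differentiability, which is the statement.

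The only subtlety, and the one I would be most careful about, is the handling of the measure-zero set of times where some $r_x$ switches and the derivatives fail to exist; but this is immaterial for pointwise bounds on $\frac{d\Psi_{v_0}^s}{dt}(t)$ as stated (and, if needed later, for integration against it, as in the use of \Cref{lem:waitup}). Everything else is a direct reuse of machinery already developed in \Cref{sec:algo}, with $\vartheta$ on the $v_0$ side replaced by $\zeta$.
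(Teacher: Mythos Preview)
Your proposal is correct and follows essentially the same approach as the paper: rule out $v_0$ as a maximizer because $\Psi_{v_0}^s(t)>0$ while the $w=v_0$ term equals $0$, then invoke the slow-condition argument from \Cref{lem:waitup} for any remaining maximizer $w\neq v_0$ to get $\frac{dL_w}{dt}(t)\le\vartheta$, and combine with $\frac{dL_{v_0}}{dt}=\zeta$. The paper's proof is terser (it simply says ``analogous to the proof of \Cref{lem:waitup}''), whereas you spell out the envelope derivative and the shortest-path neighbor argument explicitly, but the logic is identical.
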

\begin{proof}
As $\Psi_{v_0}^s(t)>0=L_{v_0}(\tau)-L_{v_0}(\tau)-d^s(v_0,v_0)$, $v_0$ does not maximize $\Psi_{v_0}^s(t)$.
Thus, analogous to the proof of \Cref{lem:waitup}, we obtain that
\begin{equation*}
\frac{d\Psi_{v_0}^s}{dt}(t)\le \vartheta-\frac{dL_{v_0}}{dt}(t)=\vartheta-\zeta.\qedhere
\end{equation*}
\end{proof}
\begin{corollary}\label{cor:virtual_waitup_v0}
Let $s_0< s\in \N$. Then for $t\ge a+\Psi^s(a)/(\zeta-\vartheta)$, it holds that $\Psi_{v_0}^s(t)=0$.
\end{corollary}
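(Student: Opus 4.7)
The plan is to combine three easy observations about $\Psi_{v_0}^s$ on $[a, a+T]$: (a) it is non-negative, (b) it starts no larger than $\Psi^s(a)$, and (c) whenever it is positive, it decreases at rate at least $\zeta - \vartheta > 0$ (by assumption $\zeta > \vartheta$). Together, these force $\Psi_{v_0}^s$ to reach zero within $\Psi^s(a)/(\zeta-\vartheta)$ time and preclude any subsequent rebound above zero.

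For (a), observe that since $s > s_0$, the graph $G^s$ has no negative directed cycle, so $d^s(v_0, v_0) = 0$; taking $w = v_0$ in \Cref{def:potential} gives $\Psi_{v_0}^s(t) \ge L_{v_0}(t) - L_{v_0}(t) - d^s(v_0, v_0) = 0$. Item (b) is immediate from the definition $\Psi^s(a) = \max_{v \in V'} \Psi_v^s(a) \ge \Psi_{v_0}^s(a)$. Item (c) is exactly \Cref{lem:virtual_waitup_v0}.

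Putting these together, let $I \subseteq [a, a+T]$ be any maximal (relatively) open subinterval on which $\Psi_{v_0}^s > 0$, and let $t_1 := \inf I$. By (c) and integration (using that $\Psi_{v_0}^s$ is differentiable off the discrete set of times where some $r_v$ changes, by \Cref{lem:implements}), for every $t \in I$ we have $\Psi_{v_0}^s(t) \le \Psi_{v_0}^s(t_1) + (\vartheta - \zeta)(t - t_1)$. Combined with (a), this bounds the length of $I$ by $\Psi_{v_0}^s(t_1)/(\zeta - \vartheta)$. For the interval containing $a$ (if any), continuity and (b) give length at most $\Psi^s(a)/(\zeta - \vartheta)$; for every other such interval, continuity forces $\Psi_{v_0}^s(t_1) = 0$, and then the bound yields length $0$, a contradiction with $I$ being nonempty and open. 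Hence no maximal positive subinterval can start after time $a$, so $\Psi_{v_0}^s(t) = 0$ for all $t \ge a + \Psi^s(a)/(\zeta - \vartheta)$.

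I do not anticipate a real obstacle; the only point that warrants care is the non-negativity invariant (a), because distances $d^s(v, w)$ for general $v, w$ may well be negative for $s$ just above $s_0$. The saving grace is that the closed walk from $v_0$ to itself has non-negative weight by the definition of $s_0$, which is exactly what makes $v_0$ a natural "anchor" for the virtual-node construction.
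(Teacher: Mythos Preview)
Your proof is correct and follows the same approach as the paper: use non-negativity of $\Psi_{v_0}^s$ (via $d^s(v_0,v_0)=0$), the bound $\Psi_{v_0}^s(a)\le\Psi^s(a)$, and \Cref{lem:virtual_waitup_v0} to conclude that the potential hits zero within the stated time and cannot become positive again. The paper compresses the last point into the phrase ``this becomes an invariant once this happens,'' whereas you make the maximal-interval argument explicit; the substance is identical.
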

\begin{proof}
By \Cref{lem:virtual_waitup_v0}, $\Psi_{v_0}^s$ becomes $0$ within $\Psi^s_{v_0}(a)/(\zeta-\vartheta)\le \Psi^s(a)/(\zeta-\vartheta)$ time.
As it is non-negative (since $L_{v_0}(t)-L_{v_0}(t)-d^s(v_0,v_0)=0$), the lemma implies that this becomes an invariant once this happens.
\end{proof}

\Cref{lem:catchup} applies under the additional condition that $\Psi_{v_0}^{s-1/2}(t)=0$.
\begin{lemma}\label{lem:virtual_catchup}
For each $s_0<s\in \N$, node $v\in V$, and times $t<t'$, we have the following.
If $\Psi_{v_0}^{s-1/2}(t)=0$, then
\begin{equation*}
L_v(t')-L_v(t) \ge t'-t+\min\left\{\Psi^{s-1/2}_v(t),\mu(t'-t)-\Psi^{s-1/2}(t)+\Psi^{s-1/2}_v(t)\right\}.
\end{equation*}
\end{lemma}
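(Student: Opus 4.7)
The plan is to follow the structure of the proof of Lemma \ref{lem:catchup}, adapting everything to the virtual graph $H$, and to verify that the sole place where the behavior of $v_0$ matters — namely the rate of the logical clock at a maximizer of an auxiliary potential — is handled cleanly by the extra hypothesis $\Psi_{v_0}^{s-1/2}(t)=0$.

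Concretely, I would first choose $w\in V'$ with $\Psi_v^{s-1/2}(t)=L_w(t)-L_v(t)-d^{s-1/2}(v,w)$, define $\phi_x(\tau):=L_w(t)+(\tau-t)-L_x(\tau)-d^{s-1/2}(x,w)$ for $x\in V'$, and set $\Phi(\tau):=\max_{x\in V'}\phi_x(\tau)$. A quick check shows $\phi_x$ is non-increasing for every $x\in V'$: for $x\in V$ because $dL_x/dt\ge 1$, and for $x=v_0$ because $dL_{v_0}/dt=\zeta>1$. Exactly as in the original proof, $L_v(t')-L_v(t)-(t'-t)=\phi_v(t)-\phi_v(t')\ge \phi_v(t)-\Phi(t')$, so if $\Phi(\tau)\le 0$ for some $\tau\in[t,t']$, monotonicity of $\phi_v$ delivers $L_v(t')-L_v(t)\ge t'-t+\Psi_v^{s-1/2}(t)$, matching the first branch of the minimum.

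The remaining case has $\Phi(\tau)>0$ throughout $[t,t']$, and the goal is to show $\Phi$ decreases at rate at least $\mu$. The critical new observation is that the hypothesis $\Psi_{v_0}^{s-1/2}(t)=0$ forces $\phi_{v_0}(t)\le \Psi_{v_0}^{s-1/2}(t)=0$, and since $\phi_{v_0}$ is non-increasing this propagates to $\phi_{v_0}(\tau)\le 0<\Phi(\tau)$ for every $\tau\in[t,t']$. Hence any maximizer $x$ of $\Phi(\tau)$ lies in $V$, and the fast-condition argument from Lemma \ref{lem:catchup} applies verbatim over $H$: for each $\{x,y\}\in E'$, rearranging $\phi_y(\tau)\le \phi_x(\tau)$ yields the second requirement of the fast condition on level $s-1$, and following the first edge of a shortest path from $x$ to $w$ in $G^{s-1/2}$ supplies the first requirement. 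Lemmas \ref{lem:implies} and \ref{lem:implements} then give $dL_x/dt\ge 1+\mu$, so that $d\Phi/d\tau\le -\mu$; integration over $[t,t']$ finishes the proof exactly as before, yielding the second branch of the minimum.

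I expect the only real subtlety to be the bookkeeping check that the fast-condition argument still applies when the neighbor $y$ that enters the first requirement happens to be $v_0$. This is a formality rather than an obstacle: all distances, weights, and potentials are taken in $H$ and $G^{s-1/2}$, and the argument uses $y$ only through the edge $\{x,y\}\in E'$ and the weight $\omega^{s-1/2}(x,y)=(4s-2)\delta_{\{x,y\}}-O_{x,y}$, so it is insensitive to whether $y\in V$ or $y=v_0$. The genuine novelty is the single step that excludes $v_0$ as a maximizer of $\Phi$; everything else is a direct transplant from Lemma \ref{lem:catchup}.
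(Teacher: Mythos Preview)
Your proposal is correct and matches the paper's proof essentially line by line: both reduce to the argument of Lemma~\ref{lem:catchup} on $H$, with the single added observation that $\phi_{v_0}(t)\le \Psi_{v_0}^{s-1/2}(t)=0$ (and hence $\phi_{v_0}(\tau)\le 0$ by monotonicity) rules out $v_0$ as a maximizer of $\Phi(\tau)>0$. The only cosmetic difference is that the paper first disposes of the case $v=v_0$ in one line (using $L_{v_0}(t')-L_{v_0}(t)=\zeta(t'-t)>t'-t$ and $\Psi_{v_0}^{s-1/2}(t)=0$), whereas your uniform treatment absorbs this case as well.
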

\begin{proof}
If $v=v_0$, the claim vacuously holds, since $L_{v_0}(t')-L_{v_0}(t)>t'-t=t'-t+\Psi_{v_0}^s(t)$.
If $v\neq v_0$, the proof is analogous to that of \Cref{lem:catchup}, with the following addition showing that the behavior of $v_0$ does not interfere with the reasoning.
Since $\Psi_{v_0}^{s-1/2}(t)=0$, we have that
\begin{equation*}
\phi_{v_0}(t)=L_w(t)-L_{v_0}(t)-d^{s-1/2}(v_0,w)\le \Psi_{v_0}^{s-1/2}(t)=0.
\end{equation*}
As shown in the proof, this entails that $\phi_{v_0}(\tau)\le 0$ for all $\tau\ge t$.
Therefore, $v_0$ can never maximize $\Phi(\tau)>0$ for any $\tau \ge t$.
\end{proof}
With these modified statements, our machinery yields the following result.
\begin{corollary}\label{cor:virtual_psi}
Suppose that $s_0<s\in \N$, $\sigma:=\mu/(\zeta-1)\ge 2$, $\zeta\ge 2\vartheta-1$, $\varepsilon>0$ is constant, and $T\ge C(\Psi^s(a)+\W^s)/(\zeta-1)$ for a suitable constant $C>0$.
Then for each $s\le s'\in \N$ and time $t \in [a+T/2,a+T]$, it holds that
\begin{align*}
\Psi^{s'}(t)&\le \frac{(2+\varepsilon)\W^s}{\sigma^{s'-s}(\sigma-1)},\\
\G(t) &\le \W^s+\frac{(2+\varepsilon)\W^s}{\sigma-1},\mbox{ and}\\
\Ll_{\{v,w\}}(t) &\in |O_{v,w}|+4\left(s+\log_{\sigma}\left(\frac{\W^s}{\delta_{\{v,w\}}}\right)+\BO(1)\right)\delta_{\{v,w\}}\mbox{ for each }\{v,w\}\in E. 
\end{align*}
\end{corollary}
\begin{proof}
By \Cref{cor:virtual_waitup_v0}, we have that $\Psi_{v_0}^s(t)=0$ for times $t\ge a+\Psi^s(a)/(\zeta-\vartheta)$.
Since $\zeta\ge 1+2(\vartheta-1)$ and $\Psi^s(a)\in \BO(\W^s)$, this implies that $\Psi_{v_0}^s(t)=0$ at times $t\ge a+\BO(\W^s/(\zeta-1))$.
The corollary then is shown analogous to \Cref{sec:algo}, with $\zeta$ replacing $\vartheta$ and \Cref{lem:virtual_waitup,lem:virtual_catchup} taking the role of \Cref{lem:waitup,lem:catchup}, respectively.
\end{proof}

\subsection*{Bounding Stabilization Time}
Conceptually, the machinery from \Cref{sec:stab} carries over without significant change.
However, note that the virtual node $v_0$ taking part in the tree and the flooding would have to be simulated using the actual graph, i.e., all nodes in $R$ would have to communicate with each other, regardless of how far apart they are in $G$.
This has the consequence that $\W^s$, not its equivalent $\W_H^s$ in $H$, appears in the respective statements.
Moreover, we cannot reset the logical clock of $v_0$, as this would equate to changing the reference time.

We now briefly discuss how to handle $v_0$ in the lemmas from \Cref{sec:stab}.
\begin{itemize}
  \item \Cref{lem:bfs} is not affected, since we continue to operate on a tree in $G$.
  \item As the flooding does not include $v_0$, there is no time $t_{v_0}$ and the root obtains no values $o_{v_0,v}(t_{v_0})$, $v\in R$. However, for each $v\in R$ we can set $o_{v_0,v}(t_v):=-o_{v,v_0}(t_v)$, which satisfies the required error bounds on $o_{v_0,v}$. Moreover, we know that $L_{v_0}(t_v)=t_v$ by construction. For any walk in $H$ leaving $v_0$ via edge $\{v_0,v\}$, we then use $t_v$ in lieu of $t_{v_0}$ in the estimation done in the proof of \Cref{lem:est_Psi}. Thus, \Cref{lem:est_Psi} generalizes to $H$.
  \item For \Cref{lem:reset}, the root now computes an assignment satisfying that $\ell_{v_0}=0$, in turn dropping the requirement that $\ell_r=0$. The proof then proceeds analogously, noting that the virtual node $v_0$ does not need to adjust its clock.
\end{itemize}
A generalization of \Cref{thm:stab} to $H$ is now shown analogously, where \Cref{cor:virtual_psi} takes the place of \Cref{cor:psi}.
Since $L_{v_0}(t)=t$, the global skew bound turns into a bound on the real-time skew~$T$.
\begin{theorem}\label{thm:virtual_stab}
Suppose that $R\neq \emptyset$, $\sigma:=\mu/(\zeta-1)\ge 2$, $1+2(\vartheta-1)\le \zeta<1+\mu$, $\varepsilon>0$ is constant, and $T\ge C\W^s/(\zeta-1)$ for a suitable constant $C>0$.
Then a variant of \Cref{alg:gcs_basic} satisfies the following.
There is $\tilde{s}_0\in \N$, $s_0<\tilde{s}_0\in s_0+\BO(1)$, such that for each $\tilde{s}_0\le s\in \N$ and time $t \in [a+T/2,a+T]$, it holds that
\begin{align*}
\Psi^s(t)&\le \frac{(2+\varepsilon)\W^{\tilde{s}_0}_H}{\sigma^{s-\tilde{s}_0}(\sigma-1)},\\
\T(t)\le \G(t) &\le \W_H^{\tilde{s}_0}+\frac{(2+\varepsilon)\W^{\tilde{s}_0}_H}{\sigma-1},\mbox{ and}\\
\Ll_{\{v,w\}}(t) &\in |O_{v,w}|+4\left(\tilde{s}_0+\log_{\sigma}\left(\frac{\W^{\tilde{s}_0}_H}{\delta_{\{v,w\}}}\right)+\BO(1)\right)\delta_{\{v,w\}}\mbox{ for each }\{v,w\}\in E.\\
\end{align*}
The output clocks of the algorithm satisfy $\alpha=1/\zeta$ and $\beta=\vartheta(1+\mu)/\zeta$.
\end{theorem}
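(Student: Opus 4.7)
The plan is to mirror the proof of Theorem~\ref{thm:stab}, replacing $G$ by the virtual graph $H$, Corollary~\ref{cor:psi} by Corollary~\ref{cor:virtual_psi}, and adapting the estimation and reset subroutines so that the virtual reference node $v_0$ is handled correctly. The key observation throughout is that $L_{v_0}(\tau)=\tau$ exactly, so $v_0$'s behavior is entirely predictable, and the bullet points preceding the theorem have already identified the only three places in Section~\ref{sec:stab} where $v_0$ requires attention.

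First I would argue that the snapshot-and-estimate routine of Section~\ref{sec:stab} carries over to $H$. The shortest-path-tree construction of Lemma~\ref{lem:bfs} is unchanged, as it runs entirely within $G$. For the snapshot step, each $v\in R$ records $o_{v,v_0}(t_v)$ from its external-reference subsystem at the same time $t_v$ it records its offsets to neighbors in $G$; we then set $o_{v_0,v}(t_v):=-o_{v,v_0}(t_v)$, which inherits the required approximate-antisymmetry and slow-change bounds from $o_{v,v_0}$. In the bound derived in the proof of Lemma~\ref{lem:est_Psi}, the terms involving $L_{v_i}(t_{v_i})-L_{v_i}(t_{v_{i-1}})$ on $v_0$-incident edges are handled by substituting $t_v$ for $t_{v_0}$: since $L_{v_0}$ is exactly real time, these terms contribute no drift slack. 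The rest of the calculation, combined with Lemma~\ref{lem:Wdelta} applied to $H$, yields an estimate $\psi$ satisfying $\Psi^{\tilde{s}_0}(t_r)\le \psi\le \Psi^{\tilde{s}_0}(t_r)+\BO(\W_H^{\tilde{s}_0})$ for some $s_0<\tilde{s}_0\in s_0+\BO(1)$.

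Next I would adapt the reset of Lemma~\ref{lem:reset}. Since $v_0$ is virtual and represents the external reference, it cannot be shifted, so the root computes an assignment $(\ell_v)_{v\in V}$ with $\ell_{v_0}=0$ such that applying it would bring the potentials on $H$ to $\BO(\W_H^{\tilde{s}_0})$. Such an assignment exists by Corollary~\ref{cor:global_bound}, because $\ell_v:=t_r-L_v(t_r)$ would align every real node with $v_0$; and the root can compute this assignment within error $\BO(\W_H^{\tilde{s}_0})$ from the distances in $G^{\tilde{s}_0+\BO(1)}$ obtained from the snapshot, exactly as in the proof of Lemma~\ref{lem:reset} but with $v_0$ playing the role that $r$ played there. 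Propagating the shifts along the tree takes $\Wd$ time, and the drift computation of Lemma~\ref{lem:reset}, together with Lemma~\ref{lem:Wdelta} on $H$, then establishes $\Psi^{\tilde{s}_0}(t+\Wd)\in \BO(\W_H^{\tilde{s}_0})$.

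With the two subroutines generalized, the main argument is a transcription of the proof of Theorem~\ref{thm:stab}. The root repeatedly initializes fresh instances of the estimation subroutine, waiting $\BO(\vartheta S+\W_H^{\tilde{s}_0}/(\zeta-1))$ between instances to accommodate the stabilization time $S$ of the estimation subroutine itself and the time bound from Corollary~\ref{cor:virtual_psi} for the potentials to fall below the stated thresholds. In the reset branch, the adapted Lemma~\ref{lem:reset} yields $\Psi^{\tilde{s}_0}\in\BO(\W_H^{\tilde{s}_0})$; in the no-reset branch, Corollary~\ref{cor:virtual_psi} guarantees the same bound. Either way, Corollary~\ref{cor:virtual_psi} then delivers the potential, skew, and local-skew bounds for $t\in[a+T/2,a+T]$. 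The real-time skew bound $\T(t)\le \G(t)$ is immediate from $L_{v_0}(t)=t$ and $v_0\in V'$, while the rate bounds $\alpha=1/\zeta$ and $\beta=\vartheta(1+\mu)/\zeta$ come from the final slowdown of all real nodes' output clocks by factor $\zeta$, which is what makes $v_0$ (running at rate $1$) appear at the nominal rate in the simulated algorithm and what replaces $\vartheta$ by $\zeta$ throughout. The main obstacle I expect is the bookkeeping for the $v_0$-incident terms in Lemma~\ref{lem:est_Psi} and Lemma~\ref{lem:reset}; once the substitution $t_v\leftrightarrow t_{v_0}$ and the constraint $\ell_{v_0}=0$ are installed, every error term either cancels (thanks to $L_{v_0}$ being exact) or is absorbed into the existing $\BO(\W_H^{\tilde{s}_0})$ slack.
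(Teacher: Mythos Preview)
Your proposal is correct and follows essentially the same route as the paper: adapt the three subroutines from \Cref{sec:stab} to $H$ exactly via the substitutions $o_{v_0,v}(t_v):=-o_{v,v_0}(t_v)$ with $t_v$ in place of $t_{v_0}$ for the estimation and $\ell_{v_0}=0$ in place of $\ell_r=0$ for the reset, then rerun the argument of \Cref{thm:stab} with \Cref{cor:virtual_psi} replacing \Cref{cor:psi}, reading off $\T(t)\le\G(t)$ from $L_{v_0}(t)=t$ and the rate bounds from the $\zeta$-slowdown. The paper's treatment is no more detailed than yours; your identification of the $v_0$-bookkeeping as the only nontrivial adaptation matches it precisely.
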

\Cref{cor:external} follows from this theorem, analogously to \Cref{cor:uniform}.

\section{Syntonization}\label{sec:syntonization}

So far, our analysis bounded skews in terms of the worst-case frequency error of the hardware clocks, i.e., $\vartheta-1$.
In many cases, this yields needlessly weak bounds, as the frequency stability of the hardware clocks is much better.
To resolve this issue, we propose to first derive intermediate, ``frequency-stabilized'' clocks, whose frequency we tie to a single reference.
As all nodes' hardware clocks, including that of the reference, change frequency slowly, this results in a much better match of hardware clock rates throughout the network.

To this end, we use \emph{phase-locked loops} to lock each node's local oscillator to a common reference, e.g.\ the one of the root $r$ or the real time provided to nodes in $R$ by their external reference(s).
As the physics of oscillators and PLLs are non-trivial, we do not attempt to comprehensively model them here.
Instead, we assume that when providing the PLL with a control input of a given period length $P$ that deviates by factor at most $1\pm \varepsilon$ from the true period of the reference throughout a time interval $T$, there is a value $\nu(P)$ such that the locked PLL guarantees a frequency between that of the reference and $1+\nu(P)+\BO(\varepsilon)$ times that of the reference.
Here, $\nu(P)$ can be thought of the performance of the PLL when its input would be ``accurate'' w.r.t.\ the reference, while a phase error of $\varepsilon P$ in transferring the reference signal should not increase the resulting frequency error by more than $\BO(\varepsilon)$.

Essentially, this can be read as $\varepsilon$ being small enough for the PLL to lock to the reference frequency, i.e., function as intended.
Since the goal is to improve the performance of a free-running oscillator, $\varepsilon<\vartheta-1$ is going to be small enough for this assumption to be valid.

What we seek to show in this section is that we can control $\varepsilon$ as $W^{\tilde{s}_0}/P$.
We then briefly discuss that $P$ can be expected to be large enough for this to improve performance dramatically, i.e., allowing us to replace $\vartheta$ by $1+\BO(\nu(P)+\varepsilon)$, where the reference then becomes the source of the ``real'' time $t$.
The only price to pay is that the stabilization time increases by the time for a PLL with period $P$ of the control signal to lock.

Showing this based on our preceding analysis is straightforward.
Recall that, as a by-product of our approach to stabilizing the GCS algorithm, every $\Theta(S+\W^{\tilde{s}_0}/\mu)$ time the root $r$ of the tree estimates $L_v(t_r)-L_r(t_r)$ for each $v,w\in V$.
\begin{corollary}\label{cor:synt_L}
When computing $\psi$ as in \Cref{cor:est_Psi}, the root $r$ can also estimate $L_v(t_r)-L_w(t_r)$ for each $v,w\in V$ with error $\BO(\W^s)$ for some $s_0<s\in s_0+\BO(1)$.
\end{corollary}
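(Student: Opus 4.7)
The plan is to show that the very same distance computation the root already performs for \Cref{cor:est_Psi} simultaneously yields accurate estimates of all pairwise logical clock differences $L_v(t_r) - L_w(t_r)$. No additional communication is needed: the snapshot data (the values $o_{a,b}(t_a)$ and $\delta_{\{a,b\}}$ for each oriented edge) is already sufficient. Concretely, for the level $s\in s_0+\BO(1)$ fixed in \Cref{cor:est_Psi}, consider the directed graph with edge weights $\bar{\omega}^s(a, b) := 4s\delta_{\{a,b\}} + \omega_{(a,b)}$, where $\omega_{(a,b)} := (o_{a,b}(t_a) - o_{b,a}(t_b))/2$, and denote its pairwise distances by $\bar{d}^s$; these are computable from data the root has already gathered.

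The core calculation is the walk-version of the identity derived in the proof of \Cref{lem:est_Psi}: for any walk $W = v_0, \ldots, v_\ell$ from $v$ to $w$, the same telescoping manipulation yields
\begin{equation*}
\sum_{i=1}^{\ell} \omega_{(v_{i-1}, v_i)} = L_v(t_r) - L_w(t_r) - \sum_{i=1}^{\ell} O_{v_{i-1}, v_i} \pm \left(c+\tfrac{1}{2}\right)\delta(W) \pm 2\Wdelta.
\end{equation*}
Adding $4s\delta(W)$ to both sides and recognizing $\omega^s(W) = 4s\delta(W) - \sum_i O_{v_{i-1},v_i}$ gives $\bar{\omega}^s(W) = \omega^s(W) + L_v(t_r) - L_w(t_r) \pm \BO(\Wdelta + \delta(W))$. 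Minimizing over walks and absorbing the $\delta(W)$ term via a constant upward shift of the level, exactly the device used in \Cref{lem:est_Psi} (where the theoretical level is bumped by $c+1$ to dominate the $(c+\tfrac{1}{2})\delta(W)$ slack), yields
\begin{equation*}
\bar{d}^s(v, w) = d^{s\pm\BO(1)}(v, w) + L_v(t_r) - L_w(t_r) \pm \BO(\W^s),
\end{equation*}
where \Cref{lem:Wdelta} was used to subsume $\Wdelta$ into $\BO(\W^s)$.

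It remains to eliminate the unknown theoretical distance. Since $s\pm \BO(1)>s_0$, the definition of $\W^{s\pm \BO(1)}$ gives $|d^{s\pm\BO(1)}(v,w)|\le \W^{s\pm\BO(1)}\in \BO(\W^s)$, so the root may simply output $\bar{d}^s(v, w)$ itself as its estimate of $L_v(t_r) - L_w(t_r)$ with error $\BO(\W^s)$. A more symmetric alternative is $(\bar{d}^s(v, w) - \bar{d}^s(w, v))/2$, which cancels the theoretical terms against each other, using $|d^{s\pm\BO(1)}(v, w) - d^{s\pm\BO(1)}(w, v)|\le 2\W^{s\pm\BO(1)}$. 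Either choice establishes the corollary. The only delicate point is the absorption of the $\delta(W^*)$ error for the minimizing walk $W^*$, but this is handled precisely as in the proof of \Cref{lem:est_Psi}, so no genuinely new technical difficulty arises.
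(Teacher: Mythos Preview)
Your proposal is correct and follows essentially the same approach as the paper: both observe that the pairwise distances already computed in the proof of \Cref{lem:est_Psi} (your $\bar{d}^s(v,w)$ is just the paper's $d(w,v)$ in the graph $(V,\vec{E},4(s_0+c+2)\delta_e-\omega_e)$, via the antisymmetry $\omega_{(v,w)}=-\omega_{(w,v)}$) equal $L_v(t_r)-L_w(t_r)$ plus a theoretical distance term $d^{s'}(v,w)$ and an $\BO(\Wdelta)$ slack, and then bound both of these by $\BO(\W^s)$ using \Cref{lem:Wdelta} and $|d^{s'}(v,w)|\le \W^s$. The paper simply cites the sandwich inequality from \Cref{lem:est_Psi} directly rather than restating it with new notation, but the argument is the same.
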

\begin{proof}
In the proof of \Cref{lem:est_Psi}, $\psi$ is obtained by taking the maximum over all pairs $(v,w)\in V\times V$ of their negative distances $-d(v,w)$ in $(V,\vec{E},4(s_0+c+2)\delta_e-\omega_e)$, for which it is shown that
\begin{equation*}
L_w(t_r)-L_v(t_r)-d^{s_0+1}(v,w)-2\Wdelta\le -d(v,w)\le L_w(t_r)-L_v(t_r)-d^s(v,w)+2\Wdelta,
\end{equation*}
where $s=s_0+2c+3$.
Since $2\Wdelta\le \W^s$ by \Cref{lem:Wdelta} and $d^{s_0+1}(v,w)\le d^s(v,w)\le \W^s$, the claim follows.
\end{proof}
The same procedure can be utilized to estimate offsets of hardware clocks, with the same error bound.\footnote{A more careful analysis would exploit that for hardware clocks, $\beta-\alpha=\vartheta-1$, yielding a stronger bound. For simplicity, we avoid introducing additional notation.}
\begin{corollary}\label{cor:synt_H}
One can modify the procedure in \Cref{sec:stab}, so that when computing $\psi$ as in \Cref{cor:est_Psi}, the root $r$ can also estimate $H_v(t_r)-H_w(t_r)$ for each $v,w\in V$ with error $\BO(\W^s)$ for some $s_0<s\in s_0+\BO(1)$.
\end{corollary}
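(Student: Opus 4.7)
The plan is to have each node capture one additional local quantity during the snapshot flooding and then reduce estimation of hardware offsets to the already-established estimate of logical offsets from \Cref{cor:synt_L}. Concretely, at the moment $t_v$ when $v$ captures the values $o_{v,w}(t_v)$, the node also records the locally available difference $L_v(t_v) - H_v(t_v)$ and includes it in the convergecast to the root. No new communication primitives are required; this costs at most one additional field per message.

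Given these values, the root's estimate of $H_v(t_r) - H_w(t_r)$ will be
\begin{equation*}
\widehat{H_v(t_r) - H_w(t_r)} \;:=\; \widehat{L_v(t_r) - L_w(t_r)} \;-\; (L_v(t_v) - H_v(t_v)) \;+\; (L_w(t_w) - H_w(t_w)),
\end{equation*}
where $\widehat{L_v(t_r) - L_w(t_r)}$ is the estimate from \Cref{cor:synt_L}, which has error $\BO(\W^s)$ for some $s_0 < s \in s_0 + \BO(1)$. It remains to bound the error introduced by approximating $L_v(t_r) - H_v(t_r)$ by $L_v(t_v) - H_v(t_v)$ (and similarly for $w$).

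For this, the key step is to observe that
\begin{equation*}
(L_v(t_r) - H_v(t_r)) - (L_v(t_v) - H_v(t_v)) = (L_v(t_r) - L_v(t_v)) - (H_v(t_r) - H_v(t_v)),
\end{equation*}
and both differences lie in intervals determined by the rate bounds on $L_v$ and $H_v$, respectively. Using $\alpha \le \frac{dL_v}{dt} \le \beta$ and $1 \le \frac{dH_v}{dt} \le \vartheta$ from \Cref{cor:rates}, the absolute difference is bounded by $\max\{\beta - 1,\, \vartheta - \alpha\} \cdot |t_r - t_v| \le (\beta - \alpha + \vartheta - 1)|t_r - t_v|$. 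Since $|t_r - t_v| \le \Wd$ and the model assumption $c\delta_e \ge (\beta - \alpha)d_e$ gives $(\beta - \alpha)\Wd \in \BO(\Wdelta)$, while $(\vartheta - 1)\Wd$ is similarly absorbed (it is dominated by $(\beta - \alpha)\Wd$ since $\mu \ge \vartheta - 1$), this contribution is $\BO(\Wdelta)$. Applying \Cref{lem:Wdelta} then yields $\BO(\W^s)$.

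Combining the three error contributions (one from \Cref{cor:synt_L} and one each for $v$ and $w$ from the time-shift argument) proves the claim. The only technical point that needs care is the time-shift bound, which is exactly where the standing assumption $c\delta_e \ge (\beta-\alpha)d_e$ pays off, letting us absorb delay-induced errors into $\W^s$ without separately tracking them; this mirrors how $2\Wdelta$ is absorbed inside the proofs of \Cref{lem:est_Psi} and \Cref{cor:synt_L}.
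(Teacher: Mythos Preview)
Your proof is correct but takes a different route from the paper. The paper's proof is a one-liner: it observes that one may run the snapshot procedure with respect to a \emph{trivial} algorithm whose logical clock is simply $H_v$ (i.e., set $L_v=H_v$), and then invoke \Cref{cor:synt_L} verbatim for that algorithm. This is more abstract and avoids any new calculation, but implicitly relies on having offset estimates for hardware-clock differences with the same error guarantees as those for logical clocks.

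Your approach instead piggybacks on the existing logical-clock snapshot: each node additionally reports the locally available value $L_v(t_v)-H_v(t_v)$, and the root corrects the estimate from \Cref{cor:synt_L} by these terms. The remaining error comes from the time shift $t_v\neq t_r$, which you correctly bound via the rate constraints and the standing assumption $c\delta_e\ge(\beta-\alpha)d_e$, yielding $\BO(\Wdelta)\subseteq\BO(\W^s)$ by \Cref{lem:Wdelta}. This is more explicit about implementation (no separate offset-estimation instance for $H$ is needed), at the cost of a short extra computation that mirrors the time-shift handling already present in the proof of \Cref{lem:est_Psi}. Both arguments are valid; yours is slightly more constructive, the paper's slightly slicker.
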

\begin{proof}
Formally, we may run a trivial algorithm that simply outputs $H_v(t)$ when queried at time $t$.
The claim therefore follows from \Cref{cor:synt_L}.
\end{proof}
\begin{proof}[Proof of \Cref{thm:frequency}]
Each node $v\in V$ runs a PLL, whose control input is provided by the offset measurements the root can compute according to \Cref{cor:synt_H}.
Here, the root ensures that the measurements are initiated every $P$ time according to its hardware clock, and the results are distributed to each $v\in V$ via the tree.\footnote{We assume that the PLL design takes into account that the measurement results are available only $\Wd$ time after $t_r$. As $P\ge \Wd$, this has no asymptotic effect.}

When synchronizing externally, $r$ needs to simulate this procedure with $v_0$ as the root, since we cannot (and must not) change the frequency of the external references.
In this case, $r$ does not know $t_{v_0}$ and uses $t_r$ as a proxy.
As $t_r-t_{v_0}\le \Wd$, analogous reasoning to earlier bounds shows that this does not affect the asymptotic guarantees.
\end{proof}

\section*{Acknowledgements}
The author would like to thank the anyonymous reviewers for their valuable feedback.

\bibliographystyle{plain}
\bibliography{bibliography}

\appendix

\section{Discussion of the Model}\label{app:discussion}
The purpose of this appendix is to comprehensively justify the model choices we made.
In many cases, these involve crucial quality-of-live improvements to keep the analysis manageable;
however, we also put particular emphasis on the novel model extensions introduced in this work.

\subsection*{Clocks}
In practice, an algorithm in the kind of systems we model cannot read hardware or output clocks at arbitrary times.
Moreover, modeling clocks at the physical level would necessitate a stochastic description reflecting their short-term noise behavior.
Fortunately, these issues can be largely abstracted away as follows.
First, as we do not prescribe when exactly computational steps are taken and we assume bounds on communication delays only, we can pretend that they are taking place when clocks \emph{are} read.
This provides enough flexibility so that low-level hardware implementations can match the high-level description of algorithms used in this work, yet the assumption of continuous, differentiable clocks is valid.
We remark that the latter is more than mathematically required---one could work with the Lipschitz condition imposed by the rate bounds---but it avoids cumbersome, yet ultimately uninformative technical complications in proving the results.

The use of unbounded, real-valued clocks is primarily a matter of notational convenience, with the understanding that a real system uses bounded, discretely-valued clocks.
As already pointed out, the abstraction of continuous clocks can be justified.
To address that clock values must not grow without bound, a ``wrap-around'' needs to be implemented, i.e., the relevant registers eventually overflow.
Noting that the algorithm's goal must be to ensure some global skew bound $\G$ at all times, using clocks that wrap around after $M\gg \G$ time enables to consistently interpret offset measurements as if the clock values were unbounded;
effectively, the nodes compute and output the hypothetical unbounded output clocks modulo $M$.
It should be stressed that self-stabilization requires making sure that the state is consistent, i.e., clock values are indeed within a sufficiently small range, and the mechanism for offset estimation and recovering a small global skew must not be mutually dependent on each other.
However, this is straightforward by triggering a reset whenever excessive skews are detected, so we chose to avoid the significant notational clutter resulting from including such details.

While the above modeling choices have already been made in previous work, the inclusion of the aspect of their frequency stability is largely novel in theoretical work on distributed clock synchronization.
Khanchandani and Lenzen touch on this issue by assuming a bounded second derivative of $H_v$ in~\cite{khanchandani19precision}.
However, oscillators are physically highly complex, and their timescale-dependent noise characteristics are best modeled as stochastic processes. These behaviors are not understood from first principles and must instead be empirically characterized; see~\cite{rubiola2009phase} for a comprehensive discussion.
As a result, such a modeling approach might not adequately predict the performance of an algorithm.
Instead, in \Cref{sec:syntonization} we fell back to highlighting the issue that it is feasible to separate the concern of obtaining stable local reference clocks from the issue of minimizing clock offsets, essentially reproducing the performance of PLLs in terms of frequency stability.

Nonetheles, we stress that care should be taken to ensure that insufficient frequency stability of (locked) oscillators does not become a performance bottleneck.
PLLs, especially when locking all local oscillators to the \emph{same} frequency reference, will tie them to the performance of the reference only beyond the frequency at which the PLLs operate.
On timescales that are smaller, but still relevant to the algorithm, i.e., larger than the update rate of measurements, the local oscillators' characteristics are crucial.
For example, back-of-the-envelope calculations for a VLSI system suggest that the frequency stability of cheap quartz oscillators might \emph{just} be enough when locking at the empirically determined sweet spot of around 10--100\,kHz.
However, depending on the numbers, it might as well be advisable to slightly improve the characteristics for better results.
Similarly, when synchronizing base stations wirelessly, the challenges imposed by radio communication might result in large communication delays and hence push this transition point further out, which in turn requires more stable oscillators to attain the feasible performance.

\subsection*{Offset Estimates}
The choice to disallow parallel edges has been made to avoid notational complications.
The same techniques and arguments can be applied when allowing parallel edges.
However, from the perspective of a worst-case analysis, there is very limited gain, as an accurate parallel edge cannot ``make up'' for a highly inaccurate edge beyond cutting the error roughly in half.
Moreover, such an approach could instead be modeled by creating a single, ``synthesized'' offset estimate out of the potentially multiple methods at the disposal of designers.

Next, we note that we assumed that only minimal knowledge about upper bounds on estimate errors is available, i.e., just enough to choose $T$ sufficiently large for the constraint that $T\ge C\W^s/\mu$ or $T\ge C\W^s/(\zeta-1)$ to be met, respectively.
One could hope that having more information could enable stronger results.
However, in light of the example given in the introduction, this seems questionable in scenarios where $\delta(T)\ll \Delta$, which is what we consider in this work.
Moreover, there is a strong incentive not to rely on known upper bounds:
as a result, up to the dependence on $\delta(T)$, the skew bounds we obtain depend on the current ``true'' errors only, i.e., the performance of the algorithm closely tracks the performance of the estimates at any given time.
This is likely to result in much better bounds than plugging in a worst-case bound that holds across all times and deployed systems.

Recall that we required that offset estimation errors are antisymmetric up to variations bounded by $\delta(T)$, within a time window of size $T$.
This is easily realized from measurements without this property by having $v$ and $w$ exchange their estimates and average, for example.
This comes at the ``cost'' of an additional message delay, but any functional offset estimation method will involve communication in the first place, so we do not view this as a downside.
Nonetheless, one could drop this requirement, instead falling back to treating the forward and backward direction of an edge as closing a cycle.
This is perfectly fine in terms of the analysis, but would result in a pointless performance degradation.
That is illustrated best by considering trees:
here, this approach artificially introduces negative-weight cycles, degrading the shown result relative to being able to work with $s=0$.
Instead, we insisted on avoiding this issue, both for increased clarity when comparing to prior work on GCS and to emphasize that we obtain results that strictly improve on state-of-the-art tree based methods that are deployed in practice.

Another major point is that we do not answer the question how clock offset estimates are obtained.
This has two key reasons:
first, the point is that the approach works with \emph{any} suitable method, and hence can be seen as reducing the synchronization task to obtaining the best possible local clock offset estimates;
second, a comprehensive treatment is entirely impractical within the scope of this work.
To illustrate the latter point, let us list a few scenarios that all require different techniques:
\begin{compactitem}
\item The most standard scenario considered would be a static, wired network. Here, one typically measures using a two-way exchange, cancelling out any symmetric variation in the delays.
\item If nodes are mobile with relatively large speeds, speeds and acceleration information should be considered as well.
\item In wireless settings, one might need long measurements using carefully constructed radio codes, to enable accurate measurements despite signal strengths below the noise floor. This then involves additional challenges, e.g.\ that one needs to correct for the frequency error between the oscillators driving the sender's and the receiver's radio.
\item With very high speeds or when using slow-moving signals such as sound (e.g.\ in underwater networks), one might need to compensate for the Doppler effect.
\item Kuhn and Oshman~\cite{kuhn09reference} propose to employ reference broadcasts, where the same transmission is time-stamped by multiple receivers to obtain accurate offset measurements.\footnote{They argue that arrival times of the signal are essentially the same due to the large speed of light. For nanosecond accuracy, even in networks covering a very small area one would have to include distance information, however: a radio signal will travel by no more than about $30$\,cm in a nanosecond.}
\end{compactitem}
This list is by no means exhaustive, but should illustrate the point made above.
Without claiming completeness, we refer the interested reader to relevant work covering the above points~\cite{ptp,etzlinger18synchronization,ni21underwater}.
However, we would like to stress that the frequency error of the local oscillators---in practice after locking them to a frequency reference as described in \Cref{sec:syntonization}---needs to be considered as a crucial source of error in the offset measurements.
The best estimation method will be useful only up to the point where the time for obtaining such a measurement becomes so large that the error accumulated due to the frequency differences between the output clocks starts to dominate.
Note that this might crucially affect the choice of $\mu$, as it is the frequency error of the output clocks that is decisive.
Thus, our work carries the additional advantage that we can choose $\mu$ small due to the reduced impact of the logarithmic term on the local skew, i.e., it becomes less relevant to choose $\mu$ large in order to ensure large $\sigma$.

Last but not least, we would like to point out that we deliberately left out assumptions on probabilistic behavior of delays, clocks, etc.
First, this is simply a matter of tractability, as such assumptions would significantly complicate analysis and, possibly, the algorithmic strategies required, putting it firmly out of scope of the main question driving this work.
Second, much of such behavior can---and arguably \emph{should}---be contained to how clock offset measurements are obtained and maintained, as well as determining the best strategies for providing hardware clocks $H_v$ of small frequency error.
Nonetheless, it remains an important and open challenge to study the average case performance of GCS algorithms and whether it can be improved without sacrificing worst-case guarantees in relevant application scenarios.

\subsection*{Communication}

To simplify the obtained bounds and streamline the presentation, we assumed that estimates are ``good enough'' relative to the delays involved in explicit communication.
In general, this is not a given.
For instance, there might be an accurate measurement routine that is provided, but additional communication---be it due to bandwidth concerns or system layer separation---might use a different mechanism or be less privileged.
However, little insight can be conveyed by generalizing the presented bounds using additional parameters covering this, and it should be straightforward to estimate the effect on performance by examining where the respective assumption is used.
In this regard, it might be more important to ensure that the technique we propose for self-stabilization is refined to avoid large communication bandwidth requirements.
Here, we consider a Bellman-Ford style approach to aggregating the required information to be very promising, but leave exploring this to future work.

\subsection*{The Effect of Requiring Convergence within $T$ Time}
While not an effect of modeling choices in the strict sense, we opted for analyzing the GCS algorithm within a time window of duration $T$ only, deriving general guarantees by splitting the time axis into overlapping time segments of duration $T$.
This reduces the notational burden and simplifies formal statements, but comes at the disadvantage of necessitating that $T\ge C\W^s/\mu$ or $T\ge C\W^s/(\zeta-1)$, respectively.

At closer inspection, one could argue that potentials decrease sufficiently within $T$ time such that their possible increase caused by adjusting the $O_{v,w}$ values when moving to a new time segment is canceling out at most half of the progress made.
This would be good enough to guarantee the same asymptotic bounds on stabilization time, while replacing the $\W^s$ factor in the constraint by the amount by which $\W^s$ \emph{changes} within $T$ time.
However, we could not clearly identify a scenario in which this offers substantial gain:
once $\delta(T)$ becomes small enough for the logarithmic term to not dominate the bounds, reducing $\delta(T)$ further makes the algorithm less robust (as a stronger guarantee must be met), yet offers diminishing returns.
In other words, it is not crucial to minimize $T$, but rather sufficient to reduce it to the point where the contributions to skews arising from slower effects become dominant.
In light of this, we refrained from attempting to formalize the above insight.

\end{document}